\pdfoutput=1
\documentclass[12pt]{article}

\usepackage{amsmath}
\usepackage{graphicx}
\usepackage{enumerate}
\usepackage{natbib}
\usepackage{comment, setspace, float}
\usepackage{url} %
\RequirePackage{amsthm,amsmath,amsfonts,amssymb,bbm}
\RequirePackage{graphicx}
\usepackage[ruled]{algorithm2e}
\usepackage{multirow}

\theoremstyle{plain}

\newtheorem{theorem}{Theorem}[section]
\newtheorem{lemma}[theorem]{Lemma}
\theoremstyle{remark}

\newtheorem{corollary}[theorem]{Corollary}

\allowdisplaybreaks
\usepackage{mathtools}
\usepackage{caption}
\usepackage{subcaption}

\DeclareMathOperator{\Unif}{Unif}

\DeclareMathOperator{\tr}{tr}

\newcommand\bx{\mathbf{x}}
\newcommand{\s}{s}

\newcommand{\m}{\mid}
\newcommand{\abs}[1]{\left\vert #1 \right\vert}

\newcommand \bbE{\mathbb{E}}

\newcommand \bbR{\mathbb{R}}

\newcommand \bbC{\mathbb{C}}

\newcommand{\lvertiii}{{\vert\kern-0.25ex\vert\kern-0.25ex\vert}}    
\newcommand{\rvertiii}{{\vert\kern-0.25ex\vert\kern-0.25ex\vert}}

\renewcommand{\tr}{\mathrm{tr}}

\newcommand{\N}{N}
\newcommand{\K}{K}

\newcommand{\mc}[1]{\mathcal #1}

\newcommand{\set}[1]{\left\{#1\right\}}
\newcommand{\norm}[1]{\left\|#1\right\|_{2}}

\newcommand{\p}[1]{\left(#1\right)}

\newcommand{\be}{\begin{equs}}
\newcommand{\ee}{\end{equs}}

\newcommand{\bigO}{\mathcal O}

\def\T{{ \mathrm{\scriptscriptstyle T} }}
\def \bX{\mathbf{X}}
\def \œmb{\mathbb}
\def \mc{\mathcal}
\def \mrm{\mathrm}

\usepackage[margin = 0.7in]{geometry}

\newcommand{\cN}{\mathcal{N}}          %
\newcommand{\cY}{\mathcal{Y}}          %
\newcommand{\cD}{\mathcal{D}}          %
\newcommand{\cL}{\mathcal{L}}          %
\newcommand{\iter}{\text{iter}}       %
  
\newcommand{\lpart}{\psi}              %
\newcommand{\Hel}{\mathrm{H}}          %
\newcommand{\dTheta}{d_{\Theta}}       %
\newcommand{\ourmodel}{ExFR-LGP}   %
\newcommand{\ourgmodel}{Ex2FR-LGP}   %

\newcommand{\blind}{0}

\usepackage{titlesec}
\usepackage{hyperref}
\usepackage{xcolor} 
\usepackage{authblk}

\begin{document}

\def\spacingset#1{\renewcommand{\baselinestretch}%
{#1}\small\normalsize} \spacingset{1}

\if0\blind
{
  \title{\bf Finite-Rank Logistic Gaussian Processes with Exact Likelihood for Conditional Density Estimation}
  \small
  \author[1]{Jaehoan Kim\thanks{Equal contribution}}
  \author[2]{Indrajit Ghosh\protect\footnotemark[1]}
  \author[3]{Debdeep Pati}
  \author[4]{Dipankar Bandyopadhyay\thanks{Data used in preparation of this article were obtained from the Alzheimer’s Disease Neuroimaging Initiative (ADNI) database (\url{https://adni.loni.usc.edu/}). %
    The authors acknowledge funding support from grants R21DE031879 and R01DE031134 from the United States National Institutes of Health. \\ Correspondence E-mail: \texttt{dbandyop@vcu.edu}}}
  \affil[1]{Department of Statistical Science, Duke University, Durham, NC}
  \affil[2]{Global Statistical Sciences, Eli Lilly and Company, Indianapolis, IN}
  \affil[3]{Department of Statistics, University of Wisconsin-Madison, Madison, WI}
  \affil[4]{Department of Biostatistics, Virginia Commonwealth University, Richmond, VA}
\maketitle
}\fi
\if1\blind
{
  \bigskip
  \bigskip
  \bigskip

\begin{center}
    {\LARGE\bf Finite-Rank Logistic Gaussian Processes with Exact Likelihood for Conditional Density Estimation}
\end{center}
  \medskip
} \fi

\bigskip
\begin{abstract}
Conditional density estimation describes how the entire distribution of a response changes with covariates, and in imaging studies also with location. Logistic Gaussian processes (LGP) give a flexible prior for such densities. However, the normalizing constant of an LGP has no closed form, so existing methods approximate or replace the likelihood. We propose the exact likelihood finite-rank LGP (\ourmodel{}), which writes the log density as the sum of two bivariate functions, one of the response and a location-varying linear index of the covariates, and one of the response and the location. Each function is assigned a finite-rank Gaussian process prior that is piecewise linear on a regular grid, enabling the normalizing constant, the conditional mean and the quantiles to enjoy closed form representations. Posterior samples are drawn from the exact posterior under this prior via a Gibbs sampler that updates the Gaussian components by elliptical slice sampling. When the true log density is the sum of two such bivariate functions, we show that the posterior contracts at the minimax rate of an $\alpha$-smooth bivariate density up to a logarithmic factor. Numerical experiments using synthetic data demonstrate the effectiveness of \ourmodel{} in conditional density estimation. Furthermore, application to fractional anisotropy responses along the corpus callosum in the Alzheimer's Disease Neuroimaging Initiative data yields covariate-adjusted percentile bands with uncertainty, elucidating how diagnosis changes the distribution along the tract.
\end{abstract}

\noindent%
{\it Keywords:} Logistic Gaussian process, Posterior contraction, Single index model, Diffusion tensor imaging
\vfill

\newpage
\spacingset{1.9} %
\section{Introduction}\label{sec: introduction}

In diverse fields such as biology \citep{beaumont2002approximate}, cosmology \citep{rau2015accurate}, ecology \citep{wood2010statistical}, and economics \citep{kalda2013nonparametric}, there is a pressing need for modeling the distribution of a dependent variable given the covariates, rather than merely modeling conditional mean or variance. Such problems are referred to as conditional density estimation (CDE), where given a response-covariate pair $(y,\, \bx) \in\, \bbR\, \times\, \bbR^p$, one aims to find the density function of $y$ conditioned on $\bx$, denoted by $f(y \m \bx)$.

As a motivating example, we consider the data from the Alzheimer’s Disease Neuroimaging Initiative (ADNI) study \citep{jack2008alzheimer}. For each individual, a summary measure called the Fractional Anisotropy (FA) measuring the brain connectivity \citep{ozarslan2005generalized} is calculated at varying locations across the skeleton of the midsagittal corpus callosum, along with different features including age, sex, and handedness. As demonstrated in Figures \ref{fig:box_res} and \ref{fig:matplot_res}, the FA measurements vary not only with respect to individuals, but also according to measurement locations. Based on the FA measurements, clinically relevant questions include where along the corpus callosum an individual’s FA values fall outside a covariate-adjusted reference percentile band, and how extensive these departures are. Therefore, our main interest is to model an individual's FA measurement through the conditional density $f(y\mid\bx,\s)$ of the FA value $y$ given the covariates $\bx$ and the location $\s$ along the tract. The fitted density should support uncertainty quantification for clinically relevant summaries, such as covariate-adjusted percentile curves, and should also indicate how multiple covariates and location alter the FA distribution.

 \begin{figure}[t!]
    \centering
\begin{subfigure}{0.50\textwidth}
  \centering
  \includegraphics[scale= 0.215]{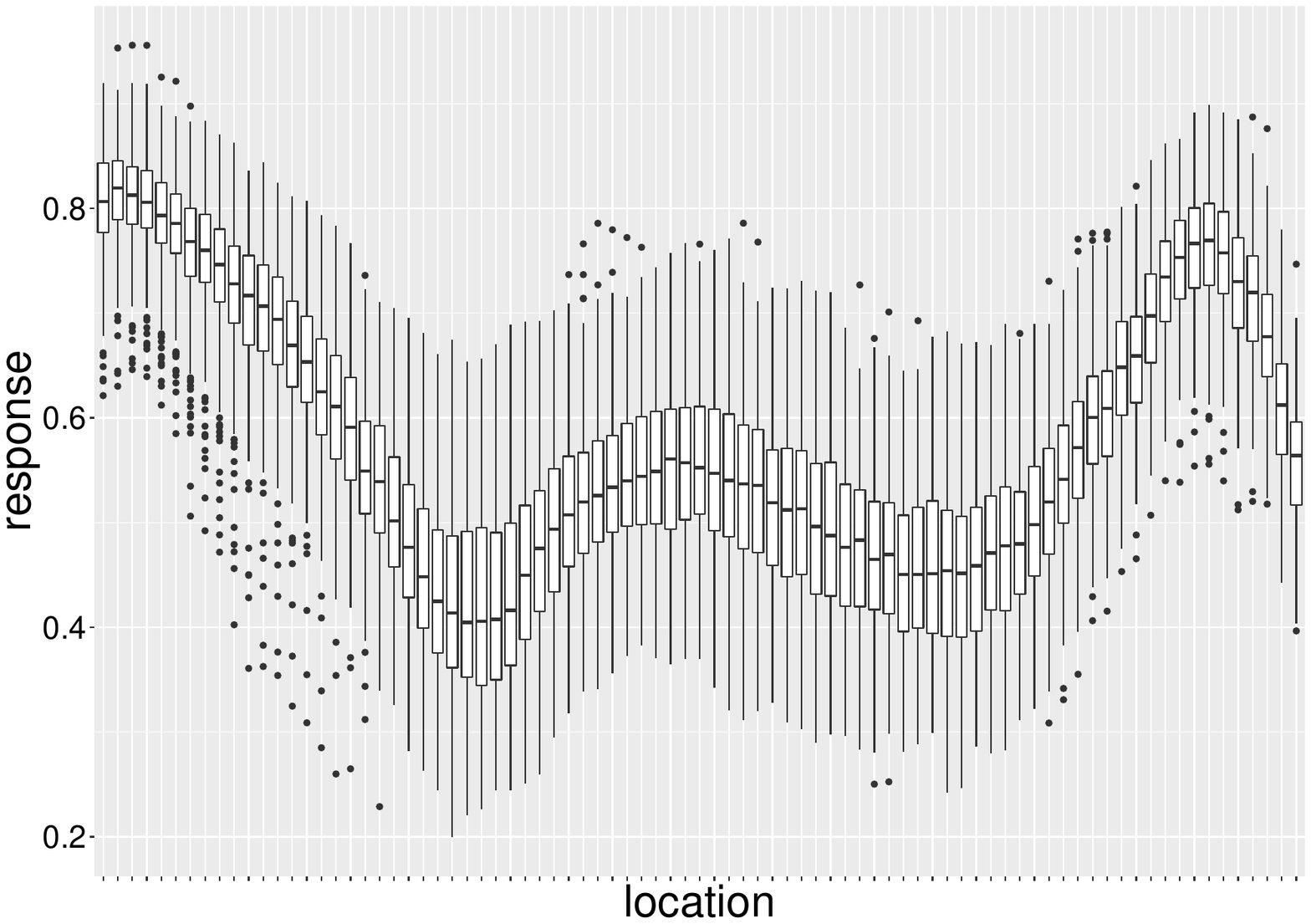}
  \caption{}
  \label{fig:box_res}
\end{subfigure}%
\begin{subfigure}{0.50\textwidth}
  \centering
  \includegraphics[scale= 0.215]{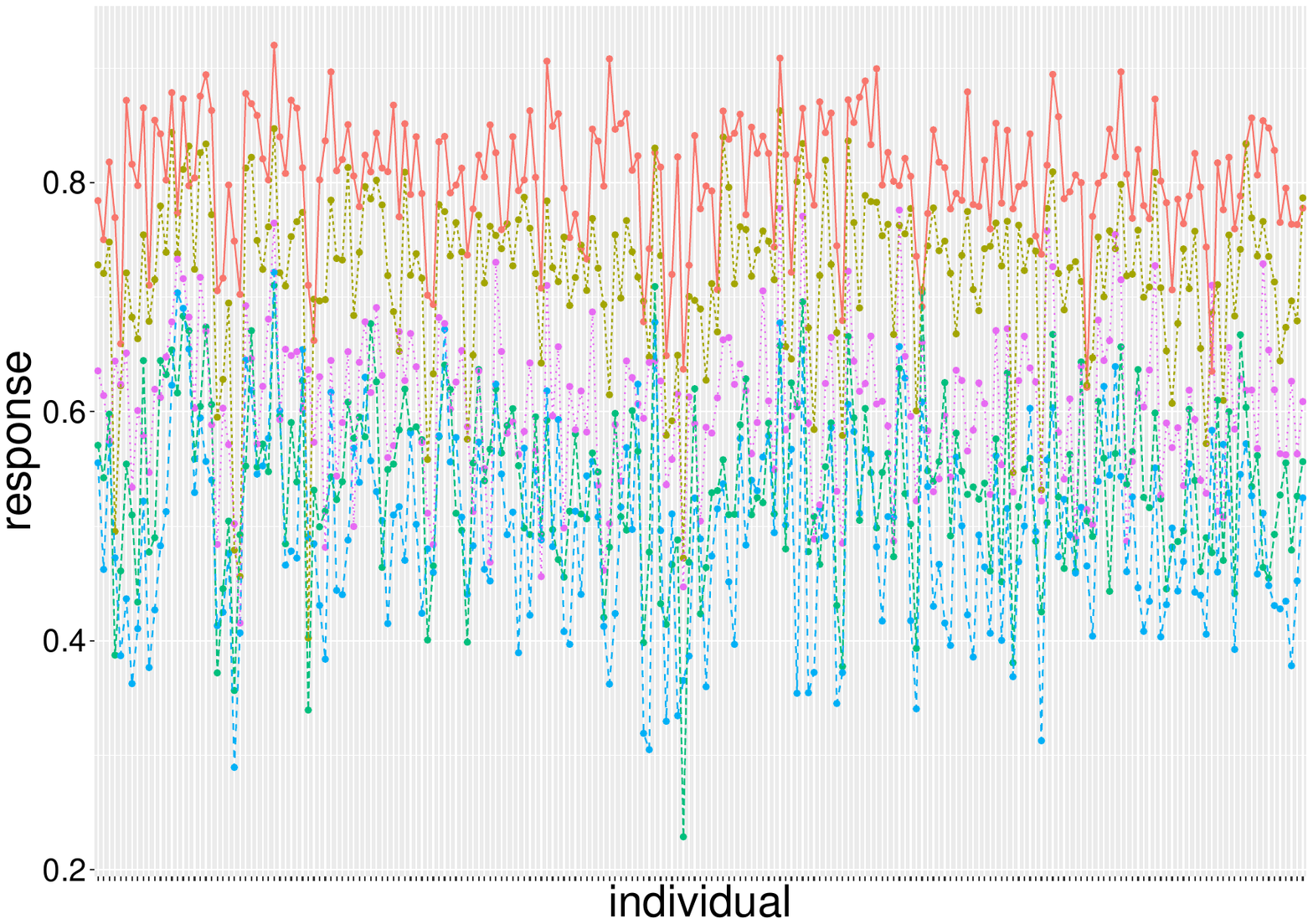}
  \caption{}
  \label{fig:matplot_res}
\end{subfigure}%
\caption{Exploratory analysis of the ADNI data. The left panel (Figure 1a) depicts the boxplots of the responses over $83$ different locations, while the right panel (Figure 1b) plots the FA responses for $213$ individuals from $5$ different locations (marked by varying colors).}
\label{fig:EDA}
\end{figure}

Quantile regression targets such percentile bands directly, and \cite{zhang2021high} fit the quantiles of FA along the same corpus callosum skeleton in ADNI \citep[see also][]{reich2011bayesian}. Each quantile is then linear in the covariates, and other distributional properties such as density, modes, and tail probabilities must be derived from the fitted quantile curves. To address this, one possible approach is parametric distributional regression \citep{rigby2005generalized, klein2015bayesian}. However, as it fixes a distributional family, the shape of the density including its number of modes is limited by that family, which might inhibit flexible density modeling.

CDE addresses this problem by nonparametrically modeling the density function so that the relevant properties such as tail probabilities and modes can be obtained directly from the density estimate. CDE has a well-established frequentist literature, including kernel \citep{chen2017tutorial} and local polynomial \citep{fan1996estimation} methods, alongside a growing body of Bayesian approaches \citep{mcdonald2021review}. One Bayesian approach models the joint density of $(y,\bx)$ and obtains the conditional density from it. This approach requires modeling the covariate distribution, including the four binary covariates in our application (sex, handedness, AD status, and MCI status). Moreover, when several covariates are included, clustering may be driven primarily by $\bx$ rather than by $y$ \citep{wade2014improving}. The other route models $f(y\mid\bx)$ directly, mostly as a mixture, whose weights or atoms depend on the covariates \citep{dunson2007bayesian, griffin2006order, dunson2008kernel}, with spatial versions \citep{gelfand2005bayesian, reich2007multivariate} and, for responses on $[0,1]$, dependent Bernstein polynomial priors \citep{petrone1999random, barrientos2017fully}. Mixtures are designed for heterogeneity that arises from latent subpopulations. In the FA data, the main sources of heterogeneity are observed covariates and the density changes gradually along the tract (Figure \ref{fig:box_res}). In a mixture, such a change is spread over the weights and atoms, so that the effect of a single covariate on it does not correspond to a parameter of the model.

For this purpose, the logistic Gaussian process \citep[LGP;][]{leonard1978density, lenk1988logistic} has been widely studied. It writes the density as $\exp\set{Z(y)}$ divided by its integral over the response space, where $Z$ is modeled as a Gaussian process (GP). Since its exponential integration does not have a closed form, existing LGP methods largely fall under two classes. The first discretizes the GP and computes the integral numerically. It tracks the process at a finite set of nodes \citep{tokdar2007towards, tokdar2010bayesian}, bins the data with a Laplace approximation \citep{riihimaki2014laplace, payne2020conditional}, or represents the GP by random Fourier features \citep{gautier2025continuous} or by piecewise multilinear finite elements on a uniform grid \citep{hegland2007approximate, griebel2010finite}. For example, \cite{jacobs2014nonparametric} discretizes an LGP on a Riemannian manifold. The second class avoids the normalizing constant. \cite{adams2009gaussian} pass the GP through a bounded link and treat the rejected proposals of a rejection sampler as latent variables, with extensions to Poisson process intensities \citep{adams2009tractable} and P\'{o}lya-Gamma augmentation \citep{donner2018efficient, wang2024nonparametric}. Others replace the likelihood by a score matching loss \citep{chen2026logistic, paisley2025gaussian}.

Such LGP methods, however, face two limitations in our modeling context. The first class computes the integral numerically even after discretizing the GP, so the resulting likelihood is an approximation. The second class changes the link and adds latent points whose number grows with $n$, or replaces the likelihood. In each case, moments and quantiles of the fitted density need numerical integration. Prediction-wise, for estimating a conditional density given a $p$-dimensional covariate vector $\bx$, the minimax rate for an $\alpha$-smooth conditional density is $n^{-\alpha/(2\alpha+p+1)}$, which deteriorates quickly as $p$ grows \citep{tsybakov2004introduction}. \cite{tokdar2010bayesian} reduce dimensionality by projecting $\bx$ onto a low-dimensional subspace, yielding faster posterior contraction when the true conditional density depends on $\bx$ only through such a (low-rank) projection \citep{tokdar2011dimension}, while \cite{odin2026contraction} extend such rates for regression and density estimation to dimensions that grow with $n$. Such projection LGP methods, however, still approximate the normalizing constant.

In this paper, we propose the exact likelihood finite-rank LGP (\ourmodel{}), a finite-rank LGP model whose likelihood is exact. We model the conditional density using two bivariate functions: $g_1$ captures covariate effects through a linear index with location-dependent coefficients, while $g_2$ captures additional variation across locations:
\begin{align}
    f(y\mid \bx,\s) = \frac{\exp\set{g_1(y,\bx^\T\beta(\s)) + g_2(y,\s)}}{\int_{\cY} \exp\set{g_1(t,\bx^\T\beta(\s)) + g_2(t,\s)}dt},
\end{align}
where $g_1$ and $g_2$ are assigned bivariate GP priors, and $\beta(\s)$ is a $p$-dimensional vector of coefficients varying with location $\s$, with $\cY=[0,1]$ in our setting. Similar to single index models \citep{li1991sliced} and generalized additive models \citep{hastie1986generalized}, each component is easy to read. We represent $g_1$ and $g_2$ by linear interpolation of their values on a regular grid \citep{maatouk2017gaussian}, the tensor-product finite elements of \cite{griebel2010finite}, which here define the prior rather than approximate the posterior mode. The interpolated process is a finite-rank approximation of the parent GP; so the model is a finite-rank LGP, and for fixed $(\bx,\s)$ the exponent is piecewise linear in $y$. Hence, the normalizing constant, the mean and the quantiles are available in closed form (Supplement \ref{ssec: normalizing constant}, \ref{ssec: mean} and \ref{ssec: quantiles}), and the sampler targets the exact posterior under this prior. To our knowledge, this is the \textit{first} finite-rank LGP model for conditional density estimation with a closed-form likelihood, requiring neither response binning nor numerical quadrature and avoiding the use of a surrogate loss. Since $g_1$ and $g_2$ are bivariate, each grid stays two dimensional for any $p$. When the true log density is the sum of two such bivariate functions, we show that the posterior contracts at the rate $n^{-\alpha/(2\alpha+2)}$ up to a log factor, which is the minimax rate for an $\alpha$-smooth bivariate density and the rate of \cite{tokdar2011dimension} under a rank one projection. Since the grid values have a Gaussian prior induced by the GP, we sample them by the elliptical slice sampler \citep[ESS;][]{murray2010elliptical}, which requires no tuning parameters. Finally, we investigate the performance of \ourmodel{} in simulations and on the ADNI data. 

The paper is organized as follows. Section~\ref{sec: methodology} defines \ourmodel{} and its prior, discusses its identifiability and alternative models. Section~\ref{sec: posterior contraction} presents the posterior contraction rate, and Section~\ref{sec: posterior computation} derives the sampler. While Section~\ref{sec: simulation} reports finite-sample characteristics of our proposal via simulation studies, Section~\ref{sec: real data} illustrates \ourmodel{} via application to the ADNI data, presenting covariate-adjusted percentile curves along the corpus callosum with uncertainty. Finally, Section~\ref{sec: conclusion} concludes. The Supplement contains the closed form calculations, the proofs and the additional details of the numerical experiments.

\section{Methodology}\label{sec: methodology}

Let $\mc{S}=\{\s_1,\ldots,\s_\K\}$ be $K$ equidistant locations on $[0, 1]$, satisfying $\s_k=(k-1)/(\K-1)$. For individual $i=1,\ldots,n$, we observe the covariate vector $\bx_i\in\bbR^p$, scaled so that $\max_i\|\bx_i\|_2=1$, and the response $y_{i\s}\in[0,1]$ at each location $\s\in\mc{S}$. 
Given $\bx_i$, we assume that the responses at varying locations are independent, with
\begin{equation}\label{eqn: conditional density model}
  y_{i\s}\mid\bx_i,\theta \overset{ind}{\sim} f_\theta(\cdot\mid\bx_i,\s).
\end{equation}

\subsection{Model specification}\label{ssec: framework}

We construct the ExFR-LGP model by specifying the conditional log density as
\begin{equation}\label{eqn: model}
  \log f_\theta(y\mid\bx,\s) = g_1(y,w_{\bx,\s}) + g_2(y,\s) - \lpart_\theta(\bx,\s),
\end{equation}
where $\lpart_\theta(\bx,\s) =\log\int_0^1\exp\{g_1(t,w_{\bx,\s})+g_2(t,\s)\}\,dt$ is the normalizing constant. Here, $w_{\bx,\s}$ denotes the rescaled index
\begin{equation}
  w_{\bx,\s} = c((\bx^\T \beta(\s) + 1)/2), \qquad c(z) = \min(\max(z, 0), 1).
\end{equation}
The function $c$ clamps the range of $w_{\bx,\s}$ to $[0, 1]$ so that $g_1$ is defined on $[0, 1]^2$. The component $g_1$ captures the effect of $\bx$ through the location varying index $\bx^\T\beta(\s)$, and $g_2$ captures the density variation that does not depend on the covariates. 

For $g_1$ and $g_2$, we adopt the following basis representation:
\begin{align}\label{eqn: basis expansion g}
    g_l(z_1,z_2) = \sum_{i,j=0}^{\N_l}\xi^{(l)}_{ij}\,\phi^{(l)}_i(z_1)\,\phi^{(l)}_j(z_2), \qquad \xi^{(l)}=\{\xi^{(l)}_{ij}\}\in\bbR^{(\N_l+1)^2} \qquad (l=1,2).
\end{align}
The hat basis function $\phi^{(l)}_i(z)$ is defined as
\begin{equation}\label{eqn: hat basis}
  \phi^{(l)}_i(z) = \max(1 - |\N_lz - i|, 0).
\end{equation}
The resulting $g_l$ is a 2D piecewise linear function with nodal values satisfying $g_l(i/\N_l, j/\N_l)=\xi^{(l)}_{ij}$. Therefore, as suggested in \cite{maatouk2017gaussian}, taking $\xi^{(l)}_{ij}$ as the nodal values of a continuous GP defined on $[0, 1]^2$ leads to $g_l$ being a linear interpolation of a parent GP. This representation has been widely used for purposes such as computational scalability and shape constraints \citep{maatouk2017gaussian, griebel2010finite, ray2020efficient,kim2025adaptive,kim2026isotonic}. We use a parent GP with mean zero and covariance kernel $\sigma_l^2K^{(l)}$, with $K^{(l)}$ a correlation kernel, and this leads to $\xi^{(l)}\sim\cN(0,\sigma_l^2\Gamma^{(l)})$ with $\Gamma^{(l)}$ the correlation matrix of $K^{(l)}$ evaluated at knots, and we assign $\sigma_l^2\sim\mathrm{IG}(a_\sigma,b_\sigma)$.  The resulting $g_1$ and $g_2$ are finite-rank approximations of the parent GPs. Together with the log-density formulation in \eqref{eqn: model}, these representations define the ExFR-LGP model.

As an input of $g_1$, the rescaled index $w_{\bx, \s}$ is obtained from index coefficients over the locations in $\mc{S}$. For $\beta=\{\beta^\T(\s_1),\ldots,\beta^\T(\s_\K)\}^\T\in\bbR^{p\K}$, we assign a Gaussian prior $\beta\sim\cN(0,\Sigma)$ with 
\begin{equation}\label{eqn: covariance of beta}
    \Sigma = \sigma^2_{\beta}\, \Gamma_\beta \otimes \mathrm{I}_{p}, \qquad \Gamma_{\beta, kk'} = K_\beta(\s_k,\s_{k'}),
\end{equation}
where $K_\beta$ is a stationary positive definite kernel on $[0,1]$. Namely, we assign independent and identical GP prior over the locations for each covariate.

Under such structure, the exponent of $f_\theta$ is piecewise linear in $y$. Hence, its integral, the conditional mean and the quantiles are available in closed form (see Supplement \ref{sec: closed form} for details). Unlike the covariate dependent mixtures in Section~\ref{sec: introduction}, the smoothness of the log density in the response, the index and the location is stated directly as the GP priors on $g_1$ and $g_2$, and each effect enters through its own component. Multimodal functions can be modeled as well, thanks to the piecewise linear structure in \eqref{eqn: basis expansion g}. It is worth noting that under this finite-rank GP prior on $g_l$ induced from piecewise linear interpolation, the likelihood and relevant quantities are calculated in a closed form, without any further numerical approximation.

In this model, the parameter $\theta$ itself is not identifiable. For instance, as $g_1$ and $g_2$ share the basis functions, replacing $\big(\xi^{(1)}_{ij},\xi^{(2)}_{ij}\big)$ by $\big(\xi^{(1)}_{ij}+a_i,\xi^{(2)}_{ij}-a_i\big)$ for any $a_i$ leaves $f_\theta$ unchanged.
 Also, since $c(1-z)=1-c(z)$ and $\phi^{(1)}_j(1-z)=\phi^{(1)}_{\N_1-j}(z)$, replacing $\big(\beta,\xi^{(1)}_{ij}\big)$ by $\big(-\beta,\xi^{(1)}_{i,\N_1-j}\big)$ leaves $f_\theta$ unchanged.
Therefore, we summarize the effect of the covariates through $f_\theta$ (see Section~\ref{sec: real data}).

\subsection{Related and alternative models}\label{ssec: alternative models}

Several existing models share parts of this construction. The index is a one-dimensional version of the subspace projection LGP (spLGP) of \cite{tokdar2010bayesian}, which models $f(y\mid\bx)$ as a function of $(B^\T\bx,y)$ for a $p\times d$ matrix $B$ and selects $d$ from the data. Here, $B$ is common to all $\bx$ and the model has no location component. There is a literature utilizing finite representations of the GP used in LGP, such as orthogonal series \citep{verdinelli1998bayesian}. However, its integral does not enjoy a closed form. 

The representation \eqref{eqn: basis expansion g} is closest to that of \cite{griebel2010finite}, who compute the posterior mode of an LGP density by a Galerkin approximation in piecewise multilinear functions on a uniform grid, evaluate its integrals by Gauss quadrature and bound the discretization error. \ourmodel{} uses the same elements in two dimensions to define the prior, such that the likelihood is exact, enabling sampling of the whole posterior. 

\cite{gautier2025continuous} and \cite{chen2026logistic} allow the density to vary with location by taking the location as an input of the GP, and fit it by numerical integration and by a score matching loss, respectively. In Sections~\ref{sec: simulation} and \ref{sec: real data}, we also compare \ourmodel{} with a full grid LGP, which keeps the finite-rank construction of \ourmodel{} but replaces $g_1(y,w_{\bx,\s})+g_2(y,\s)$ in \eqref{eqn: model} by a single function $g(y,\bx,\s)$ and represents it by linear interpolation on a tensor grid in $(y,\bx,\s)$, which is the discretization of \cite{griebel2010finite} applied to $(y,\bx,\s)$. We also compare it with the spLGP of \cite{tokdar2010bayesian} and the logit stick-breaking mixture of \cite{rigon2021tractable}.

\section{Theoretical Properties}\label{sec: posterior contraction}

Under the model \eqref{eqn: model} with the priors in Section~\ref{sec: methodology}, we derive the posterior contraction rate, which quantifies how fast the posterior concentrates around the true density. We build on the general results of \cite{ghosal2000convergence,ghosal2007convergence,ghosal2007posterior,van2008rates}.

We start with some notation. For $\alpha>0$, let $\alpha_0$ be the largest integer strictly smaller than $\alpha$, and for $k=(k_1,k_2)$ let $D^k=\partial^{|k|}/\partial z_1^{k_1}\partial z_2^{k_2}$ with $|k|=k_1+k_2$. We denote by $\bbC^\alpha(\mc{Z})$ the $\alpha$-H\"older space on a closed rectangle $\mc{Z}\subset\bbR^2$, the set of functions $g:\mc{Z}\to\bbR$ whose partial derivatives $D^k g$ with $|k|\le\alpha_0$ exist and are bounded, and whose derivatives of order $\alpha_0$ satisfy $|D^k g(z)-D^k g(z')|\le L\|z-z'\|_2^{\alpha-\alpha_0}$ for some $L>0$ and all $z,z'\in\mc{Z}$, and write $\bbC^\alpha[0,1]^2$ for $\mc{Z}=[0,1]^2$.
Define the parameter space $\Theta = \set{\theta = (\xi^{(1)},\xi^{(2)},\beta) : \xi^{(1)},\xi^{(2)}\in\bbR^{(\N+1)^2}, \beta\in \bbR^{p\K}}$ and 
let $\mc{F} = \set{f_\theta: \theta\in \Theta}$ be the collection of conditional densities $f_\theta(\cdot\mid\bx,\s)$ $(\|\bx\|_2 \le 1, \s \in \mc{S})$ in \eqref{eqn: model}. A prior on $\Theta$ defined in Section \ref{sec: methodology} induces a prior on $\mc{F}$, which we refer to as $\Pi$. We write $\|g\|_{\infty} = \sup_{(z_1,z_2)\in[0,1]^2}\abs{g(z_1,z_2)}$ for $g:[0,1]^2\to \bbR$ and $\|v\|_{\infty} = \max_j\abs{v_j}$ for a vector $v$. Finally, let $f_0(y \mid \bx, \s)$ denote the true conditional density, and denote the data $\cD_n=\set{(\bx_i, y_{i\s}): 1\leq i\leq n, \s\in \mc{S}}$. For $z,z'\in[0,1]^d$ and $r=\|z-z'\|_2$, the RBF kernel with inverse length-scale $\kappa>0$ is $K^{\mrm{RBF}}_\kappa(z,z')=\exp(-\kappa^2r^2)$, and the Mat\'{e}rn kernel with smoothness $\nu>0$ and length-scale $\ell>0$ is $K^{\mrm{M}}_{\nu,\ell}(z,z')=\{2^{1-\nu}/\Gamma(\nu)\}(\sqrt{2\nu}\,r/\ell)^\nu\,\mrm{B}_\nu(\sqrt{2\nu}\,r/\ell)$, where $\Gamma(\cdot)$ is the gamma function and $\mrm{B}_\nu$ is the modified Bessel function of the second kind.

We first define the Hellinger distance between two conditional densities $f_1$ and $f_2$ as:
\begin{align}
    d^2_{h}(f_1,f_2) &= \frac{1}{\K}\sum_{\s\in\mc{S}}\int\p{\sqrt{f_1(y\m \bx, \s)} - \sqrt{f_2(y\m \bx, \s)}}^2 dy \,d \mc{H}_0(\bx),\label{eqn: hellinger}
\end{align}
with $\mc{H}_0$ denoting true distribution of $\bx$. This is the Hellinger distance between the joint distributions of $(y,\bx,\s)$ when $\bx\sim\mc{H}_0$ and $\s$ is uniform on $\mc{S}$. Without locations, it is the distance used for conditional densities \citep{shen2016adaptive, norets2017adaptive}.

We then consider the posterior contraction rate with respect to Hellinger distance $d_h$. If the following holds for some constant $C>0$:
\begin{align}\label{eqn: posterior contraction}
    \lim_{n \to \infty}\bbE_{f_0}\Pi(f_{\theta}\in\mc{F}:\,d_h(f_\theta,f_0)\geq C\varepsilon_n\mid \cD_n) = 0,
\end{align}
$\varepsilon_n$ is called a posterior contraction rate. This implies that as $n$ increases, the posterior distribution of $f$ is concentrated around a radius $C \varepsilon_n$ ball with respect to $d_h$.

Posterior contraction rates are well developed for Bayesian density estimation \citep{ghosal2000convergence, shen2001rates, shen2013adaptive}. For conditional densities, most results give consistency \citep{tokdar2010bayesian, pati2013posterior, norets2014posterior}. Rates are given by \cite{shen2016adaptive} and \cite{norets2017adaptive}, whose exponents depend on the number of relevant covariates, and by \cite{jiang2007bayesian} for generalized linear models. Rates under a single index are available for regression \citep{dhara2020new, roy2020high}, and \cite{tokdar2011dimension} gives the rate $n^{-\alpha/(2\alpha+d_0+1)}$ for spLGP under a rank $d_0$ projection with $\alpha>1$, for a squared exponential GP rather than the node approximation used in computation. \cite{odin2026contraction} let the dimensions grow with $n$ for regression and density estimation.

In this section, we assume that the observations $\{\bx_i, (y_{i\s})_{\s\in\mc{S}}\}$, $i=1,\ldots,n$ are obtained as follows. First, $\{\bx_i \}$ are independently and identically sampled from the distribution $\mc{H}_0$ supported on $\{\bx:\|\bx\|_2\leq 1\}$. Given each $\bx_i$, $(y_{i\s})_{\s\in\mc{S}}$ are independently drawn from the density $f_0(\cdot\mid\bx_i,\s)$. The numbers $p$ and $\K$ are fixed. Assume that the true density function is represented as $f_{0}(y\mid \bx,\,\s) = \Upsilon_0(y,\bx,\s)/\int_\cY \Upsilon_0(y,\bx,\s)dy$ with some $\Upsilon_0$. Then the following holds.

\begin{theorem}\label{thm:post_cont}
Suppose that $\log\Upsilon_0(y, \bx, \s) = g_1^{(0)}(y, \bx^\T\beta^{(0)}(\s)) + g_2^{(0)} (y, \s)$, where $g^{(0)}_1$ and $g^{(0)}_2$ are $\alpha$-H\"older for some $\alpha>0$. Assume that $K^{(l)}=K^{\mrm{RBF}}_{\kappa_l}$ with $\sigma_l^2$ fixed $(l=1,2)$ and $\N_1=\N_2=\N\asymp n$, and that $\kappa_1,\kappa_2$ are i.i.d.\ with a density $\pi_\kappa$ on $[1,\infty)$ satisfying
\begin{equation}
  C_1\kappa^{p_\kappa}\exp(-D_1\kappa^2\log^4\kappa)\le\pi_\kappa(\kappa)\le C_2\kappa^{p_\kappa}\exp(-D_2\kappa^2\log^4\kappa)
\end{equation}
for large $\kappa$ with constants $C_1,C_2,D_1,D_2>0$ and $p_\kappa\ge0$. Then \eqref{eqn: posterior contraction} holds for \ourmodel{} in \eqref{eqn: model} with
\begin{align}\label{eqn: rate}
\varepsilon_n = n^{-\frac{\alpha}{2\alpha + 2}} (\log n)^{\frac{2\alpha}{\alpha+1}}.
\end{align}
\end{theorem}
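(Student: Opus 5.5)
We will verify the three conditions of the general theorem of \cite{ghosal2000convergence,ghosal2007convergence} for i.i.d.\ observations $\{\bx_i,(y_{i\s})_{\s\in\mc{S}}\}$. The conditions are prior mass, a sieve with controlled entropy, and a small remaining mass. We take $\bar\varepsilon_n=n^{-\alpha/(2\alpha+2)}(\log n)^{t_1}$ for prior mass and $\varepsilon_n$ as in \eqref{eqn: rate} for entropy. The basic tool is the standard LGP comparison lemma: if two exponents $h,h'$ satisfy $\|h-h'\|_\infty\le\delta$, then $d_h(f_h,f_{h'})\lesssim\delta e^{\delta/2}$, and the Kullback--Leibler divergence and second-moment KL are $\lesssim\delta^2 e^{\delta}(1+\delta)^2$ \citep{van2008rates}. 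This lemma reduces all three conditions to sup-norm statements about $h_\theta(y,\bx,\s)=g_1(y,w_{\bx,\s})+g_2(y,\s)$. Since $p$ and $\K$ are fixed, $\beta$ is a finite-dimensional Gaussian.

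\textbf{Prior mass.} We first rescale the truth. We can absorb the affine map $z\mapsto c((z+1)/2)$ into $g_1^{(0)}$, because $|\bx^\T\beta^{(0)}(\s)|$ is bounded and we may rescale $\beta^{(0)}$ so that the clamp is inactive. This gives a H\"older $\tilde g_1^{(0)}$ on $[0,1]^2$, which we extend if necessary.

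Next we control $h_\theta-h_0$. We have
\[
|g_1(y,w_{\bx,\s})-\tilde g_1^{(0)}(y,w^{(0)}_{\bx,\s})|\le\|g_1-\tilde g_1^{(0)}\|_\infty+\|\tilde g_1^{(0)}\|_{\mathrm{Lip}}\,\|\beta-\beta^{(0)}\|,
\]
where the Lipschitz factor is $\min(\alpha,1)$-H\"older for $\alpha<1$. A symmetric bound holds for $g_2$. The $\beta$ term has prior mass $\gtrsim\bar\varepsilon_n^{pK/\min(\alpha,1)}$, which is polynomial and therefore harmless.

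For $g_l$, note that $\xi^{(l)}$ consists of nodal values of the RBF GP $W^{(l)}$, and $g_l=I_N W^{(l)}$ is its piecewise linear interpolant. Hence
\[
\|I_N W-\tilde g^{(0)}\|_\infty\le\|W-\tilde g^{(0)}\|_\infty+\|I_N\tilde g^{(0)}-\tilde g^{(0)}\|_\infty,
\]
using that $I_N$ is a contraction in sup norm. The second term is $\lesssim N^{-\min(\alpha,2)}\ll\bar\varepsilon_n$ because $N\asymp n$. (For $\alpha>2$ we use instead that $I_N$ is applied only to a smooth RKHS element, whose interpolation error is $\lesssim N^{-2}$.) For the first term we invoke the rescaled squared-exponential small-ball bound of \cite{van2009adaptive}. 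The condition on $\pi_\kappa$ is exactly the one that yields $-\log\Pi(\|W-g\|_\infty<\bar\varepsilon_n)\lesssim n\bar\varepsilon_n^2$ with effective dimension $2$. This is the step that produces the bivariate rate.

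\textbf{Sieve and entropy.} Let $\mc{B}_n=\{\|\beta\|_\infty\le M_n\}$ with $M_n$ polynomial in $n$. For $g_l$, take the Borell-type sieve for the rescaled GP, namely $M\sqrt{\kappa_n}\,\mathbb{H}_1+\varepsilon_n\mathbb{B}_1$ unioned over $\kappa\le\kappa_n$, and push it through $I_N$. Because $I_N$ is $1$-Lipschitz in sup norm, the covering numbers do not increase. The log-covering bound $\kappa_n^2\log^{3}(1/\varepsilon_n)$ from \cite{van2009adaptive} gives $n\varepsilon_n^2$, and this accounts for the larger log power in $\varepsilon_n$.

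Covering in $\beta$ is the delicate point, because $w_{\bx,\s}$ enters $g_1$ and we need Lipschitz control of $g_1$ in its second argument. On the sieve, the RKHS part of $g_1$ has derivative $\lesssim\kappa_n M\sqrt{\kappa_n}$ (polynomial), and $I_N$ preserves Lipschitz constants. The $\varepsilon_n\mathbb{B}_1$ part contributes only $2\varepsilon_n$ regardless. Therefore a $\beta$-net with mesh $\varepsilon_n/\mathrm{poly}(n)$ suffices, and it costs $pK\log n$.

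\textbf{Remaining mass and main obstacle.} The complement mass is $\Pi(\kappa>\kappa_n)$ plus the Borell tail plus the Gaussian tail of $\beta$, and each is $\le e^{-Cn\bar\varepsilon_n^2}$. The tail assumption on $\pi_\kappa$ is what makes the first term small. The main obstacle is the interplay between the interpolation operator and the rescaled-GP machinery. Small-ball and entropy estimates must transfer from $W$ to $I_N W$, and the $\beta$ perturbation must be controlled by a Lipschitz bound that is uniform over the sieve.

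If $\alpha>2$ makes interpolation too crude, we would restrict the result to the rate attainable, or argue via the RKHS approximant $h$ with $\|I_N h-h\|_\infty\lesssim N^{-2}\|h\|_{C^2}$. Here $\|h\|_{C^2}\lesssim\kappa^2\|h\|_{\mathbb{H}}$, which is polynomial and negligible since $N\asymp n$.
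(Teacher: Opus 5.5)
Your proposal is correct in outline and shares the paper's overall scheme: the Ghosal--Ghosh--van der Vaart conditions, reduction to sup-norm bounds on the exponent via Lemma 3.1 of van der Vaart and van Zanten, a Borell-type sieve for the rescaled RBF process transferred through the nodal map, and a Gaussian tail for $\beta$. Your prior-mass step, however, takes a genuinely different and simpler route.

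\textbf{Prior mass.} The paper uses
\[
|v_\theta-v_0|\le 2N\|\xi^{(1)}\|_\infty\|\beta-\beta^{(0)}\|_2+\sum_l\big(\|g_l-G_l\|_\infty+\|G_l-g_l^{(0)}\|_\infty\big).
\]
It must therefore control the interpolation error of the \emph{random} process, $\|g_l-G_l\|_\infty$. That is its interpolation lemma, built on the modulus of continuity of $G_l$, Dudley's integral and Borell--TIS. Requiring the resulting tail $2\exp\{-N^2\varepsilon_n^2/(800\kappa^2)\}$ to be dominated by the small-ball probability is what forces condition (P3), roughly $N\gtrsim n^{(\alpha+2)/(2\alpha+2)}$.

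You instead use linearity and the sup-norm contraction of the interpolation operator $I_N$:
\[
\|I_NG_l-g_l^{(0)}\|_\infty\le\|G_l-g_l^{(0)}\|_\infty+\|I_Ng_l^{(0)}-g_l^{(0)}\|_\infty .
\]
Only the deterministic interpolation error of the truth then enters. You also handle the index perturbation through the H\"older modulus of $g_1^{(0)}$ rather than the $N$-dependent Lipschitz constant of the interpolant. This removes the interpolation lemma and (P3) altogether: it suffices that $N^{-\min(\alpha,1)}=o(\varepsilon_n)$, which $N\asymp n$ satisfies for every $\alpha>0$, and much slower growth would do.

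\textbf{Entropy.} Here the trade-off reverses. The paper's bound $|g_\xi(y,z)-g_\xi(y,z')|\le 4N\|\xi\|_\infty|z-z'|$ is more elementary than your combination of $\|\nabla h\|_\infty\lesssim\kappa\|h\|_{\mathbb{H}^\kappa}$ with Lipschitz preservation under $I_N$. Yours, however, is free of $N$, whereas the paper needs $\log N=O(\log n)$ at that point.

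\textbf{Two corrections to your bookkeeping.}
\begin{enumerate}
\item The concern about $\alpha>2$ is unfounded. For $\alpha\ge 1$ the truth is Lipschitz, so $\|I_Ng^{(0)}-g^{(0)}\|_\infty\lesssim N^{-1}$. This is already $o(\varepsilon_n)$ because $\alpha/(2\alpha+2)<1/2$, so no RKHS approximant or restricted rate is needed.
\item A single rate suffices, and the log power $2\alpha/(\alpha+1)$ comes from the prior mass of the bandwidth, not from the entropy. With $\kappa_n\asymp\varepsilon_n^{-1/\alpha}$, the binding term is $\Pi(\kappa\in[\kappa_n,2\kappa_n])\ge\exp\{-c\,\kappa_n^2(\log n)^4\}$, and $\kappa_n^2(\log n)^4\asymp n\varepsilon_n^2$. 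Both the small-ball exponent and the log-covering number are only of order $\kappa_n^2(\log n)^3=o(n\varepsilon_n^2)$. So $\bar\varepsilon_n$ must carry the full power $t_1=2\alpha/(\alpha+1)$, and then the entropy step is automatically satisfied at the same rate.
\end{enumerate}
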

\begin{proof}
 Refer to Supplement \ref{sec: proof of theorem} for the proof of the theorem.
\end{proof}

Since $\N$ and $\pi_\kappa$ do not depend on the unknown smoothness $\alpha$, the posterior contraction rate in Theorem \ref{thm:post_cont} is an adaptive result. The rate $n^{-\alpha / (2\alpha + 2)}$ given in Theorem \ref{thm:post_cont} is the minimax rate for an $\alpha$-smooth bivariate function, up to a logarithmic factor \citep{tsybakov2004introduction}. The proof reduces the problem to the bivariate functions $g_1$ and $g_2$ and applies the results for GPs with RBF kernels \citep{van2008rates,van2009adaptive}. Corollary~\ref{cor:growing_p} in Supplement~\ref{ssec: growing p} shows that the same rate holds when $p$ grows with $n$, as long as $p\K\log n=o(n\varepsilon_n^2)$ with $\varepsilon_n$ given in \eqref{eqn: rate}, when the covariates and the true index are bounded.

\section{Posterior computation}\label{sec: posterior computation}

Next, we present a posterior computation framework for the model \eqref{eqn: model}. For modeling flexibility, we first introduce an extension of \ourmodel{} that models potential deviations from \eqref{eqn: model}. We then outline a Gibbs sampler for both models, updating the parameters with a Gaussian prior by ESS.

\subsection{Extended Exact Likelihood Finite-Rank LGP}\label{ssec: generalized model}

To model $f$ that deviates from the form \eqref{eqn: model}, we consider the extended exact likelihood finite-rank LGP (\ourgmodel{}),
\begin{equation}\label{eqn: generalized model}
  \log f_\theta(y\mid\bx,\s) = g_1(y,w_{\bx,\s}) + g_2(y,\s) + \delta(y,\bx,\s) - \lpart_\theta(\bx,\s),
\end{equation}
where $\delta$ models this deviation and $\lpart_\theta$ is the log-normalizing constant of the new exponent. Since $\delta$ depends on $\bx$ directly rather than through the index, it can capture departures from both the single index and the additive form of \eqref{eqn: model}. In our implementation, $\delta(y,\bx,\s)=\sum_{j=0}^{\N_\delta}\phi^{\delta}_j(y)\,h_j(\bx,\s)$, where $\phi^{\delta}_j$ is the hat basis \eqref{eqn: hat basis} with $\N_\delta$ in place of $\N_l$. The functions $h_0,\ldots,h_{\N_\delta}$ are assigned a zero mean GP prior with $\mathrm{cov}\{h_j(\bx,\s),h_{j'}(\bx',\s')\}=\sigma_\delta^2K_y(j/\N_\delta,j'/\N_\delta)\,K_{\bx}(\bx,\bx')\,K_{\s}(\s,\s')$, where $K_y$, $K_{\bx}$ and $K_{\s}$ are Mat\'{e}rn correlation kernels, and $\sigma_\delta^2\sim\mathrm{IG}(a_\delta,b_\delta)$. We set $(a_\delta,b_\delta)$ to keep $\sigma_\delta^2$ and the discrepancy $\delta$ small, unless it is strongly suggested by the data. The same priors from Section~\ref{ssec: framework} are adopted on $g_1$, $g_2$ and $\beta$. With $\delta\equiv0$, this reduces to \ourmodel{}. The exponent in \eqref{eqn: generalized model} remains piecewise linear in $y$, so the closed forms of the quantities such as integration, quantiles and moments given in Supplement \ref{sec: closed form} can be calculated in a similar manner.

We then present a posterior sampler for \ourgmodel{}. Such a sampler can naturally be used for \ourmodel{} by simply ignoring the sampling steps involving $\delta$.

\subsection{Gibbs sampler}\label{ssec: sampler}

We use the same grid for $g_1$ and $g_2$, $\N_1=\N_2=\N$, and write $\phi_i=\phi^{(l)}_i$. For the GP kernels $K^{(l)}$ of $g_l$, from which the priors on $\xi^{(1)}$ and $\xi^{(2)}$ are constructed, we use Mat\'{e}rn kernels $K^{\mrm{M}}_{\nu,\ell}$ defined in Section~\ref{sec: posterior contraction}, since the covariance matrix of the RBF kernel in Theorem~\ref{thm:post_cont} is nearly singular on a fine grid \citep{schaback1995error,wendland2004scattered}. We also fix $\N$ for computational convenience.
Here, we fix the smoothness and the length-scale of the Mat\'{e}rn kernels at the values given in Section~\ref{sec: simulation}. Fixing the smoothness is common practice, with standard choices $\nu=3/2$ and $5/2$. Larger values are hard to distinguish from one another with finite noisy data \citep{rasmussen2006gaussian}. The length-scale is set so that the prior correlation between the two farthest grid points of $g_l$ is about $0.001$, which gives a weakly informative prior. Under a Mat\'{e}rn kernel, the variance $\sigma^2$ and the length-scale cannot be consistently estimated separately, and predictions depend mainly on their ratio $\sigma^2/\ell^{2\nu}$ \citep{zhang2004inconsistent}, so we fix $\ell$ and learn $\sigma_l^2$. Since only $\beta(\s)/\|\beta(\s)\|_2$ enters the likelihood as described below, $\sigma^2_\beta$ has no effect and we set it to one.

Therefore, the model parameters to sample are $\theta=(\xi^{(1)},\xi^{(2)},\beta,\sigma_1^2,\sigma_2^2)$, and the Gibbs sampler alternates between $\beta$, $(\xi^{(1)},\xi^{(2)})$ and $(\sigma_1^2,\sigma_2^2)$. For each block, $\theta_{-}$ denotes the other parameters at their current values. Under \eqref{eqn: model}, the log-likelihood is
{\small
\begin{align}\label{eqn: log likelihood}
        \log\cL(\theta)
    &= \sum_{\s\in\mc{S}} \sum^n_{i=1} \Bigg[\bigg\{\sum^\N_{j,k=0} \xi^{(1)}_{jk}\,\phi_j(y_{i\s})\phi_k(w_{\bx_i,\s})\bigg\}+ \bigg\{\sum^\N_{j,k=0} \xi^{(2)}_{jk}\,\phi_j(y_{i\s})\,\phi_k(\s)\bigg\} - \lpart_{\theta}\p{{\bx_i,\s}}\Bigg].
\end{align}
}
Since $\xi^{(1)}$ and $\xi^{(2)}$ are independent with Gaussian priors, their conditional posterior is
\begin{equation}
  \pi(\xi^{(1)},\xi^{(2)} \mid \theta_{-}, \cD_n) \propto \cN(\xi^{(1)};0,\sigma_1^2\Gamma^{(1)})\,\cN(\xi^{(2)};0,\sigma_2^2\Gamma^{(2)})\,\cL(\theta),
\end{equation}
a Gaussian prior times a likelihood, so we update $(\xi^{(1)},\xi^{(2)})$ jointly by one ESS step \citep{murray2010elliptical}. While separate update of $\xi^{(1)}$ and $\xi^{(2)}$ in a Gibbs sampler is also valid, we sample them jointly as $g_1$ and $g_2$ are correlated a posteriori since they share the response $y$. Likewise, $\pi(\beta\mid\theta_{-},\cD_n)\propto\cN(\beta;0,\Sigma)\,\cL(\theta)$, and $\beta$ is also updated by ESS. The inverse gamma prior of $\sigma_l^2$ is conjugate,
\begin{equation}\label{eqn: variance conditional}
  \sigma_l^2 \mid \theta_{-}, \cD_n \sim \mathrm{IG}\big\{a_\sigma+(\N+1)^2/2,\; b_\sigma+\xi^{(l)\T}(\Gamma^{(l)})^{-1}\xi^{(l)}/2\big\} \qquad (l=1,2),
\end{equation}
so $\sigma_l^2$ is drawn after each ESS step of $(\xi^{(1)},\xi^{(2)})$.

The scales of $\beta(\s)$ and $g_1$ are not separately identifiable, since a rescaling of $\beta(\s)$ can be offset by rescaling the second argument of $g_1$. To improve identifiability, we apply to the sampled $\beta(\s)$ an on-the-fly normalization to unit $L_2$ norm, widely used in spatial statistics \citep{rue2005gaussian, mak2016regional}. Namely, we obtain the posterior sample of $\beta$ from ESS updates as an unconstrained vector, and use $\beta(\s)/\|\beta(\s)\|_2$ in the likelihood and in all summaries. Since $\|\bx\|_2\leq 1$, the index $w_{\bx,\s}$ then always lies in $[0,1]$ without clamping, so the grid of $g_1$ covers every value of the index.

Since $\nu$ and $\ell$ are fixed, the Cholesky factors of $\Gamma^{(1)}$, $\Gamma^{(2)}$ and $\Sigma$ are computed once and only rescaled by $\sigma_l$, after which each prior draw of $\xi^{(1)}$ or $\xi^{(2)}$ and each draw from \eqref{eqn: variance conditional} costs $\bigO({M}^2)$ with ${M}=(\N+1)^2$. Each ESS step also evaluates \eqref{eqn: log likelihood}, usually several times. For each $(i,\s)$, at most two of the $\phi_k(w_{\bx_i,\s})$ are nonzero; so, with $\lpart_\theta$ in closed form (Supplement \ref{ssec: normalizing constant}), one evaluation costs $\bigO\{n\K(\N+p)\}$. By the Kronecker structure of $\Sigma$, a prior draw of $\beta$ costs $\bigO(p\K^2)$. Since the grids of $g_1$ and $g_2$ do not depend on $p$, the covariates enter only through the index and $\beta$, such that the cost per iteration grows linearly in $p$ rather than exponentially as for a grid over the covariates.

For \ourgmodel{}, the likelihood depends on $h_j$ only at the observed pairs $(\bx_i,\s)$. So, we sample the latent values $H=\{h_j(\bx_i,\s)\}\in\bbR^{(\N_\delta+1)\times n\times\K}$, with $\mathrm{vec}(H)\sim\cN(0,\sigma_\delta^2\,\Gamma_{\s}\otimes\Gamma_{\bx}\otimes\Gamma_{y})$, where $\Gamma_{y}$, $\Gamma_{\bx}$ and $\Gamma_{\s}$ are the correlation matrices of $K_y$, $K_{\bx}$ and $K_{\s}$ over the knots $j/\N_\delta$, the covariate vectors $\bx_1,\ldots,\bx_n$ and the locations in $\mc{S}$. Each term of \eqref{eqn: log likelihood} gains $\sum_{j=0}^{\N_\delta}h_j(\bx_i,\s)\,\phi^{\delta}_j(y_{i\s})$, and $\lpart_\theta$ is computed from the exponent in \eqref{eqn: generalized model}. Since $H$ has a Gaussian prior, it is updated by one ESS step after $(\xi^{(1)},\xi^{(2)})$, and the inverse gamma prior of $\sigma_\delta^2$ is conjugate,
\begin{equation}\label{eqn: delta variance conditional}
  \sigma_\delta^2 \mid \theta_{-}, \cD_n \sim \mathrm{IG}\big\{a_\delta+(\N_\delta+1)n\K/2,\; b_\delta+\mathrm{vec}(H)^\T(\Gamma_{\s}\otimes\Gamma_{\bx}\otimes\Gamma_{y})^{-1}\mathrm{vec}(H)/2\big\}.
\end{equation}
By the Kronecker structure, only the Cholesky factors of $\Gamma_{y}$, $\Gamma_{\bx}$ and $\Gamma_{\s}$ are needed, and each prior draw of $H$ and the quadratic form in \eqref{eqn: delta variance conditional} cost $\bigO\{(\N_\delta+1)n\K(\N_\delta+n+\K)\}$. At a new covariate vector $\bx$, $h_j(\bx,\s)$ is drawn from its Gaussian conditional distribution given $H$. Algorithm \ref{alg: sampler} outlines the Gibbs sampler, where \texttt{ex2.flg} indicates whether \ourgmodel{} is used instead of \ourmodel{}.

\begin{algorithm}[!t]
\caption{Sampling from the posterior of $\theta$}\label{alg: sampler}
\SetAlgoLined
\SetArgSty{textup}
\textbf{Input:} data $\cD_n$, number of samples $T_{\iter}$, initial value $\theta_0$ and model flag \texttt{ex2.flg}, with $H=0$ in $\theta_0$\\
 \For{$t=1,\ldots,T_{\iter}$}{
 Update $\beta_t$ from $\pi(\beta\mid\theta_{-},\cD_n)$ by ESS, using $\beta_t(\s)/\|\beta_t(\s)\|_2$ in $w_{\bx,\s}$\;
 Update $(\xi^{(1)}_t,\xi^{(2)}_t)$ from $\pi(\xi^{(1)},\xi^{(2)}\mid\theta_{-},\cD_n)$ by one ESS step\;
 Draw $\sigma^2_{l,t}$ from $\pi(\sigma_l^2\mid\theta_{-},\cD_n)$ in \eqref{eqn: variance conditional} $(l=1,2)$\;
 \If{\texttt{ex2.flg} $=$ \texttt{true}}{
 Update $H_t$ from $\pi(H\mid\theta_{-},\cD_n)$ by one ESS step\;
 Draw $\sigma^2_{\delta,t}$ from $\pi(\sigma_\delta^2\mid\theta_{-},\cD_n)$ in \eqref{eqn: delta variance conditional}\;
 }
 }
\textbf{Result:} $\theta_t=(\xi^{(1)}_t,\xi^{(2)}_t,\beta_t,\sigma^2_{1,t},\sigma^2_{2,t})$, with $(H_t,\sigma^2_{\delta,t})$ added for \ourgmodel{}, $t=1,\ldots,T_{\iter}$.
\end{algorithm}

\section{Simulation studies}\label{sec: simulation}

We now investigate the finite-sample performance of \ourmodel{} in numerical experiments, using synthetic data. We fix $p=5$ and take $\beta_1(\s) = \s^2+1,\,\beta_2(\s) = (1-\s)^2,\,\beta_3(\s) = 4\s(1-\s),\beta_4(\s) =  2(\s-0.75)^2-1,\, \beta_5(\s) =  1.5 + 4\sin(\s-0.5)$, where $\s$ denotes the normalized location on $[0,1]$, and rescale $\beta(\s)$ to unit norm at each location. Each entry of the design matrix $\bX$ is generated from $\Unif(-5,5)$, centered columnwise, and normalized by $\max_i \norm{\bx_i}$. Since $\|\beta(\s)\|_2=1$ and $\|\bx\|_2\le1$, $w_{\bx,\s}=\{\bx^\T\beta(\s)+1\}/2$ lies in $[0,1]$. Given $(\bx,\s)$, the response is generated from one of the following two densities.
\begin{itemize}
    \item Case 1 (unimodal)
    \begin{equation}\label{eqn: unimodal}
        (y\mid \bx, \s) \sim \mrm{Beta}[\exp(2w_{\bx,\s}+1),(18-8\s)].
    \end{equation}
    \item Case 2 (bimodal)
    \begin{equation}\label{eqn: bimodal}
        (y\mid \bx, \s) \sim
        \begin{cases}
        \mrm{Beta}\big[10\exp\set{0.5\,(k/\K-1/2)},4\,k^{3/4}\big]&\text{with probability $1/[1+\exp\set{\bx^\T\beta(\s)}]$}\\
        \mrm{Beta}\big[32\sin(w_{\bx,\s}),8(k/\K+1)^{1/2}\big] &\text{with probability $1/[1+\exp\set{-\bx^\T\beta(\s)}]$}
        \end{cases}
    \end{equation}
    where $k$ is the index of the location $\s=\s_k$.
\end{itemize}

The log density in Case 1 is a function of $(y,w_{\bx,\s})$ plus a function of $(y,\s)$, so it has the form of \eqref{eqn: model}. Case 2 is a mixture of two Beta densities whose weights depend on $\bx^\T\beta(\s)$, so the number and the positions of the modes change with $(\bx,\s)$. It does not have the form of \eqref{eqn: model}, and a single Beta density cannot capture it.

For each $n\in\{50,125,250\}$ and $\K\in\{5,10,20,30\}$, we generate $50$ datasets. We randomly select $80\%$ of the individuals to train the model and keep the remaining $n_2$ individuals, $\bx_{i,\mrm{test}}$ $(i=1,\ldots,n_2)$, to assess the fit. Each $f_{\theta_t}$ integrates to one since its normalizing constant is exact (Supplement \ref{ssec: normalizing constant}). We take the pointwise median of $f_{\theta_t}(\cdot\m\bx,\s)$ over the retained states $\theta_t$, normalized to integrate to one, as $\widehat{f}(\cdot\m\bx,\s)$, and the pointwise $(2.5\%, 97.5\%)$ quantiles give a 95\% credible band for $f(\cdot\m\bx,\s)$.

The error of $\widehat{f}$ is the squared Hellinger distance \eqref{eqn: hellinger} to the true density $f_0$, with $\mc{H}_0$ replaced by the empirical distribution of the $\bx_{i,\mrm{test}}$ and the integral over $y$ computed on an equispaced grid of $10^4+1$ points. For illustration, we also generate one dataset with $n=200$ and $\K=20$ from each case. 

We compare \ourmodel{} with three methods: the full grid LGP of Section~\ref{ssec: alternative models}, the spLGP of \cite{tokdar2010bayesian} and the logit stick-breaking prior (LSBP) mixture of \cite{rigon2021tractable}. For \ourmodel{}, we use $\N=50$ and Mat\'{e}rn kernels with smoothness $\nu=5/2$ and length-scale $\ell=0.292$ for $K^{(l)}$ and $K_\beta$, chosen following Section~\ref{sec: posterior computation}, with $\sigma^2_\beta=1$, $a_\sigma=2$ and $b_\sigma=1$. The full grid LGP uses the same kernel and variance prior, with a coarse grid of $3$ points per covariate and $11$ points for $\s$, since the number of grid values grows exponentially in $p$. The spLGP uses a single index treating the location as one more covariate, with the same grid, kernel and priors as \ourmodel{}. LSBP is fitted to the logit of $y$ as in \cite{rigon2021tractable}, with kernel means linear in $(\bx,\s)$ and mixing weights depending on $\bx$ and a spline basis of $\s$. All methods are run for $2000$ burn-in and $2000$ posterior states, and Supplement \ref{ssec: benchmarks} gives the implementation details.

In Figure \ref{fig:sim_comp}, we overlay the estimated densities of the four methods with the true density for different individuals at the same locations of the illustrative datasets. In Case 1, \ourmodel{} captures the truth, and its $95\%$ credible bands cover it.
 In Case 2, which is misspecified, \ourmodel{} still recovers both the unimodal and the bimodal shapes, while the full grid LGP shows a single broad mode in all three panels. In panel (f), the spLGP, LSBP and the full grid LGP depart from the truth, while the band of \ourmodel{} covers it.

\begin{figure}[t!]
    \centering
  \centering
\begin{subfigure}{0.330\textwidth}
  \centering
  \includegraphics[scale= 0.15]{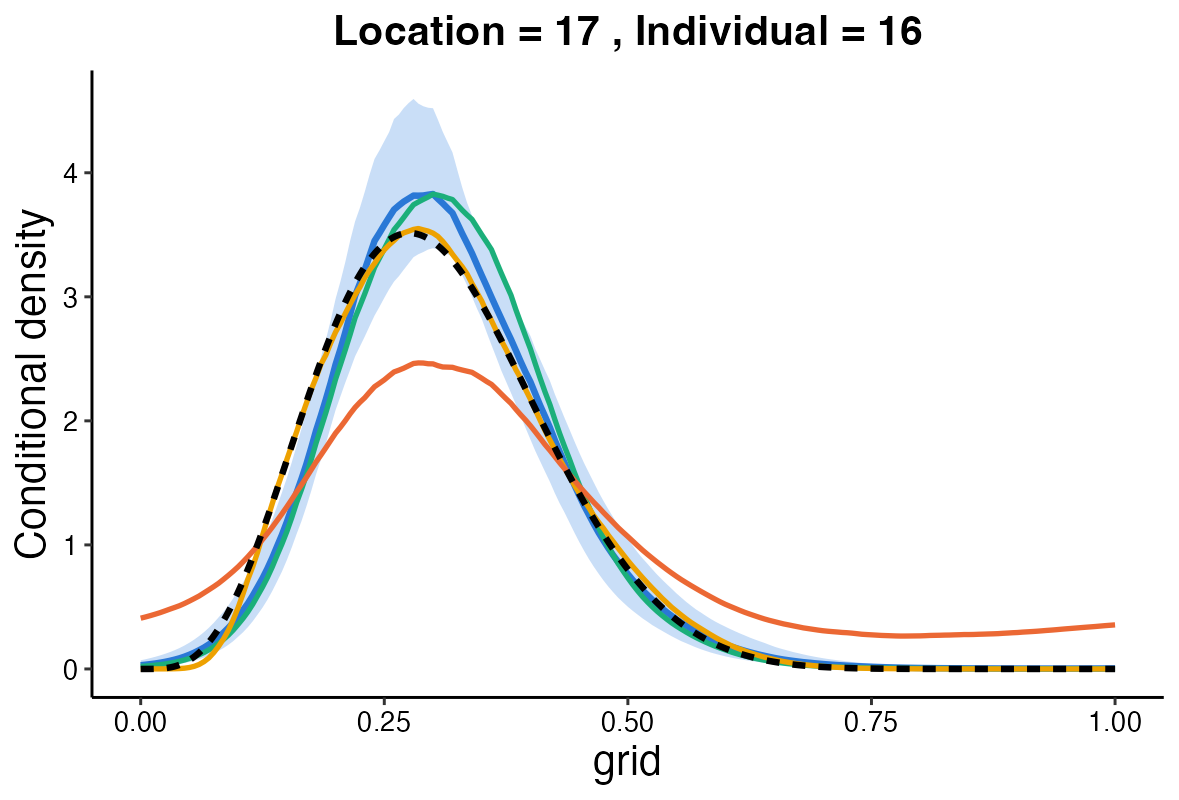}
  \caption{}
  \label{fig:a1}
\end{subfigure}%
  \centering 
\begin{subfigure}{0.330\textwidth}
  \centering
    \includegraphics[scale= 0.15]{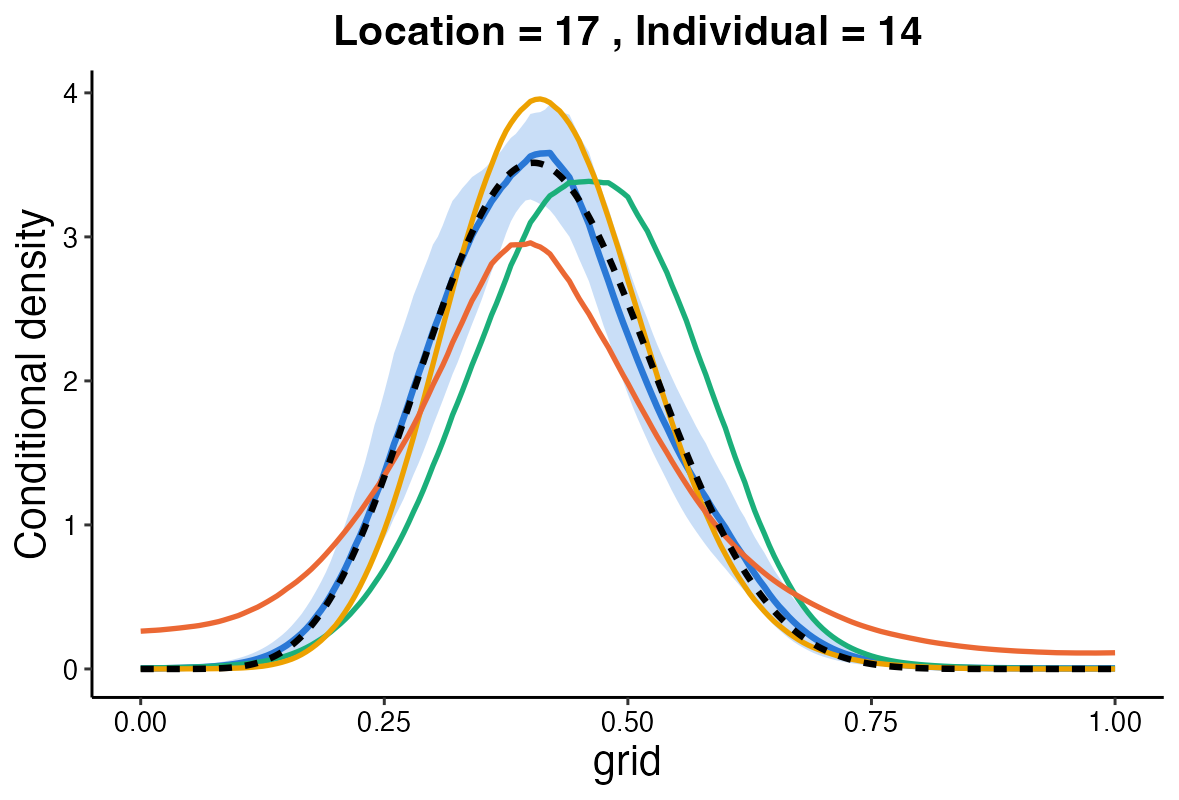}
  \caption{}
  \label{fig:b1}
\end{subfigure}%
\begin{subfigure}{0.330\textwidth}
  \centering
  \includegraphics[scale= 0.15]{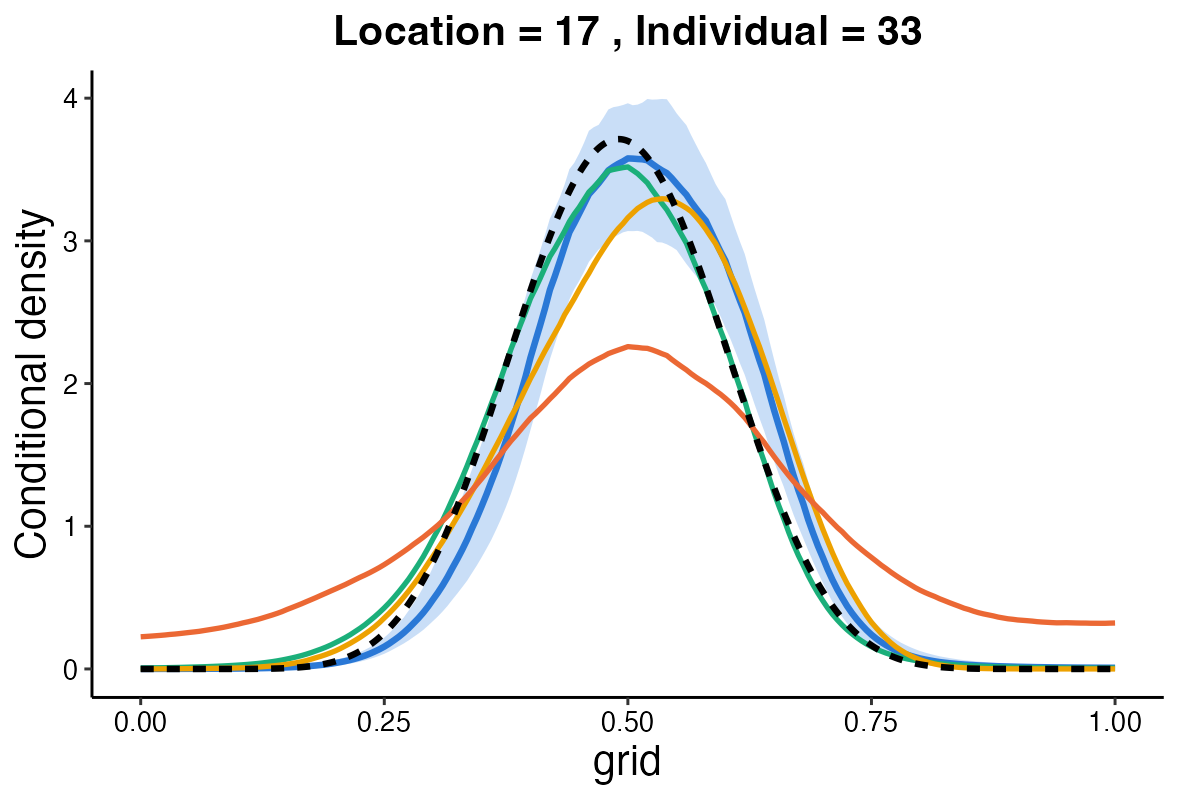}
  \caption{}
  \label{fig:c1}
\end{subfigure}%
\centering\\
\begin{subfigure}{0.330\textwidth}
  \centering
  \includegraphics[scale= 0.15]{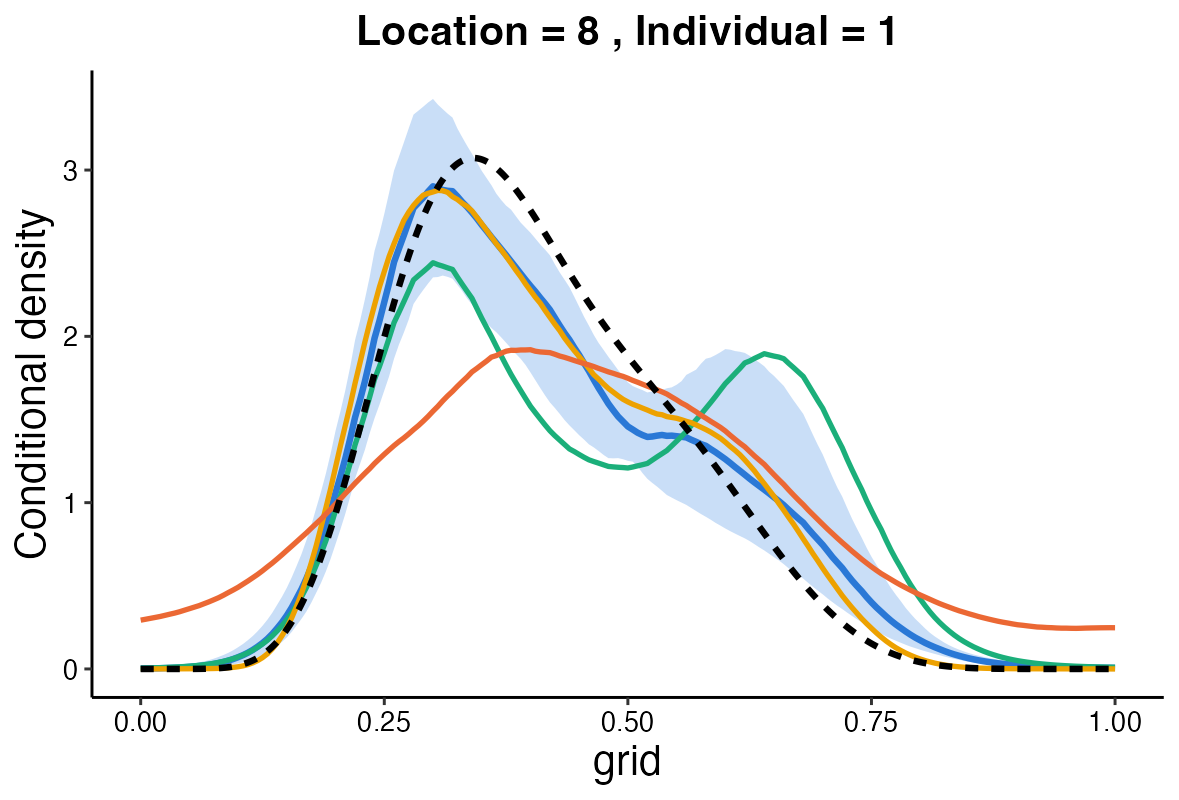}
  \caption{}
  \label{fig:d1}
\end{subfigure}%
  \centering 
\begin{subfigure}{0.330\textwidth}
  \centering
    \includegraphics[scale= 0.15]{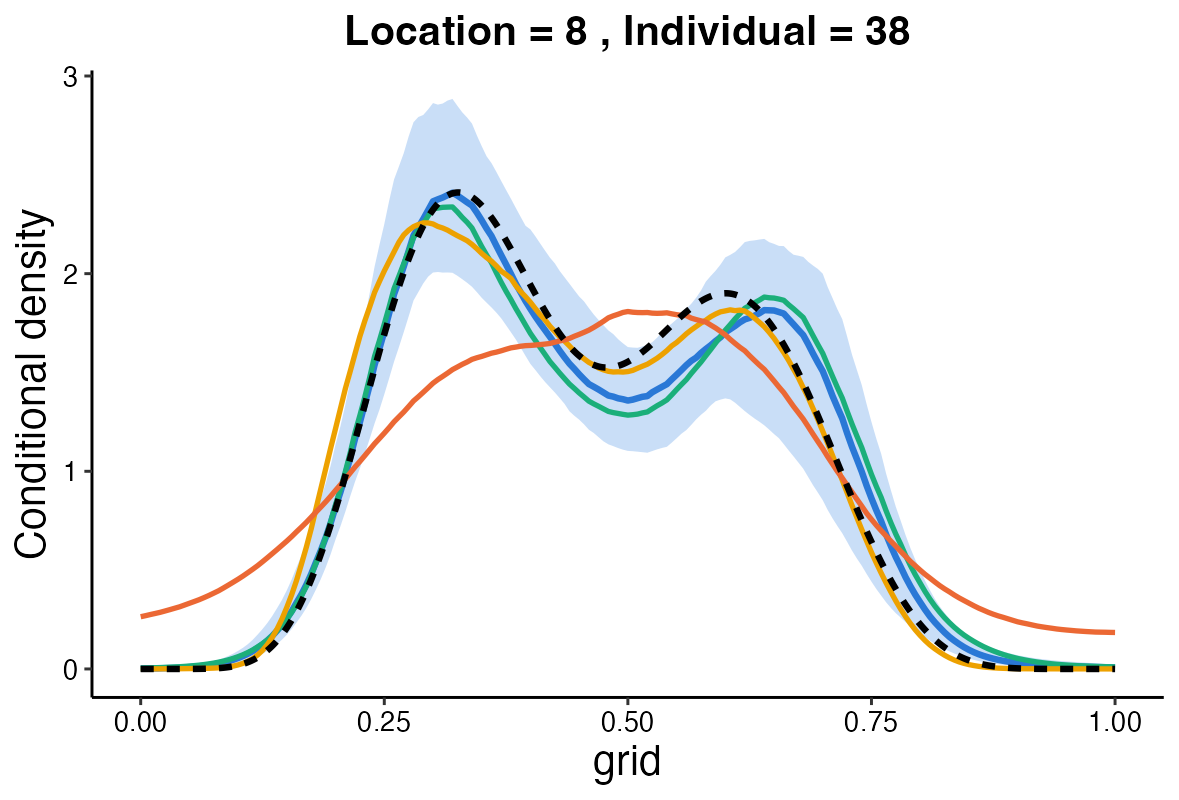}
  \caption{}
  \label{fig:e1}
\end{subfigure}%
\begin{subfigure}{0.330\textwidth}
  \centering
  \includegraphics[scale= 0.15]{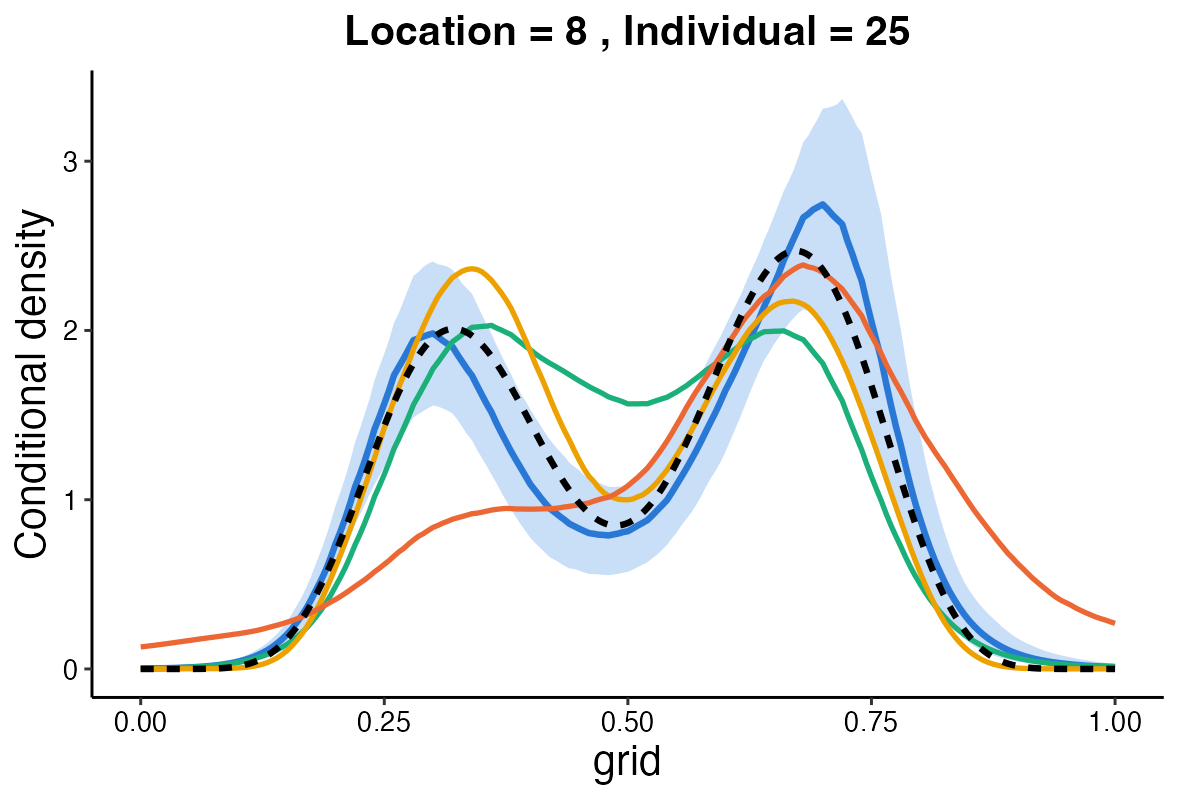}
  \caption{}
  \label{fig:f1}
\end{subfigure}%
\\
\includegraphics[width=0.8\textwidth]{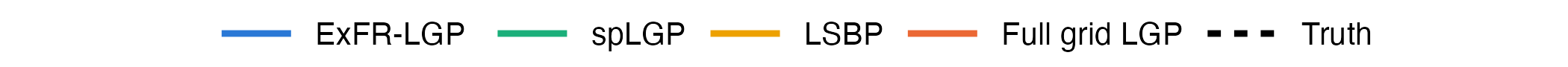}
\caption[Comparison of estimated densities]{The black dashed line represents the true density, and the blue line is the \ourmodel{} estimate with 95\% credible band in shaded blue for different individuals at the same location. Upper panels [(a), (b), (c)] and lower panels [(d), (e), (f)] correspond to densities in \eqref{eqn: unimodal} and \eqref{eqn: bimodal} respectively. The green, yellow and orange lines are the posterior medians of the spLGP, LSBP and the full grid LGP, respectively. The individuals are the test points at the 10th, 50th and 90th percentiles of $\bx^\T\beta(\s)$.}
\label{fig:sim_comp}
\end{figure}

Figure \ref{fig:discrepancy} shows the error over the $50$ replications against $n$ on a log scale. For fixed $\K$, the error of \ourmodel{} drops as $n$ increases in both cases, in line with Theorem \ref{thm:post_cont}, and in Case 1 it also drops as $\K$ increases, since a larger $\K$ gives more responses to estimate $g_1$ and $g_2$, which are shared by all locations. The full grid LGP has a larger error in all replications. The spLGP has a larger error in almost all replications, and its error decreases slowly in $n$, since its single index cannot let the effect of $\bx$ change with $\s$. LSBP has a larger error than \ourmodel{} in Case 1 and a similar error in Case 2, where the truth is itself a mixture. In both cases, the $95\%$ credible band of \ourmodel{} covers the truth more often than those of the other methods (Supplement \ref{ssec: coverage}). When the responses of one individual are correlated along the tract, the predictive intervals of \ourmodel{} keep their nominal coverage, while the error grows and the credible band becomes narrower than the actual uncertainty (Supplement \ref{ssec: dependence}).

\begin{figure}[t!]
\centering
\includegraphics[width=\textwidth]{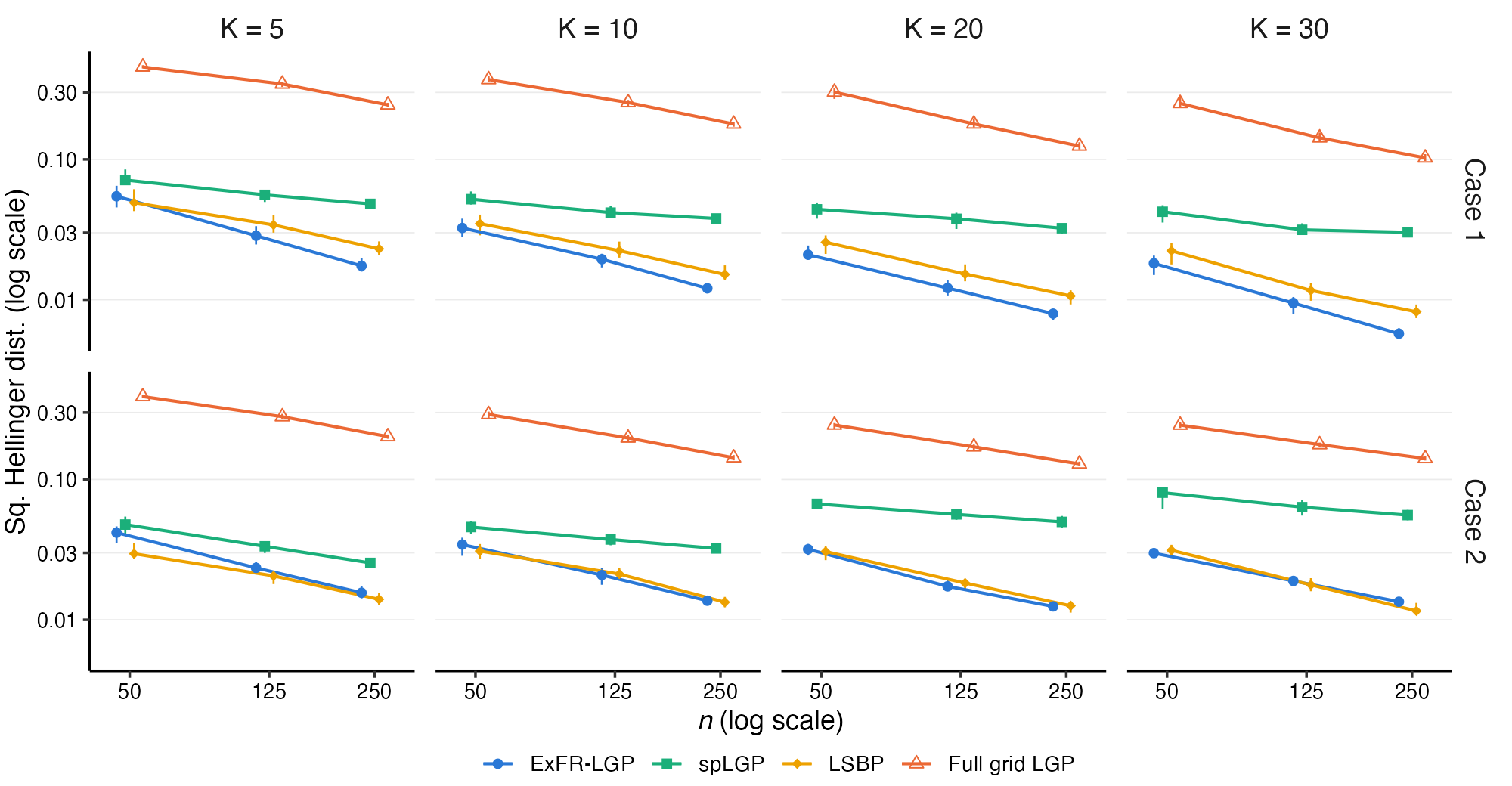}
\caption{Median (point) and interquartile range (bar) over $50$ replications of the squared Hellinger distance between the true and the estimated density, averaged over the test individuals and the $\K$ locations, against $n$, with both axes on a log scale. The upper row is Case 1 \eqref{eqn: unimodal} and the lower row is Case 2 \eqref{eqn: bimodal}, and the columns are the values of $\K$. Blue circles are \ourmodel{}, green squares the spLGP, yellow diamonds LSBP and orange triangles the full grid LGP.}
\label{fig:discrepancy}
\end{figure}

\section{Application: Diffusion Tensor Imaging}\label{sec: real data}
We applied \ourmodel{} in \eqref{eqn: model} to the ADNI data described in Section~\ref{sec: introduction}, obtained from the ADNI database (\url{https://adni.loni.usc.edu/}). The dataset comes from processed {\em diffusion-weighted imaging (DWI)} data \citep{bammer2003basic}. The FA maps were processed with the TBSS pipeline \citep{smith2006tract} as in \citet{luo2016single} and \citet{zhang2021high}, and each individual's FA was measured at $83$ grid points along the skeleton of the midsagittal corpus callosum. The FA measurement is a fraction between $0$ and $1$, where high FA values indicate healthy tissue and lower values typically indicate microstructural damage within coherent fiber bundles such as the corpus callosum \citep{basser1995inferring, sexton2011meta}.

We are interested in estimating the density of FA at a specific location given the covariates $\bx$. Here $\bx$ consists of seven covariates, four binary and three continuous. The binary covariates are sex (1 if male), handedness (1 if left-handed), AD (1 if diagnosed with Alzheimer's disease) and MCI (1 if diagnosed with mild cognitive impairment). Of the $213$ individuals, $42$ have AD, $118$ have MCI and the other $53$ are cognitively normal (CN). The continuous covariates are age and education in years and the Mini-Mental State Exam (MMSE) score, where a higher score indicates better cognition. All covariates are standardized to have mean zero and variance one.

We also compare \ourmodel{} with two of its variants,
\begin{align}
    \log f_\theta(y\mid\bx,\s) &= g(y,w_{\bx,\s}) - \lpart_\theta(\bx,\s), \label{eqn: model without g2}\\
    \log f_\theta(y\mid\bx,\s) &= g_\s(y,w_{\bx,\s}) - \lpart_\theta(\bx,\s), \label{eqn: model with separate surfaces}
\end{align}
where $g$ and each $g_\s$ $(\s\in\mc{S})$ are represented as in \eqref{eqn: basis expansion g}. Model \eqref{eqn: model without g2} drops $g_2$, so the location affects the density only through $\beta(\s)$. Model \eqref{eqn: model with separate surfaces} fits a separate surface at each location. It contains \ourmodel{} as the special case $g_\s(y,w)=g_1(y,w)+g_2(y,\s)$, so it is an overparametrized version of \ourmodel{}, with $\K(\N+1)^2$ grid values instead of $2(\N+1)^2$, so its dimension grows linearly in $\K$. It also shares no information across locations.

We compare \ourmodel{} with the alternative models above, \ourgmodel{} of Section~\ref{ssec: generalized model}, the full grid LGP, the spLGP and LSBP of Section~\ref{sec: simulation} by $5$-fold cross-validation over individuals. The $213$ individuals are split at random into five folds, each model is fitted to four folds, and for each held-out individual $i$ we compute the log predictive density of all their responses, $\log\{T_{\iter}^{-1}\sum_{t=1}^{T_{\iter}}\prod_{\s\in\mc{S}}f_{\theta_t}(y_{i\s}\m\bx_i,\s)\}$, where $\theta_t$ $(t=1,\ldots,T_{\iter})$ are the posterior samples retained after burn-in. Its sum over all held-out individuals is larger for a model that predicts new individuals better. The individual is the unit here because the responses of one individual along the tract are dependent. We use the same $\N$, kernels and priors as in Section~\ref{sec: simulation}, with $2$ grid points per covariate for the full grid LGP, and for \ourgmodel{} we use $\N_\delta=5$, $(a_\delta,b_\delta)=(2,0.1)$ and the same Mat\'{e}rn kernel for $\Gamma_{y}$, $\Gamma_{\bx}$ and $\Gamma_{\s}$, with the Euclidean distance between covariate vectors for $\Gamma_{\bx}$. Every model is run for $22000$ iterations, the first $20000$ as burn-in, and the log predictive density uses the last $2000$. For \ourgmodel{}, $H$ is fixed at zero in the first half of the burn-in as a warm-up.
\begin{table}[htbp!]
    \centering
        \caption{Five-fold cross-validation over the $213$ individuals of the ADNI data. Log predictive density of all responses of each held-out individual, summed over individuals (larger is better, bold marks the largest).}
    \label{tab:cv}
    \begin{tabular}{|c|c|}
    \hline
       Model  &  Log predictive density \\
       \hline
\ourmodel{} (Eqn \eqref{eqn: model}) & 21451.6 \\
         \hline
   \ourgmodel{} (Eqn \eqref{eqn: generalized model}) & \textbf{21665.3} \\
         \hline
   Alternative 1 (Eqn \eqref{eqn: model without g2})   & 18349.3 \\
         \hline
            Alternative 2 (Eqn \eqref{eqn: model with separate surfaces}) & 21405.5 \\
         \hline
            Full grid LGP & 16651.0 \\
         \hline
            spLGP & 19797.9 \\
         \hline
            LSBP & 21312.9 \\
         \hline
    \end{tabular}
\end{table}

In Table \ref{tab:cv}, \ourgmodel{} has the largest log predictive density, followed by \ourmodel{}. Model \eqref{eqn: model with separate surfaces} comes close, although it uses $\K/2$ times as many grid values, and LSBP follows. The spLGP, the simpler model \eqref{eqn: model without g2} and the full grid LGP, whose covariate grid is inevitably coarse under $p=7$, predict new individuals clearly worse. The central $80\%$ and $95\%$ posterior predictive intervals of \ourgmodel{} contain $79.5\%$ and $94.5\%$ of the held-out responses. Since \ourgmodel{} fits the data best, we report its fit below. The posterior of $\sigma_\delta^2$ concentrates near zero, with median $0.0034$, so the densities in the figures below are computed from $g_1+g_2$ of the fit, with $\delta$ omitted.

\begin{figure}[t!]
\centering
\includegraphics[width=\textwidth]{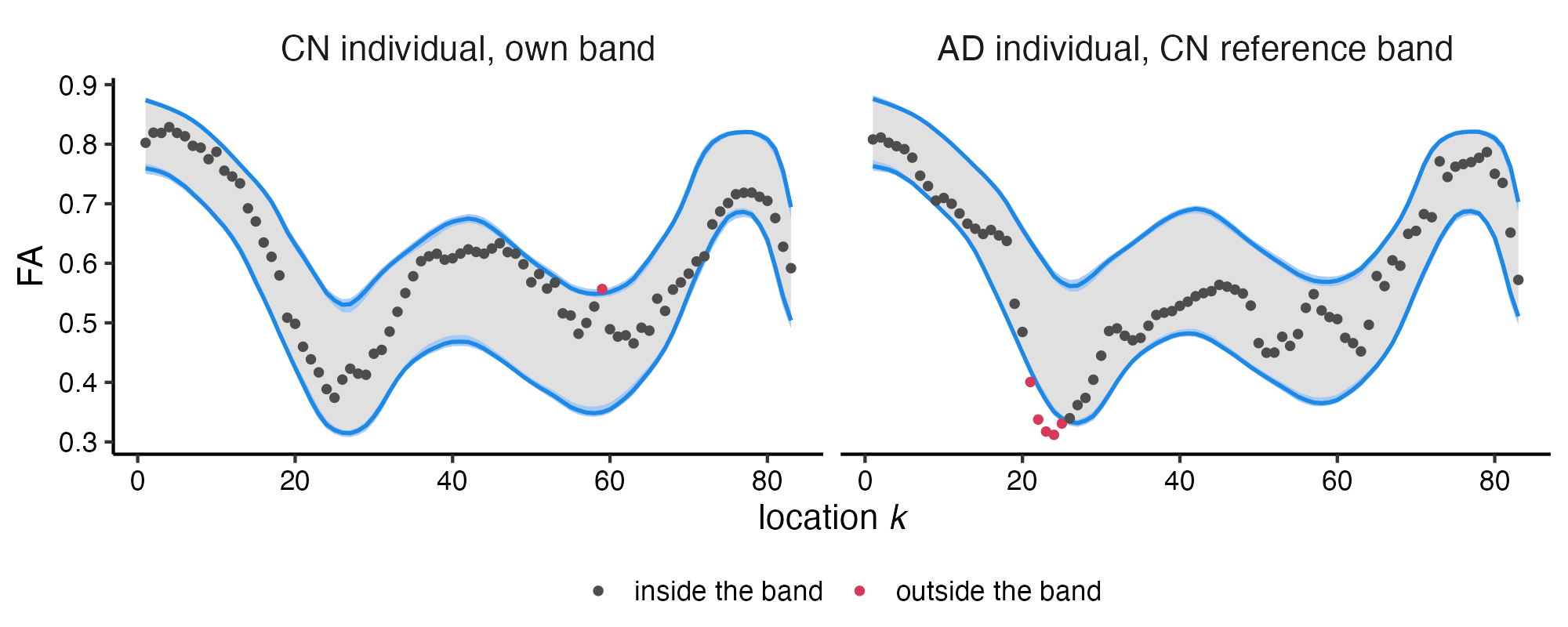}
\caption{$10$th and $90$th percentiles of FA along the tract (blue lines, posterior median, with pointwise $95\%$ credible bands). Left, a CN individual and their own band. Right, an AD individual of the same sex and the closest age and a CN reference band with the same sex, age, handedness and education. Points are the observed FA, red when outside the band. The credible bands are computed under model \eqref{eqn: conditional density model}.}
\label{fig:perc_band}
\end{figure}

We summarize the effect of diagnosis through $f_\theta$. The posterior median of the difference in the median of $f_\theta(\cdot\m\bx,\s)$ between an AD and a CN individual, holding all other covariates at their overall averages, is negative at every location, consistent with the known decline in FA associated with Alzheimer's disease \citep{sexton2011meta}. This difference remains negative at almost all locations across the five fits, each excluding one fifth of the individuals (Supplement~\ref{ssec: contrast}).

Figure \ref{fig:perc_band} shows the posterior median of the $10$th and $90$th percentiles of $f_\theta(\cdot\mid\bx,\s)$ along the tract, with the observed FA of two individuals. They are chosen to have the same sex and the closest age, while one is a CN and another is an AD individual. The band of the CN individual uses their own covariates. The band of the AD individual uses their own sex, age, handedness and education as a reference, with AD and MCI set to $0$ and MMSE at the CN group mean. Such a band gives a covariate-adjusted reference against which the FA profile of a new individual can be compared. \citet{zhang2021high} use p-value curves for a similar comparison. The FA of the AD individual falls below this reference band at locations $21$ to $25$ and stays in its lower part up to location $50$. Its use as a diagnostic tool would need validation against clinical outcomes. The pointwise credible bands rely on the conditional independence assumption in \eqref{eqn: conditional density model} and may therefore understate uncertainty when within-individual responses are strongly correlated along the tract (Supplement~\ref{ssec: dependence}).

\section{Conclusions}\label{sec: conclusion}
We proposed \ourmodel{}, a finite-rank LGP model for conditional densities that change with covariates and location. The normalizing constant of an LGP is the integral of an exponentiated GP, which has no closed form. We approximate the GP by linear interpolation on a regular grid, resulting in a log density that is piecewise linear in the response, and the likelihood, the mean and the quantiles available in closed form. We reduce dimensionality by expressing the log density, up to a normalizing term, as the sum of two bivariate functions: one of the response and a covariate index with location-dependent coefficients, following a single-index formulation, and the other of the response and location. The posterior contracts at the minimax rate of a bivariate $\alpha$-smooth function up to a logarithmic factor when the true log density is the sum of two such bivariate functions. Simulation studies revealed that \ourmodel{} was more accurate than a full grid LGP and spLGP with the same number of MCMC iterations, and comparable to or more accurate than LSBP. Application to the ADNI data generated covariate-adjusted percentile bands of FA along the corpus callosum. 

The current model has several limitations. Conditional independence across locations may understate posterior uncertainty when within-individual responses are strongly correlated along the tract. The single-index structure also restricts covariate effects at each location to one linear combination, while the posterior contraction result requires the true log density, up to normalization, to admit the specified additive bivariate representation. Future work will investigate multiple-index extensions \citep{xia2002adaptive} to accommodate richer covariate effects, and Gaussian copula models to capture within-individual dependence separately from the covariate-adjusted FA distributions at individual locations \citep{hoff2007extending}, retaining their closed-form likelihood contributions. Additional directions include establishing posterior contraction under dependent responses, developing simultaneous reference bands for tract-level FA profiles, and quantifying uncertainty in summaries such as the proportion of locations at which an individual's FA falls outside the reference band. The clinical utility of these summaries will require validation against clinical outcomes.

\if0\blind
{
\section*{Acknowledgement}
Data collection and sharing for this project was funded by the Alzheimer's Disease Neuroimaging Initiative (ADNI) (National Institutes of Health Grant U01 AG024904) and DOD ADNI (Department of Defense award number W81XWH-12-2-0012). ADNI is funded by the National Institute on Aging, the National Institute of Biomedical Imaging and Bioengineering, and through generous contributions from the following: AbbVie, Alzheimer's Association; Alzheimer's Drug Discovery Foundation; Araclon Biotech; BioClinica, Inc.; Biogen; Bristol-Myers Squibb Company; CereSpir, Inc.; Cogstate; Eisai Inc.; Elan Pharmaceuticals, Inc.; Eli Lilly and Company; EuroImmun; F. Hoffmann-La Roche Ltd and its affiliated company Genentech, Inc.; Fujirebio; GE Healthcare; IXICO Ltd.; Janssen Alzheimer Immunotherapy Research \& Development, LLC.; Johnson \& Johnson Pharmaceutical Research \& Development LLC.; Lumosity; Lundbeck; Merck \& Co., Inc.; Meso Scale Diagnostics, LLC.; NeuroRx Research; Neurotrack Technologies; Novartis Pharmaceuticals Corporation; Pfizer Inc.; Piramal Imaging; Servier; Takeda Pharmaceutical Company; and Transition Therapeutics. The Canadian Institutes of Health Research is providing funds to support ADNI clinical sites in Canada. Private sector contributions are facilitated by the Foundation for the National Institutes of Health (www.fnih.org). The grantee organization is the Northern California Institute for Research and Education, and the study is coordinated by the Alzheimer's Therapeutic Research Institute at the University of Southern California. ADNI data are disseminated by the Laboratory for Neuro Imaging at the University of Southern California. The authors also acknowledge partial research support from grants R01DE031134 and R21DE031879 from the United States National Institutes of Health. 
} \fi

\begin{singlespace} 
\bibliography{refs}   

\begin{thebibliography}{79}
\newcommand{\enquote}[1]{``#1''}
\providecommand{\natexlab}[1]{#1}
\providecommand{\url}[1]{\normalfont{#1}}
\providecommand{\urlprefix}{}

\bibitem[Adams, Murray, and MacKay(2009{\natexlab{a}})]{adams2009gaussian}
Adams, Ryan~Prescott, Iain Murray, and David J.~C. MacKay. 2009{\natexlab{a}}.
  ``The {G}aussian process density sampler.'' In \emph{Advances in Neural
  Information Processing Systems}, Vol.~21, 9--16.

\bibitem[Adams, Murray, and MacKay(2009{\natexlab{b}})]{adams2009tractable}
Adams, Ryan~Prescott, Iain Murray, and David J.~C. MacKay. 2009{\natexlab{b}}.
  ``Tractable nonparametric {B}ayesian inference in {P}oisson processes with
  {G}aussian process intensities.'' In \emph{Proceedings of the 26th
  International Conference on Machine Learning}, 9--16.

\bibitem[Adler(1990)]{adler1990introduction}
Adler, Robert~J. 1990. \emph{An Introduction to Continuity, Extrema, and
  Related Topics for General {G}aussian Processes}. Vol.~12 of \emph{Lecture
  Notes--Monograph Series}. Hayward, CA: Institute of Mathematical Statistics.

\bibitem[Bammer(2003)]{bammer2003basic}
Bammer, Roland. 2003. ``Basic principles of diffusion-weighted imaging.''
  \emph{European Journal of Radiology} 45 (3): 169--184.

\bibitem[Barrientos, Jara, and Quintana(2017)]{barrientos2017fully}
Barrientos, Andr{\'e}s~F., Alejandro Jara, and Fernando~A. Quintana. 2017.
  ``Fully nonparametric regression for bounded data using dependent {B}ernstein
  polynomials.'' \emph{Journal of the American Statistical Association} 112
  (518): 806--825.

\bibitem[Basser(1995)]{basser1995inferring}
Basser, Peter~J. 1995. ``Inferring microstructural features and the
  physiological state of tissues from diffusion-weighted images.'' \emph{NMR in
  Biomedicine} 8 (7): 333--344.

\bibitem[Beaumont, Zhang, and Balding(2002)]{beaumont2002approximate}
Beaumont, Mark~A., Wenyang Zhang, and David~J. Balding. 2002. ``Approximate
  {B}ayesian computation in population genetics.'' \emph{Genetics} 162 (4):
  2025--2035.

\bibitem[Chen(2017)]{chen2017tutorial}
Chen, Yen-Chi. 2017. ``A tutorial on kernel density estimation and recent
  advances.'' \emph{Biostatistics \& Epidemiology} 1 (1): 161--187.

\bibitem[Chen et~al.(2026)]{chen2026logistic}
Chen, Zichuan, Lucas Kock, Jeong~Eun Lee, and David~J. Nott. 2026. ``Logistic
  {G}aussian process density regression: a generalized {B}ayesian approach.''
  arXiv:2606.22915.

\bibitem[Dhara et~al.(2020)]{dhara2020new}
Dhara, Kumaresh, Stuart Lipsitz, Debdeep Pati, and Debajyoti Sinha. 2020. ``A
  new {B}ayesian single index model with or without covariates missing at
  random.'' \emph{Bayesian Analysis} 15 (3): 759--780.

\bibitem[Donner and Opper(2018)]{donner2018efficient}
Donner, Christian, and Manfred Opper. 2018. ``Efficient {B}ayesian inference
  for a {G}aussian process density model.'' In \emph{Proceedings of the 34th
  Conference on Uncertainty in Artificial Intelligence}, Monterey, CA. AUAI
  Press.

\bibitem[Dunson and Park(2008)]{dunson2008kernel}
Dunson, David~B., and Ju-Hyun Park. 2008. ``Kernel stick-breaking processes.''
  \emph{Biometrika} 95 (2): 307--323.

\bibitem[Dunson, Pillai, and Park(2007)]{dunson2007bayesian}
Dunson, David~B., Natesh Pillai, and Ju-Hyun Park. 2007. ``{B}ayesian density
  regression.'' \emph{Journal of the Royal Statistical Society: Series B
  (Statistical Methodology)} 69 (2): 163--183.

\bibitem[Fan, Yao, and Tong(1996)]{fan1996estimation}
Fan, Jianqing, Qiwei Yao, and Howell Tong. 1996. ``Estimation of conditional
  densities and sensitivity measures in nonlinear dynamical systems.''
  \emph{Biometrika} 83 (1): 189--206.

\bibitem[Gautier and Ginsbourger(2026)]{gautier2025continuous}
Gautier, Ath{\'e}na{\"i}s, and David Ginsbourger. 2026. ``Continuous logistic
  {G}aussian random measure fields for spatial distributional modelling.''
  \emph{Annals of the Institute of Statistical Mathematics} 1--39.

\bibitem[Gelfand, Kottas, and MacEachern(2005)]{gelfand2005bayesian}
Gelfand, Alan~E., Athanasios Kottas, and Steven~N. MacEachern. 2005.
  ``{B}ayesian nonparametric spatial modeling with {D}irichlet process
  mixing.'' \emph{Journal of the American Statistical Association} 100 (471):
  1021--1035.

\bibitem[Ghosal, Ghosh, and van~der Vaart(2000)]{ghosal2000convergence}
Ghosal, Subhashis, Jayanta~K. Ghosh, and Aad~W. van~der Vaart. 2000.
  ``Convergence rates of posterior distributions.'' \emph{The Annals of
  Statistics} 28 (2): 500--531.

\bibitem[Ghosal and van~der Vaart(2007{\natexlab{a}})]{ghosal2007convergence}
Ghosal, Subhashis, and Aad van~der Vaart. 2007{\natexlab{a}}. ``Convergence
  rates of posterior distributions for noniid observations.'' \emph{The Annals
  of Statistics} 35 (1): 192--223.

\bibitem[Ghosal and van~der Vaart(2007{\natexlab{b}})]{ghosal2007posterior}
Ghosal, Subhashis, and Aad van~der Vaart. 2007{\natexlab{b}}. ``Posterior
  convergence rates of {D}irichlet mixtures at smooth densities.'' \emph{The
  Annals of Statistics} 35 (2): 697--723.

\bibitem[Griebel and Hegland(2010)]{griebel2010finite}
Griebel, Michael, and Markus Hegland. 2010. ``A finite element method for
  density estimation with {G}aussian process priors.'' \emph{SIAM Journal on
  Numerical Analysis} 47 (6): 4759--4792.

\bibitem[Griffin and Steel(2006)]{griffin2006order}
Griffin, Jim~E., and Mark F.~J. Steel. 2006. ``Order-based dependent
  {D}irichlet processes.'' \emph{Journal of the American Statistical
  Association} 101 (473): 179--194.

\bibitem[Hastie and Tibshirani(1986)]{hastie1986generalized}
Hastie, Trevor, and Robert Tibshirani. 1986. ``Generalized additive models.''
  \emph{Statistical Science} 1 (3): 297--310.

\bibitem[Hegland(2007)]{hegland2007approximate}
Hegland, Markus. 2007. ``Approximate maximum a posteriori with {G}aussian
  process priors.'' \emph{Constructive Approximation} 26 (2): 205--224.

\bibitem[Hoff(2007)]{hoff2007extending}
Hoff, Peter~D. 2007. ``Extending the rank likelihood for semiparametric copula
  estimation.'' \emph{The Annals of Applied Statistics} 1 (1): 265--283.

\bibitem[Jack et~al.(2008)]{jack2008alzheimer}
Jack, Clifford~R., Jr., Matt~A. Bernstein, Nick~C. Fox, Paul Thompson, Gene
  Alexander, Danielle Harvey, Bret Borowski, et~al. 2008. ``The {A}lzheimer's
  {D}isease {N}euroimaging {I}nitiative ({ADNI}): {MRI} methods.''
  \emph{Journal of Magnetic Resonance Imaging} 27 (4): 685--691.

\bibitem[Jacobs(2014)]{jacobs2014nonparametric}
Jacobs, Justin~Wayne. 2014. ``Nonparametric density estimation on {R}iemannian
  manifolds.'' PhD diss., University of Maryland, Baltimore County.

\bibitem[Jiang(2007)]{jiang2007bayesian}
Jiang, Wenxin. 2007. ``{B}ayesian variable selection for high dimensional
  generalized linear models: convergence rates of the fitted densities.''
  \emph{The Annals of Statistics} 35 (4): 1487--1511.

\bibitem[Kalda and Siddiqui(2013)]{kalda2013nonparametric}
Kalda, Ankit, and Sikandar Siddiqui. 2013. ``Nonparametric conditional density
  estimation of short-term interest rate movements: procedures, results and
  risk management implications.'' \emph{Applied Financial Economics} 23 (8):
  671--684.

\bibitem[Kim, Bhattacharya, and Pati(2025)]{kim2025adaptive}
Kim, Jaehoan, Anirban Bhattacharya, and Debdeep Pati. 2025. ``Adaptive
  resolution for finite-rank {G}aussian processes.'' arXiv:2505.24066.

\bibitem[Kim and Mak(2026)]{kim2026isotonic}
Kim, Jaehoan, and Simon Mak. 2026. ``Isotonic surrogate modeling for computer
  experiments with many input variables.'' arXiv:2609.35354.

\bibitem[Klein et~al.(2015)]{klein2015bayesian}
Klein, Nadja, Thomas Kneib, Stefan Lang, and Alexander Sohn. 2015. ``{B}ayesian
  structured additive distributional regression with an application to regional
  income inequality in {G}ermany.'' \emph{The Annals of Applied Statistics} 9
  (2): 1024--1052.

\bibitem[Lenk(1988)]{lenk1988logistic}
Lenk, Peter~J. 1988. ``The logistic normal distribution for {B}ayesian,
  nonparametric, predictive densities.'' \emph{Journal of the American
  Statistical Association} 83 (402): 509--516.

\bibitem[Leonard(1978)]{leonard1978density}
Leonard, Tom. 1978. ``Density estimation, stochastic processes and prior
  information.'' \emph{Journal of the Royal Statistical Society: Series B
  (Methodological)} 40 (2): 113--132.

\bibitem[Li(1991)]{li1991sliced}
Li, Ker-Chau. 1991. ``Sliced inverse regression for dimension reduction.''
  \emph{Journal of the American Statistical Association} 86 (414): 316--327.

\bibitem[Luo, Zhu, and Zhu(2016)]{luo2016single}
Luo, Xinchao, Lixing Zhu, and Hongtu Zhu. 2016. ``Single-index varying
  coefficient model for functional responses.'' \emph{Biometrics} 72 (4):
  1275--1284.

\bibitem[Maatouk and Bay(2017)]{maatouk2017gaussian}
Maatouk, Hassan, and Xavier Bay. 2017. ``{G}aussian process emulators for
  computer experiments with inequality constraints.'' \emph{Mathematical
  Geosciences} 49 (5): 557--582.

\bibitem[Mak, Bingham, and Lu(2016)]{mak2016regional}
Mak, Simon, Derek Bingham, and Yi~Lu. 2016. ``A regional compound {P}oisson
  process for hurricane and tropical storm damage.'' \emph{Journal of the Royal
  Statistical Society: Series C (Applied Statistics)} 65 (5): 677--703.

\bibitem[McDonald and Campbell(2021)]{mcdonald2021review}
McDonald, Shaun, and David Campbell. 2021. ``A review of uncertainty
  quantification for density estimation.'' \emph{Statistics Surveys} 15: 1--71.

\bibitem[Murray, Adams, and MacKay(2010)]{murray2010elliptical}
Murray, Iain, Ryan~Prescott Adams, and David J.~C. MacKay. 2010. ``Elliptical
  slice sampling.'' In \emph{Proceedings of the Thirteenth International
  Conference on Artificial Intelligence and Statistics}, Vol.~9 of
  \emph{Proceedings of Machine Learning Research}, 541--548. PMLR.

\bibitem[Norets and Pati(2017)]{norets2017adaptive}
Norets, Andriy, and Debdeep Pati. 2017. ``Adaptive {B}ayesian estimation of
  conditional densities.'' \emph{Econometric Theory} 33 (4): 980--1012.

\bibitem[Norets and Pelenis(2014)]{norets2014posterior}
Norets, Andriy, and Justinas Pelenis. 2014. ``Posterior consistency in
  conditional density estimation by covariate dependent mixtures.''
  \emph{Econometric Theory} 30 (3): 606--646.

\bibitem[Odin, Bachoc, and Lagnoux(2026)]{odin2026contraction}
Odin, Elie, Fran{\c{c}}ois Bachoc, and Agn{\`e}s Lagnoux. 2026. ``Contraction
  rates and projection subspace estimation with {G}aussian process priors in
  high dimension.'' \emph{Bernoulli} 32 (2): 1645--1664.

\bibitem[{\"O}zarslan, Vemuri, and Mareci(2005)]{ozarslan2005generalized}
{\"O}zarslan, Evren, Baba~C. Vemuri, and Thomas~H. Mareci. 2005. ``Generalized
  scalar measures for diffusion {MRI} using trace, variance, and entropy.''
  \emph{Magnetic Resonance in Medicine} 53 (4): 866--876.

\bibitem[Paisley, Zhang, and Barr(2025)]{paisley2025gaussian}
Paisley, John, Wei Zhang, and Brian Barr. 2025. ``{G}aussian process tilted
  nonparametric density estimation using {F}isher divergence score matching.''
  In \emph{2025 IEEE International Conference on Big Data}, 1162--1171.

\bibitem[Pati, Dunson, and Tokdar(2013)]{pati2013posterior}
Pati, Debdeep, David~B. Dunson, and Surya~T. Tokdar. 2013. ``Posterior
  consistency in conditional distribution estimation.'' \emph{Journal of
  Multivariate Analysis} 116: 456--472.

\bibitem[Payne et~al.(2020)]{payne2020conditional}
Payne, Richard~D., Nilabja Guha, Yu~Ding, and Bani~K. Mallick. 2020. ``A
  conditional density estimation partition model using logistic {G}aussian
  processes.'' \emph{Biometrika} 107 (1): 173--190.

\bibitem[Petrone(1999)]{petrone1999random}
Petrone, Sonia. 1999. ``Random {B}ernstein polynomials.'' \emph{Scandinavian
  Journal of Statistics} 26 (3): 373--393.

\bibitem[Rasmussen and Williams(2006)]{rasmussen2006gaussian}
Rasmussen, Carl~Edward, and Christopher K.~I. Williams. 2006. \emph{{G}aussian
  Processes for Machine Learning}. Cambridge, MA: MIT Press.

\bibitem[Rau et~al.(2015)]{rau2015accurate}
Rau, Markus~Michael, Stella Seitz, Fabrice Brimioulle, Eibe Frank, Oliver
  Friedrich, Daniel Gruen, and Ben Hoyle. 2015. ``Accurate photometric redshift
  probability density estimation -- method comparison and application.''
  \emph{Monthly Notices of the Royal Astronomical Society} 452 (4): 3710--3725.

\bibitem[Ray, Pati, and Bhattacharya(2020)]{ray2020efficient}
Ray, Pallavi, Debdeep Pati, and Anirban Bhattacharya. 2020. ``Efficient
  {B}ayesian shape-restricted function estimation with constrained {G}aussian
  process priors.'' \emph{Statistics and Computing} 30 (4): 839--853.

\bibitem[Reich and Fuentes(2007)]{reich2007multivariate}
Reich, Brian~J., and Montserrat Fuentes. 2007. ``A multivariate semiparametric
  {B}ayesian spatial modeling framework for hurricane surface wind fields.''
  \emph{The Annals of Applied Statistics} 1 (1): 249--264.

\bibitem[Reich, Fuentes, and Dunson(2011)]{reich2011bayesian}
Reich, Brian~J., Montserrat Fuentes, and David~B. Dunson. 2011. ``{B}ayesian
  spatial quantile regression.'' \emph{Journal of the American Statistical
  Association} 106 (493): 6--20.

\bibitem[Rigby and Stasinopoulos(2005)]{rigby2005generalized}
Rigby, Robert~A., and D.~Mikis Stasinopoulos. 2005. ``Generalized additive
  models for location, scale and shape.'' \emph{Journal of the Royal
  Statistical Society: Series C (Applied Statistics)} 54 (3): 507--554.

\bibitem[Rigon and Durante(2021)]{rigon2021tractable}
Rigon, Tommaso, and Daniele Durante. 2021. ``Tractable Bayesian density
  regression via logit stick-breaking priors.'' \emph{Journal of Statistical
  Planning and Inference} 211: 131--142.

\bibitem[Riihim{\"a}ki and Vehtari(2014)]{riihimaki2014laplace}
Riihim{\"a}ki, Jaakko, and Aki Vehtari. 2014. ``{L}aplace approximation for
  logistic {G}aussian process density estimation and regression.''
  \emph{Bayesian Analysis} 9 (2): 425--448.

\bibitem[Roy, Ghosal, and Choudhury(2020)]{roy2020high}
Roy, Arkaprava, Subhashis Ghosal, and Kingshuk~Roy Choudhury. 2020. ``High
  dimensional single-index {B}ayesian modeling of brain atrophy.''
  \emph{Bayesian Analysis} 15 (4): 1229--1249.

\bibitem[Rue and Held(2005)]{rue2005gaussian}
Rue, H{\aa}vard, and Leonhard Held. 2005. \emph{{G}aussian {M}arkov Random
  Fields: Theory and Applications}. Vol. 104 of \emph{Monographs on Statistics
  and Applied Probability}. Boca Raton, FL: Chapman \& Hall/CRC.

\bibitem[Schaback(1995)]{schaback1995error}
Schaback, Robert. 1995. ``Error estimates and condition numbers for radial
  basis function interpolation.'' \emph{Advances in Computational Mathematics}
  3 (3): 251--264.

\bibitem[Sexton et~al.(2011)]{sexton2011meta}
Sexton, Claire~E., Ukwuori~G. Kalu, Nicola Filippini, Clare~E. Mackay, and
  Klaus~P. Ebmeier. 2011. ``A meta-analysis of diffusion tensor imaging in mild
  cognitive impairment and {A}lzheimer's disease.'' \emph{Neurobiology of
  Aging} 32 (12): 2322.e5--2322.e18.

\bibitem[Shen and Ghosal(2016)]{shen2016adaptive}
Shen, Weining, and Subhashis Ghosal. 2016. ``Adaptive {B}ayesian density
  regression for high-dimensional data.'' \emph{Bernoulli} 22 (1): 396--420.

\bibitem[Shen, Tokdar, and Ghosal(2013)]{shen2013adaptive}
Shen, Weining, Surya~T. Tokdar, and Subhashis Ghosal. 2013. ``Adaptive
  {B}ayesian multivariate density estimation with {D}irichlet mixtures.''
  \emph{Biometrika} 100 (3): 623--640.

\bibitem[Shen and Wasserman(2001)]{shen2001rates}
Shen, Xiaotong, and Larry Wasserman. 2001. ``Rates of convergence of posterior
  distributions.'' \emph{The Annals of Statistics} 29 (3): 687--714.

\bibitem[Smith et~al.(2006)]{smith2006tract}
Smith, Stephen~M., Mark Jenkinson, Heidi Johansen-Berg, Daniel Rueckert,
  Thomas~E. Nichols, Clare~E. Mackay, Kate~E. Watkins, et~al. 2006.
  ``Tract-based spatial statistics: voxelwise analysis of multi-subject
  diffusion data.'' \emph{NeuroImage} 31 (4): 1487--1505.

\bibitem[Tokdar(2007)]{tokdar2007towards}
Tokdar, Surya~T. 2007. ``Towards a faster implementation of density estimation
  with logistic {G}aussian process priors.'' \emph{Journal of Computational and
  Graphical Statistics} 16 (3): 633--655.

\bibitem[Tokdar(2011)]{tokdar2011dimension}
Tokdar, Surya~T. 2011. ``Dimension adaptability of {G}aussian process models
  with variable selection and projection.'' arXiv:1112.0716.

\bibitem[Tokdar, Zhu, and Ghosh(2010)]{tokdar2010bayesian}
Tokdar, Surya~T., Yu~M. Zhu, and Jayanta~K. Ghosh. 2010. ``{B}ayesian density
  regression with logistic {G}aussian process and subspace projection.''
  \emph{Bayesian Analysis} 5 (2): 319--344.

\bibitem[Tsybakov(2009)]{tsybakov2004introduction}
Tsybakov, Alexandre~B. 2009. \emph{Introduction to Nonparametric Estimation}.
  Springer Series in Statistics. New York: Springer.

\bibitem[van~der Vaart and van Zanten(2008{\natexlab{a}})]{van2008rates}
van~der Vaart, Aad~W., and J.~Harry van Zanten. 2008{\natexlab{a}}. ``Rates of
  contraction of posterior distributions based on {G}aussian process priors.''
  \emph{The Annals of Statistics} 36 (3): 1435--1463.

\bibitem[van~der Vaart and van Zanten(2008{\natexlab{b}})]{van2008reproducing}
van~der Vaart, Aad~W., and J.~Harry van Zanten. 2008{\natexlab{b}}.
  ``Reproducing kernel {H}ilbert spaces of {G}aussian priors.'' In
  \emph{Pushing the Limits of Contemporary Statistics: Contributions in Honor
  of Jayanta K. Ghosh},  edited by Bertrand Clarke and Subhashis Ghosal, Vol.~3
  of \emph{IMS Collections}, 200--222. Beachwood, OH: Institute of Mathematical
  Statistics.

\bibitem[van~der Vaart and van Zanten(2009)]{van2009adaptive}
van~der Vaart, Aad~W., and J.~Harry van Zanten. 2009. ``Adaptive {B}ayesian
  estimation using a {G}aussian random field with inverse {G}amma bandwidth.''
  \emph{The Annals of Statistics} 37 (5B): 2655--2675.

\bibitem[Verdinelli and Wasserman(1998)]{verdinelli1998bayesian}
Verdinelli, Isabella, and Larry Wasserman. 1998. ``{B}ayesian goodness-of-fit
  testing using infinite-dimensional exponential families.'' \emph{The Annals
  of Statistics} 26 (4): 1215--1241.

\bibitem[Vershynin(2026)]{vershynin2026high}
Vershynin, Roman. 2026. \emph{High-Dimensional Probability: An Introduction
  with Applications in Data Science}. 2nd ed. Cambridge: Cambridge University
  Press.

\bibitem[Wade et~al.(2014)]{wade2014improving}
Wade, Sara, David~B. Dunson, Sonia Petrone, and Lorenzo Trippa. 2014.
  ``Improving prediction from {D}irichlet process mixtures via enrichment.''
  \emph{Journal of Machine Learning Research} 15: 1041--1071.

\bibitem[Wang(2024)]{wang2024nonparametric}
Wang, Haoxuan. 2024. ``Nonparametric {B}ayesian density estimation with
  {G}aussian processes.'' Master's thesis, Duke University.

\bibitem[Wendland(2004)]{wendland2004scattered}
Wendland, Holger. 2004. \emph{Scattered Data Approximation}. Cambridge:
  Cambridge University Press.

\bibitem[Wood(2010)]{wood2010statistical}
Wood, Simon~N. 2010. ``Statistical inference for noisy nonlinear ecological
  dynamic systems.'' \emph{Nature} 466 (7310): 1102--1104.

\bibitem[Xia et~al.(2002)]{xia2002adaptive}
Xia, Yingcun, Howell Tong, W.~K. Li, and Li-Xing Zhu. 2002. ``An adaptive
  estimation of dimension reduction space.'' \emph{Journal of the Royal
  Statistical Society: Series B (Statistical Methodology)} 64 (3): 363--410.

\bibitem[Zhang(2004)]{zhang2004inconsistent}
Zhang, Hao. 2004. ``Inconsistent estimation and asymptotically equal
  interpolations in model-based geostatistics.'' \emph{Journal of the American
  Statistical Association} 99 (465): 250--261.

\bibitem[Zhang et~al.(2022)]{zhang2021high}
Zhang, Zhengwu, Xiao Wang, Linglong Kong, and Hongtu Zhu. 2022.
  ``High-dimensional spatial quantile function-on-scalar regression.''
  \emph{Journal of the American Statistical Association} 117 (539): 1563--1578.

\end{thebibliography}
\end{singlespace}


\clearpage
\setcounter{section}{0}
\setcounter{equation}{0}
\renewcommand{\thesection}{\Alph{section}}
\renewcommand{\theequation}{S\arabic{equation}}
\renewcommand{\theHsection}{supp.\Alph{section}}
\renewcommand{\theHequation}{supp.\arabic{equation}}
\counterwithin{figure}{section}
\counterwithin{table}{section}
\renewcommand{\theHfigure}{supp.\thefigure}
\renewcommand{\theHtable}{supp.\thetable}
\titleformat{\section}[block]{\normalfont\Large\bfseries}{Supplement~\thesection:}{0.5em}{}
\titleformat{\subsection}[block]{\normalfont\bfseries}{\thesubsection}{0.5em}{}

\begin{center}
{\LARGE\bf Supplementary Materials}
\end{center}

Supplement \ref{sec: closed form} gives the closed form calculations. Supplement \ref{sec: proof of theorem} derives the detailed proof of Theorem \ref{thm:post_cont}. Supplement \ref{sec: numerical details} presents the benchmark methods, the coverage of the credible band, a check of dependence across locations and the posterior contrast in the ADNI data. 

\section{Closed form calculations}\label{sec: closed form}

We derive the normalizing constant, the mean and the quantiles of $f_\theta(\cdot\m\bx,\s)$ in \eqref{eqn: model}. Fix $(\bx,\s)$, and let $u_j=j/\N$ $(j=0,\ldots,\N)$ be the knots and $\Delta_\N=1/\N$. By \eqref{eqn: basis expansion g}, the exponent of $f_\theta$ is
\begin{equation}\label{eqn: normalizing constant}
    g_1(y,w_{\bx,\s})+g_2(y,\s)=\sum_{i=0}^{\N}\delta_i\,\phi_i(y), \qquad \delta_i=\sum_{j=0}^{\N}\big\{\xi^{(1)}_{ij}\,\phi_j(w_{\bx,\s})+\xi^{(2)}_{ij}\,\phi_j(\s)\big\},
\end{equation}
which is piecewise linear in $y$ with value $\delta_j$ at $u_j$. On $[u_j,u_{j+1}]$ it equals $\gamma_j+\lambda_j y$, with slope $\lambda_j=(\delta_{j+1}-\delta_j)/\Delta_\N$ and intercept $\gamma_j=\delta_j-\lambda_j u_j$ $(j=0,\ldots,\N-1)$. For \ourgmodel{} in \eqref{eqn: generalized model}, the exponent gains $\sum_{k=0}^{\N_\delta}h_k(\bx,\s)\,\phi^{\delta}_k(y)$. This term is piecewise linear with knots $k/\N_\delta$, so the exponent is piecewise linear with knots at the union of $\{u_j\}$ and $\{k/\N_\delta\}$, and the calculations below hold with the $u_j$ replaced by the knots of this union and the $\delta_j$ by the values of the exponent there. If $\N_\delta$ divides $\N$, the union is $\{u_j\}$, and every $\delta_i$ is replaced by $\delta_i+\sum_{k}h_k(\bx,\s)\,\phi^{\delta}_k(u_i)$.

\subsection{Calculation of the normalizing constant}\label{ssec: normalizing constant}
Integrating over each cell and using $\gamma_j+\lambda_j u_j=\delta_j$ and $\gamma_j+\lambda_j u_{j+1}=\delta_{j+1}$,
\begin{align}
    \exp\{\lpart_\theta(\bx,\s)\} &= \sum_{j=0}^{\N-1}\exp(\gamma_j)\int_{u_j}^{u_{j+1}}\exp(\lambda_j y)\,dy \label{eqn: integral over cells}\\
    &= \sum_{j=0}^{\N-1}\frac{\exp(\delta_{j+1})-\exp(\delta_j)}{\lambda_j}, \label{eqn: normalizing constant closed form}
\end{align}
where a summand with $\lambda_j=0$ is its limit $\Delta_\N\exp(\delta_j)$. In the implementation, we subtract $\max_j \delta_j$ from every $\delta_j$ before exponentiation and add it back after taking the logarithm, which avoids overflow. Since at most two of the $\phi_j(w_{\bx,\s})$ and at most two of the $\phi_j(\s)$ are nonzero, each $\delta_i$ combines at most four coefficients, and $\lpart_\theta(\bx,\s)$ costs $\bigO(\N)$ operations.

\subsection{Calculation of Mean}\label{ssec: mean}
By \eqref{eqn: integral over cells} and integration by parts,
\begin{align*}
    \bbE(y\m\bx,\s) &= \exp\{-\lpart_\theta(\bx,\s)\}\sum_{j=0}^{\N-1}\exp(\gamma_j)\int_{u_j}^{u_{j+1}}y\exp(\lambda_j y)\,dy\\
    &= \exp\{-\lpart_\theta(\bx,\s)\}\sum_{j=0}^{\N-1}\bigg\{\Big(\frac{u_{j+1}}{\lambda_j}-\frac{1}{\lambda_j^2}\Big)e^{\delta_{j+1}}-\Big(\frac{u_j}{\lambda_j}-\frac{1}{\lambda_j^2}\Big)e^{\delta_j}\bigg\},
\end{align*}
where a summand with $\lambda_j=0$ is replaced by $e^{\delta_j}(u_{j+1}^2-u_j^2)/2$.

\subsection{Calculation of quantiles}\label{ssec: quantiles}
Let $t_\eta$ be the $\eta$ quantile of $f_\theta(\cdot\m\bx,\s)$ for $\eta\in(0,1)$, and let $S_j$ be the sum of the first $j$ summands in \eqref{eqn: normalizing constant closed form}, with $S_0=0$. Let $j_0\in\{0,\ldots,\N-1\}$ be the largest index with $S_{j_0}\le\eta\exp\{\lpart_\theta(\bx,\s)\}$, so that $u_{j_0}\le t_\eta<u_{j_0+1}$. By \eqref{eqn: integral over cells}, $t_\eta$ solves
\begin{equation*}
    S_{j_0}+\exp(\gamma_{j_0})\int_{u_{j_0}}^{t_\eta}\exp(\lambda_{j_0}y)\,dy=\eta\exp\{\lpart_\theta(\bx,\s)\}.
\end{equation*}
For $\lambda_{j_0}\neq0$ the integral equals $\{\exp(\gamma_{j_0}+\lambda_{j_0}t_\eta)-\exp(\delta_{j_0})\}/\lambda_{j_0}$, so
\begin{equation*}
    t_\eta=\frac{\log R_\eta-\gamma_{j_0}}{\lambda_{j_0}}, \qquad R_\eta=\lambda_{j_0}\big[\eta\exp\{\lpart_\theta(\bx,\s)\}-S_{j_0}\big]+\exp(\delta_{j_0}).
\end{equation*}
For $\lambda_{j_0}=0$, $t_\eta=u_{j_0}+e^{-\delta_{j_0}}\big[\eta\exp\{\lpart_\theta(\bx,\s)\}-S_{j_0}\big]$.

\section{Proof of Theorem \ref{thm:post_cont}}\label{sec: proof of theorem}

The proof verifies the three conditions of Theorem 2.1 in \cite{ghosal2000convergence}. Lemma \ref{lemma:hellinger} bounds the Hellinger distance, the Kullback--Leibler divergence and the Kullback--Leibler variation by the sup norm distance between log densities. This distance is split into the error of the Gaussian process $G_l$ in approximating $g^{(0)}_l$, the error of its linear interpolation $g_l$ in approximating $G_l$, and the error caused by $\beta \neq \beta^{(0)}$. These are controlled by Lemma \ref{lemma:gp}, Lemma \ref{lemma:interp} and the Gaussian prior on $\beta$, respectively. The interpolation error must be negligible compared with the small ball probability of the Gaussian process, which determines how fast $\N$ must grow. Section \ref{ssec: reduction} states the three conditions, Section \ref{ssec: auxiliary lemmas} collects the lemmas, Section \ref{ssec: choice of rates} fixes $\varepsilon_n$, $\kappa_n$ and $\N$, and Sections \ref{ssec: sieve condition} to \ref{ssec: KL condition} verify the conditions.

Let $B=\max_{\s\in\mc{S}}\|\beta^{(0)}(\s)\|_2$, so that $\bx^\T\beta^{(0)}(\s)\in[-B,B]$ for $\|\bx\|_2\le1$. So $g^{(0)}_1$ enters $\log\Upsilon_0$ only on $[0,1]\times[-B,B]$, where it is $\alpha$-H\"older. If $B=0$, we replace $g^{(0)}_2$ by $g^{(0)}_1(\cdot,0)+g^{(0)}_2$, $g^{(0)}_1$ by $0$ and each $\beta^{(0)}(\s)$ by a unit vector, so that $B=1$. With $\tilde\beta^{(0)}(\s)=\beta^{(0)}(\s)/B$, $w^{(0)}_{\bx,\s}=(\bx^\T\tilde\beta^{(0)}(\s)+1)/2\in[0,1]$ and $\tilde g^{(0)}_1(y,w)=g^{(0)}_1(y,B(2w-1))\in\bbC^\alpha[0,1]^2$, we have $g^{(0)}_1(y,\bx^\T\beta^{(0)}(\s))=\tilde g^{(0)}_1(y,w^{(0)}_{\bx,\s})$ and $\|\tilde\beta^{(0)}(\s)\|_2\le1$. We write $g^{(0)}_1$ and $\beta^{(0)}$ for $\tilde g^{(0)}_1$ and $\tilde\beta^{(0)}$ below.

\subsection{Reduction to three conditions}\label{ssec: reduction}
Write $Z_i = \{\bx_i, (y_{i\s})_{\s\in\mc{S}}\}$. By the sampling assumption in Section~\ref{sec: posterior contraction}, $Z_1, \ldots, Z_n$ are i.i.d.\ with density
\begin{equation*}
    q_\theta(z) = \prod_{\s\in\mc{S}} f_\theta(y_\s \m \bx, \s), \qquad z = \{\bx, (y_\s)_{\s\in\mc{S}}\},
\end{equation*}
with respect to $\mc{H}_0 \otimes \mathrm{Leb}^{\K}$, and the true density $q_0$ is defined analogously with $f_0$. Let $\Hel$ denote the Hellinger distance between two densities. For fixed $\bx$, let $a_\s(\bx) = \int \{f_{\theta_1}(y\m\bx,\s)f_{\theta_2}(y\m\bx,\s)\}^{1/2}dy \in [0,1]$. Then $d^2_h(f_{\theta_1}, f_{\theta_2}) = 2\K^{-1}\sum_\s\int\{1 - a_\s\}\,d\mc{H}_0$ and $\Hel^2(q_{\theta_1}, q_{\theta_2}) = 2\int\{1 - \prod_\s a_\s\}\,d\mc{H}_0$. Since $\max_\s(1-a_\s) \le 1 - \prod_\s a_\s \le \sum_\s(1-a_\s)$,
\begin{equation}\label{eqn: hellinger equivalence}
    d^2_h(f_{\theta_1}, f_{\theta_2}) \le \Hel^2(q_{\theta_1}, q_{\theta_2}) \le \K\, d^2_h(f_{\theta_1}, f_{\theta_2}).
\end{equation}
Define
\begin{align*}
    \mrm{KL}(f_0,f) &= \K^{-1}\sum_{\s\in\mc{S}}\int \log \p{f_0(y\mid \bx,\s)/f(y\mid \bx,\s)}\,f_0(y\mid \bx,\s)\,dy\,d\mc{H}_0(\bx),\\
    \mrm{V}(f_0,f) &= \K^{-1}\sum_{\s\in\mc{S}}\int \log^2 \p{f_0(y\mid \bx,\s)/f(y\mid \bx,\s)}\,f_0(y\mid \bx,\s)\,dy\,d\mc{H}_0(\bx).
\end{align*}
Since $\log(q_0/q_\theta)(z) = \sum_\s \log(f_0/f_\theta)(y_\s\m\bx,\s)$, we have $\mrm{KL}(q_0, q_\theta) = \K\,\mrm{KL}(f_0, f_\theta)$, and the Cauchy--Schwarz inequality gives $\mrm{V}(q_0, q_\theta) \le \K^2\,\mrm{V}(f_0, f_\theta)$. It suffices to find constants $A, C' > 0$ and, for every $C > 0$, a sieve $\mc{F}_n \subset \mc{F}$ such that for all large $n$
\begin{align}
 \Pi(\mc{F}^c_n) &\le \exp\p{-C\,n\varepsilon^2_n},\label{eqn: sieve condition}\\
\log \mc{C}(\varepsilon_n,\mc{F}_n, d_h) &\le n\varepsilon^2_n\label{eqn: entropy condition},\\
\Pi(f: \mrm{KL}(f_0,f)\leq A\varepsilon^2_n, \mrm{V}(f_0,f)\leq A\varepsilon^2_n) &\ge \exp{(-\,C'n\varepsilon^2_n)}\label{eqn: KL condition},
\end{align}
where $\Pi$ is the prior and $\mc{C}(\varepsilon, \cdot, d)$ is the $\varepsilon$ covering number. We call these the sieve condition, the metric entropy condition and the KL condition. By \eqref{eqn: hellinger equivalence} and the relations for $\mrm{KL}$ and $\mrm{V}$ above, these conditions with the sieve taken for $C = (C'+4)L^2$ and $L = 2\K\max(1,\sqrt{A})$ give (2.2) to (2.4) of \cite{ghosal2000convergence} for $q_\theta$ at the rate $L\varepsilon_n$, which satisfies $L\varepsilon_n\to0$ and $nL^2\varepsilon_n^2\to\infty$. Theorem 2.1 there gives contraction of $q_\theta$ in $\Hel$ at this rate in probability, and hence of $f_\theta$ in $d_h$, since $d_h \le \Hel$. Since the posterior probability is at most one, this gives \eqref{eqn: posterior contraction}.

We represent the prior through Gaussian processes. Since $\sigma_l^2$ is fixed, we set $\sigma_l^2=1$ without loss of generality, as a fixed scale only changes the constants below. For fixed $\kappa>0$, let $G^\kappa: [0, 1]^2 \to \bbR$ be a centered Gaussian process with covariance kernel $K(z, z') = \exp(-\kappa^2\,\|z - z'\|_2^2)$. Given $\kappa_l$, let $G_l$ be distributed as $G^{\kappa_l}$ $(l=1,2)$, with $(G_1,\kappa_1)$, $(G_2,\kappa_2)$ and $\beta$ independent. Let $T(g) := \big(g(u_i, u_j)\big)_{0\le i,j\le \N}$. Then $T(G_l) \sim \cN(0, \Gamma)$ given $\kappa_l$ \citep[see Proposition 3 of][]{maatouk2017gaussian}, so we take $\xi^{(1)} = T(G_1)$ and $\xi^{(2)} = T(G_2)$. Then the functions $g_1, g_2$ in \eqref{eqn: model} are the linear interpolations of $G_1, G_2$ on the grid,
\begin{equation}\label{eqn: approximation to original GP}
    g_l (z_1, z_2) = \sum_{i, j=0}^\N T(G_l)_{ij}\, \phi_i (z_1)\phi_j(z_2) \quad (l = 1, 2).
\end{equation}
For a bounded measurable function $v$ on $[0,1]$, let $p_v = e^{v}/\int_0^1 e^{v(y)}dy$. Then $f_\theta(\cdot \m \bx, \s) = p_{v_\theta(\cdot, \bx, \s)}$ and $f_0(\cdot \m \bx, \s) = p_{v_0(\cdot, \bx, \s)}$ with the log densities
\begin{equation*}
    v_\theta(y, \bx, \s) = g_1(y,\,w_{\bx,\,\s}) + g_2(y,\,\s), \qquad
    v_0(y, \bx, \s) = g^{(0)}_1(y,\,w^{(0)}_{\bx,\,\s}) + g^{(0)}_2(y,\,\s).
\end{equation*}

\subsection{Auxiliary lemmas}\label{ssec: auxiliary lemmas}
For $\xi \in \bbR^{(\N+1)^2}$, write $g_\xi(z_1, z_2) = \sum^\N_{i,j=0}\xi_{ij}\phi_i(z_1)\phi_j(z_2)$, where the $\phi_j$ in \eqref{eqn: hat basis} are defined on the whole real line. Then $g_1 = g_{\xi^{(1)}}$ and $g_2 = g_{\xi^{(2)}}$.
\begin{lemma}\label{lemma:hellinger}
\begin{itemize}
    \item[(i)] For all $y \in [0,1]$ and $z, z' \in \bbR$,
    \begin{equation*}
        |g_\xi(y, z) - g_{\xi'}(y, z)| \le \|\xi - \xi'\|_\infty, \qquad |g_\xi(y, z) - g_\xi(y, z')| \le 4\N\|\xi\|_\infty|z - z'|.
    \end{equation*}
    \item[(ii)] For bounded measurable functions $v, v'$ on $[0,1]$, $\Hel(p_v, p_{v'}) \le \|v - v'\|_\infty e^{\|v - v'\|_\infty/2}$. If $\|v - v'\|_\infty \le 1$, then $\mrm{KL}(p_v,p_{v'}) \le C_0\|v-v'\|^2_{\infty}$ and $\mrm{V}(p_v,p_{v'}) \le C_0\|v-v'\|^2_{\infty}$ for a universal constant $C_0 > 0$.
    \item[(iii)] For $\tilde\theta = (\tilde\xi^{(1)}, \tilde\xi^{(2)}, \tilde\beta)$ and $\theta = (\xi^{(1)}, \xi^{(2)}, \beta)$ in $\Theta$, let
    \begin{equation*}
        \dTheta(\tilde\theta, \theta) := \|\tilde\xi^{(1)} - \xi^{(1)}\|_\infty + \|\tilde\xi^{(2)} - \xi^{(2)}\|_\infty + 2\N\|\xi^{(1)}\|_\infty\|\tilde\beta - \beta\|_2.
    \end{equation*}
    Then $\sup_{y, \bx, \s}|v_{\tilde\theta} - v_{\theta}| \le \dTheta(\tilde\theta, \theta)$ and $d_h(f_{\tilde\theta}, f_{\theta}) \le \dTheta(\tilde\theta, \theta)\, e^{\dTheta(\tilde\theta, \theta)/2}$.
\end{itemize}
Here and below, the supremum is over $y \in [0,1]$, $\|\bx\|_2 \le 1$ and $\s \in \mc{S}$.
\end{lemma}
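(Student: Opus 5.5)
The plan is to prove the three parts in order. Parts (i) and (iii) are elementary estimates on the hat basis. Part (ii) is a standard bound for exponential families that I would import rather than rederive.

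For (i), I use two facts about the hat basis \eqref{eqn: hat basis} on the whole real line.
\begin{itemize}
\item[(a)] $\phi_j\ge0$ and $\sum_{j=0}^{\N}\phi_j(z)\le1$ for every $z\in\bbR$, with equality when $z\in[0,1]$.
\item[(b)] Each $\phi_j$ is Lipschitz with constant $\N$, and for any $z,z'$ at most four indices $j$ have $\phi_j(z)\ne\phi_j(z')$: at most two are nonzero at $z$ and at most two at $z'$.
\end{itemize}
For the first inequality, write $g_\xi(y,z)-g_{\xi'}(y,z)=\sum_{i,j}(\xi_{ij}-\xi'_{ij})\phi_i(y)\phi_j(z)$. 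Then bound $|\xi_{ij}-\xi'_{ij}|$ by $\|\xi-\xi'\|_\infty$ and use (a) in both coordinates. For the second inequality, write the difference as $\sum_j c_j(y)\{\phi_j(z)-\phi_j(z')\}$ with $c_j(y)=\sum_i\xi_{ij}\phi_i(y)$. By (a), $|c_j(y)|\le\|\xi\|_\infty$, so the sum is at most $\|\xi\|_\infty\sum_j|\phi_j(z)-\phi_j(z')|$. By (b) this is at most $4\N\|\xi\|_\infty|z-z'|$.

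For (ii), I would invoke Lemma 3.1 of \cite{van2008rates}. It states, for $p_v=e^v/\int e^v$ and $\|\cdot\|=\|v-v'\|_\infty$:
\begin{itemize}
\item $\Hel(p_v,p_{v'})\le\|v-v'\|_\infty e^{\|v-v'\|_\infty/2}$;
\item $\mrm{KL}(p_v,p_{v'})\lesssim\|v-v'\|_\infty^2e^{\|v-v'\|_\infty}(1+\|v-v'\|_\infty)$;
\item $\mrm{V}(p_v,p_{v'})\lesssim\|v-v'\|_\infty^2e^{\|v-v'\|_\infty}(1+\|v-v'\|_\infty)^2$.
\end{itemize}
Restricting to $\|v-v'\|_\infty\le1$ turns the last two factors into a universal constant $C_0$. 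If a self-contained argument is preferred, the route is as follows. Note that $\log\int e^{v}$ changes by at most $\|v-v'\|_\infty$, so $\|\log p_v-\log p_{v'}\|_\infty\le2\|v-v'\|_\infty$. Then bound KL and V through $\log$-ratio bounds together with the fact that $\int(p_v-p_{v'})=0$. The only care needed is that the Hellinger normalization matches the one used in \eqref{eqn: hellinger}, namely $\int(\sqrt{p}-\sqrt{q})^2$ without a factor $1/2$. This matching is the point I would double-check.

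For (iii), fix $(y,\bx,\s)$ and decompose $v_{\tilde\theta}-v_\theta$ with a triangle inequality:
\begin{align*}
|g_{\tilde\xi^{(1)}}(y,\tilde w)-g_{\xi^{(1)}}(y,w)| &\le |g_{\tilde\xi^{(1)}}(y,\tilde w)-g_{\xi^{(1)}}(y,\tilde w)|+|g_{\xi^{(1)}}(y,\tilde w)-g_{\xi^{(1)}}(y,w)|,
\end{align*}
plus the term $|g_{\tilde\xi^{(2)}}(y,\s)-g_{\xi^{(2)}}(y,\s)|$. Here $\tilde w=w_{\bx,\s}$ under $\tilde\beta$ and $w$ is the same under $\beta$.
\begin{itemize}
\item By (i), the first and third terms are bounded by $\|\tilde\xi^{(1)}-\xi^{(1)}\|_\infty$ and $\|\tilde\xi^{(2)}-\xi^{(2)}\|_\infty$.
\item By the second part of (i), the middle term is at most $4\N\|\xi^{(1)}\|_\infty|\tilde w-w|$. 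Since the clamp $c$ is $1$-Lipschitz and $\|\bx\|_2\le1$, we have $|\tilde w-w|\le|\bx^\T(\tilde\beta(\s)-\beta(\s))|/2\le\|\tilde\beta-\beta\|_2/2$. This gives the factor $2\N\|\xi^{(1)}\|_\infty\|\tilde\beta-\beta\|_2$.
\end{itemize}
Summing the three bounds gives $\sup|v_{\tilde\theta}-v_\theta|\le\dTheta(\tilde\theta,\theta)$.

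Finally, apply (ii) pointwise in $(\bx,\s)$. This bounds $\int(\sqrt{f_{\tilde\theta}}-\sqrt{f_\theta})^2dy$ by $\dTheta^2e^{\dTheta}$ uniformly in $(\bx,\s)$. Averaging over $\s\in\mc{S}$ and integrating against $\mc{H}_0$ in \eqref{eqn: hellinger} preserves this bound, and taking square roots gives the claim. The main obstacle is not conceptual. It is bookkeeping: getting the Lipschitz constant $4\N$ right (two active hats at each of two points) and matching the Hellinger normalization between (ii) and $d_h$.
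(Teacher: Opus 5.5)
Your proposal is correct and matches the paper's proof essentially step for step. For (i) you use the same hat-basis facts: nonnegativity, $\sum_j\phi_j\le1$, the $\N$-Lipschitz property, and at most two active hats at each point, which gives the factor $4\N$. For (ii) you appeal to the same Lemma 3.1 of \cite{van2008rates}. For (iii) you use the same three-term triangle inequality with the $1$-Lipschitz clamp, then apply (ii) pointwise and integrate over $\mc{H}_0$ and average over $\s$.

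Your normalization worry is harmless. The paper's $\Hel$ and $d_h$ use $\int(\sqrt{p}-\sqrt{q})^2$ without a factor $1/2$, which is the same convention as the cited lemma.
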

\begin{proof}
(i) The basis functions satisfy $\phi_j \ge 0$ and $\sum_{j=0}^\N \phi_j(z) \le 1$ for all $z \in \bbR$, with equality on $[0,1]$. Hence $|g_\xi(y, z) - g_{\xi'}(y, z)| \le \|\xi - \xi'\|_\infty \sum_{i}\phi_i(y)\sum_j \phi_j(z) \le \|\xi - \xi'\|_\infty$. Each $\phi_j$ is piecewise linear with slopes $0$ or $\pm 1/\Delta_\N$, so $|\phi_j(z) - \phi_j(z')| \le \N|z - z'|$. At most two of $\{\phi_j(z)\}_j$ and at most two of $\{\phi_j(z')\}_j$ are nonzero. Therefore,
\begin{equation*}
    |g_\xi(y, z) - g_\xi(y, z')| \le \sum_{i} \phi_i(y) \sum_{j}|\xi_{ij}|\,|\phi_j(z) - \phi_j(z')| \le 4\N\|\xi\|_\infty|z - z'|.
\end{equation*}
(ii) Lemma 3.1 in \cite{van2008rates} gives the Hellinger bound. It also gives $\mrm{KL}(p_v,p_{v'}) \lesssim \|v-v'\|^2_\infty e^{\|v-v'\|_\infty}(1 + \|v-v'\|_\infty)$ and $\mrm{V}(p_v,p_{v'}) \lesssim \|v-v'\|^2_\infty e^{\|v-v'\|_\infty}(1 + \|v-v'\|_\infty)^2$. Both right hand sides are at most a constant times $\|v-v'\|^2_\infty$ when $\|v-v'\|_\infty \le 1$.

(iii) Let $\tilde w = c((\bx^\T\tilde\beta(\s) + 1)/2)$ and $w = c((\bx^\T\beta(\s) + 1)/2)$. Since $c$ is $1$-Lipschitz and $\|\bx\|_2 \le 1$, $|\tilde w - w| \le \|\tilde\beta(\s) - \beta(\s)\|_2/2 \le \|\tilde\beta - \beta\|_2/2$. By (i), for all $(y, \bx, \s)$,
\begin{align*}
    |v_{\tilde\theta} - v_{\theta}| &\le |g_{\tilde\xi^{(1)}}(y, \tilde w) - g_{\xi^{(1)}}(y, \tilde w)| + |g_{\xi^{(1)}}(y, \tilde w) - g_{\xi^{(1)}}(y, w)| + |g_{\tilde\xi^{(2)}}(y, \s) - g_{\xi^{(2)}}(y, \s)| \\
    &\le \dTheta(\tilde\theta, \theta).
\end{align*}
Applying (ii) for each fixed $(\bx, \s)$, squaring, integrating with respect to $\mc{H}_0$ and averaging over $\s$ yields $d^2_h(f_{\tilde\theta}, f_{\theta}) \le \dTheta^2(\tilde\theta, \theta)\, e^{\dTheta(\tilde\theta, \theta)}$.
\end{proof}

In Lemmas \ref{lemma:gp} and \ref{lemma:interp}, $\kappa$ is fixed, $G_1$ stands for $G^\kappa$ and $g_1$ for its interpolation \eqref{eqn: approximation to original GP}. Let $\mathbb{H}^{\kappa}$ be the reproducing kernel Hilbert space (RKHS) of $G_1$ and $\mathbb{H}^{\kappa}_1$ its unit ball. Let $\mathbb{B}_1$ be the unit ball of $\big(C[0,1]^2, \|\cdot\|_\infty\big)$ and $\Phi$ the standard normal distribution function.
\begin{lemma}\label{lemma:gp}
Let $g^{(0)} \in \bbC^{\alpha}[0,1]^2$. There exist constants $\kappa_0, \epsilon_0, A_1, C_\alpha > 0$ depending only on $g^{(0)}$ and $\alpha$ such that the following hold for all $\kappa \ge \kappa_0$ and $\epsilon \le \epsilon_0$.
\begin{itemize}
    \item[(a)] $-\log \Pi(\|G_1\|_\infty < \epsilon) \le A_1\kappa^2\{\log(\kappa/\epsilon)\}^3$.
    \item[(b)] If $\epsilon \ge 2C_\alpha\kappa^{-\alpha}$, then $-\log \Pi(\|G_1 - g^{(0)}\|_\infty < \epsilon) \le A_1\kappa^2\{\log(\kappa/\epsilon)\}^3$.
    \item[(c)] $\log \mc{C}(\epsilon, \mathbb{H}^{\kappa}_1, \|\cdot\|_\infty) \le A_1\kappa^2\{\log(1/\epsilon)\}^3$.
    \item[(d)] $\|\zeta\|_\infty \le 1$ for every $\zeta \in \mathbb{H}^{\kappa}_1$.
    \item[(e)] For every $Q \ge 0$, $\Pi(G_1 \notin Q\mathbb{H}^{\kappa}_1 + \epsilon\mathbb{B}_1) \le 1 - \Phi\big(\Phi^{-1}(\Pi(\|G_1\|_\infty < \epsilon)) + Q\big)$.
    \item[(f)] For $0<\kappa\le r$, $\mathbb{H}^{\kappa}_1 \subset (r/\kappa)\,\mathbb{H}^{r}_1$ and $\Pi(\|G^\kappa\|_\infty < \epsilon) \ge \Pi(\|G^r\|_\infty < \epsilon)$.
\end{itemize}
\end{lemma}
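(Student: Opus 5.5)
The plan is to derive each item of Lemma \ref{lemma:gp} from the known theory of the squared exponential process on a $d$-dimensional cube with $d=2$ and rescaling parameter $\kappa$, namely \cite{van2009adaptive} (Lemmas 4.3--4.7 and 4.11 there) together with the general RKHS facts in \cite{van2008reproducing}. First we note that $G^\kappa(z)$ equals in law $W(\kappa z)$ with $W$ the standard squared exponential process of covariance $\exp(-\|t-t'\|^2)$ on $[0,\kappa]^2$. This matches the parametrization $W^a$ with $a=\kappa$ in \cite{van2009adaptive} up to a constant factor in the exponent, which changes only constants.

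For (a), we invoke the small ball bound of \cite{van2009adaptive}, $-\log\Pi(\|W^a\|_\infty<\epsilon)\lesssim a^d\{\log(a/\epsilon)\}^{1+d}$, valid for $a\ge a_0$ and $\epsilon\le\epsilon_0$. With $d=2$ this gives $A_1\kappa^2\{\log(\kappa/\epsilon)\}^3$.

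For (b), we use the decentering inequality $-\log\Pi(\|G_1-g^{(0)}\|_\infty<2\epsilon)\le \inf_{h\in\mathbb H^\kappa:\|h-g^{(0)}\|_\infty\le\epsilon}\|h\|^2_{\mathbb H^\kappa}/2-\log\Pi(\|G_1\|_\infty<\epsilon)$. To bound the infimum we take the approximation result of \cite{van2009adaptive}: for $g^{(0)}\in\bbC^\alpha$ there is $h\in\mathbb H^\kappa$ with $\|h-g^{(0)}\|_\infty\le C_\alpha\kappa^{-\alpha}$ and $\|h\|^2_{\mathbb H^\kappa}\le D\kappa^2$. This $h$ is built by convolving a Fourier-truncated extension of $g^{(0)}$ to $\bbR^2$ with a suitable kernel. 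The step needing care is the extension from the cube to $\bbR^2$ with bounded support while keeping Hölder smoothness; we handle it with a Whitney-type extension, as in that paper. Under $\epsilon\ge 2C_\alpha\kappa^{-\alpha}$ we apply the inequality at $\epsilon/2$ and absorb $D\kappa^2$ into $A_1\kappa^2\{\log(\kappa/\epsilon)\}^3$.

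Items (c)--(f) are structural. For (c) we use the entropy bound of the unit RKHS ball, $\log\mc C(\epsilon,\mathbb H^a_1,\|\cdot\|_\infty)\lesssim a^d\{\log(1/\epsilon)\}^{1+d}$ from \cite{van2009adaptive}, with $d=2$. Item (d) is the reproducing property: $|\zeta(z)|=|\langle\zeta,K(z,\cdot)\rangle|\le\|\zeta\|_{\mathbb H}K(z,z)^{1/2}\le1$. Item (e) is Borell's isoperimetric inequality \citep[Theorem 5.1 of][]{van2008reproducing}. Item (f) is Lemma 4.7 of \cite{van2009adaptive}, which gives $\sqrt{a}\mathbb H^a_1\subset\sqrt{b}\mathbb H^b_1$ for $a\le b$ for the spectral-measure RKHS in dimension $d$. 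The scaling needs checking in $d=2$: the spectral density ratio gives a factor $(r/\kappa)^{d/2}=r/\kappa$, which is exactly the stated inclusion. The second claim of (f), that small-ball probability decreases in $\kappa$, follows because $G^\kappa(z)=G^r((\kappa/r)z)$ and $(\kappa/r)[0,1]^2\subset[0,1]^2$, so $\|G^\kappa\|_\infty\le\|G^r\|_\infty$ in law.

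The main obstacle is (b), specifically controlling the RKHS norm of the approximant uniformly in $\kappa$ so that it is $O(\kappa^2)$. We rely directly on \cite{van2009adaptive}, which proves exactly this for $\bbC^\alpha$ functions on $[0,1]^d$, including $\alpha$ not an integer. Finally, we pick $\kappa_0$ and $\epsilon_0$ large and small enough, respectively, that all cited statements apply simultaneously and $\log(\kappa/\epsilon)\ge1$.
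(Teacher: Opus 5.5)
Your proposal is correct and follows the paper's proof essentially item by item. It uses the identification $G^\kappa(z)=W(\kappa z)$, Lemmas 4.6, 4.5, 4.3 and 4.7 of \cite{van2009adaptive} for (a), (c), (b) and the first claim of (f), the decentering inequality (Lemma 5.2 of \cite{van2008reproducing}) combined with (a) at $\epsilon/2$ for (b), the reproducing property and Borell's inequality for (d) and (e), and monotonicity of the supremum over nested domains for the second claim of (f). One slip is cosmetic: in dimension $d$ the RKHS inclusion carries the factor $(b/a)^{d/2}$, not the one-dimensional $\sqrt{a}\,\mathbb{H}^a_1\subset\sqrt{b}\,\mathbb{H}^b_1$ you first quote, and your own spectral-density check $(r/\kappa)^{d/2}=r/\kappa$ already gives the right inclusion.
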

\begin{proof}
The process $G_1$ equals the rescaled process $z \mapsto W_{\kappa z}$ in \cite{van2009adaptive}, where $W$ has covariance kernel $\exp(-\|z-z'\|_2^2)$. Its spectral measure has a Gaussian density, so it satisfies condition (3.3) of \cite{van2009adaptive} and its density is bounded away from zero near the origin. We take $\kappa_0\ge1$ and $\epsilon_0<1/2$. Part (a) follows from Lemma 4.6 at $\epsilon/2$, since $\Pi(\|G_1\|_\infty<\epsilon)\ge\Pi(\|G_1\|_\infty\le\epsilon/2)$ and $\log(2\kappa/\epsilon)\le2\log(\kappa/\epsilon)$, after enlarging $A_1$, and part (c) is Lemma 4.5 of \cite{van2009adaptive} with $d = 2$. By Lemma 4.3 of \cite{van2009adaptive}, there exist constants $C_\alpha, D > 0$ such that $\inf\{\|\zeta\|^2_{\mathbb{H}^\kappa}: \|\zeta - g^{(0)}\|_\infty \le C_\alpha\kappa^{-\alpha}\} \le D\kappa^2$. For $\zeta\in\mathbb{H}^\kappa$ with $\|\zeta-g^{(0)}\|_\infty\le\epsilon/2$, the triangle inequality and Lemma 5.2 of \cite{van2008reproducing} with the symmetric set $\{\|\cdot\|_\infty<\epsilon/2\}$ give $\Pi(\|G_1-g^{(0)}\|_\infty<\epsilon)\ge\Pi(\|G_1-\zeta\|_\infty<\epsilon/2)\ge e^{-\|\zeta\|^2_{\mathbb{H}^\kappa}/2}\,\Pi(\|G_1\|_\infty<\epsilon/2)$. Hence
\begin{equation*}
    -\log \Pi(\|G_1 - g^{(0)}\|_\infty < \epsilon) \le \inf_{\|\zeta - g^{(0)}\|_\infty \le \epsilon/2}\frac{1}{2}\|\zeta\|^2_{\mathbb{H}^\kappa} - \log \Pi(\|G_1\|_\infty < \epsilon/2).
\end{equation*}
If $\epsilon/2 \ge C_\alpha\kappa^{-\alpha}$, the first term is at most $D\kappa^2/2$, and part (a) bounds the second term. Enlarging $A_1$ gives (b). Part (d) follows from the reproducing property, since $|\zeta(z)| \le \|\zeta\|_{\mathbb{H}^\kappa}K(z,z)^{1/2} = \|\zeta\|_{\mathbb{H}^\kappa}$. Part (e) is Borell's inequality in Theorem 5.1 of \cite{van2008reproducing}. The first claim of (f) is Lemma 4.7 of \cite{van2009adaptive} with $d=2$, which applies since the Gaussian spectral density is radially decreasing. For the second, $\|G^\kappa\|_\infty = \sup_{t\in[0,\kappa]^2}|W_t|$ is nondecreasing in $\kappa$. Parts (d), (e) and (f) do not use $\kappa\ge\kappa_0$ or $\epsilon\le\epsilon_0$, so they hold for every $\kappa>0$ and $\epsilon>0$.
\end{proof}

Lemma \ref{lemma:interp} bounds the interpolation error $\|g_1 - G_1\|_\infty$ through the expected supremum of the increments of $G_1$ over pairs of points that are at most $1/\N$ apart in the sup norm.
\begin{lemma}\label{lemma:interp}
There exists a universal constant $A_0 > 0$ such that for all $\kappa > 0$, $\N \ge 1$ and $\epsilon > 0$ with
\begin{equation}\label{eqn: interp condition}
    \frac{A_0\,\kappa}{\N}\big\{\log (2\sqrt{2}\N)\big\}^{1/2} \le \frac{\epsilon}{2},
\end{equation}
we have
\begin{equation}\label{eqn: grid tail bound}
    \Pi\big(\|g_1 - G_1\|_\infty > \epsilon\big) \le 2\exp\Big(-\frac{\N^2\epsilon^2}{32\,\kappa^2}\Big).
\end{equation}
\end{lemma}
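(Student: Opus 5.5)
The plan is to bound the interpolation error by a modulus of continuity of $G_1$ and then control that modulus with Dudley's entropy bound and the Borell--TIS inequality. Since $\phi_i\ge0$ and $\sum_i\phi_i(z)=1$ on $[0,1]$, for any $z=(z_1,z_2)\in[0,1]^2$ we can write
\begin{equation*}
g_1(z)-G_1(z)=\sum_{i,j}\phi_i(z_1)\phi_j(z_2)\{G_1(u_i,u_j)-G_1(z)\}.
\end{equation*}
Only the (at most four) grid points with $|u_i-z_1|\le1/\N$ and $|u_j-z_2|\le1/\N$ carry nonzero weight. Hence
\begin{equation*}
\|g_1-G_1\|_\infty\le M:=\sup\{|G_1(z)-G_1(z')|: z,z'\in[0,1]^2,\ \|z-z'\|_\infty\le 1/\N\}.
\end{equation*}
It therefore suffices to show $\Pi(M>\epsilon)\le 2\exp\{-\N^2\epsilon^2/(32\kappa^2)\}$ under \eqref{eqn: interp condition}.

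\emph{Step 1 (variance of increments).} For $\|z-z'\|_\infty\le1/\N$ we have $\|z-z'\|_2\le\sqrt2/\N$. Using $1-e^{-x}\le x$,
\begin{equation*}
\mathrm{Var}\{G_1(z)-G_1(z')\}=2\{1-e^{-\kappa^2\|z-z'\|_2^2}\}\le 2\kappa^2\|z-z'\|_2^2\le 4\kappa^2/\N^2 .
\end{equation*}
So $\sigma^2:=\sup\mathrm{Var}\le 4\kappa^2/\N^2$. Consider the centered Gaussian process $X_{(z,z')}=G_1(z)-G_1(z')$, indexed by the compact set of such pairs. It has continuous paths, and $M=\sup|X|$. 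By symmetry of $X$, the Borell--TIS inequality applied to $\sup X$ and $\sup(-X)$ gives
\begin{equation*}
\Pi(M>\bbE M+t)\le 2e^{-t^2/(2\sigma^2)}\le 2\exp\{-\N^2t^2/(8\kappa^2)\}.
\end{equation*}
Taking $t=\epsilon/2$ yields exactly the exponent $\N^2\epsilon^2/(32\kappa^2)$. It remains to show $\bbE M\le\epsilon/2$, which \eqref{eqn: interp condition} provides once we prove $\bbE M\le A_0(\kappa/\N)\{\log(2\sqrt2\N)\}^{1/2}$.

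\emph{Step 2 (expected modulus, the main obstacle).} The bound must hold uniformly in $\kappa>0$ with a universal constant, so we use Dudley's entropy integral for $X$. The canonical metric satisfies $d(z,z')\le\sqrt2\,\kappa\|z-z'\|_2$, and hence $d_X((z_1,z_1'),(z_2,z_2'))\le\sqrt2\kappa(\|z_1-z_2\|_2+\|z_1'-z_2'\|_2)$. The index set lies in $[0,1]^4$, and its $d_X$-diameter is at most $2\sigma\le4\kappa/\N$. The covering number at radius $\eta$ is at most about $(C\kappa/\eta)^4$. Dudley's bound then gives
\begin{equation*}
\bbE M\lesssim\int_0^{4\kappa/\N}\{\log(C\kappa/\eta)\}^{1/2}\,d\eta .
\end{equation*}
Substituting $\eta=(\kappa/\N)s$ turns this into $(\kappa/\N)\int_0^4\{\log(C\N/s)\}^{1/2}ds\lesssim(\kappa/\N)\{\log(2\sqrt2\N)\}^{1/2}$, since $\log(2\sqrt2\N)\ge\log(2\sqrt2)>0$ for $\N\ge1$. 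The care needed is to check that all constants are free of $\kappa$ and $\N$, and that the small-$\N$ cases are absorbed into $A_0$. An equivalent route is to cover the pairs at scale $1/\N$ on a grid of $O(\N^2)$ blocks and apply chaining within each block with a union bound, which also explains the $\log\N$ factor. Combining Steps 1 and 2 proves \eqref{eqn: grid tail bound}.
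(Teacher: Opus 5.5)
Your proposal is correct and follows essentially the same route as the paper. You bound $\|g_1-G_1\|_\infty$ by the modulus of $G_1$ over pairs with $\|z-z'\|_\infty\le 1/\N$, control its expectation by Dudley's entropy integral for the increment process (diameter at most $4\kappa/\N$, covering numbers polynomial in $\kappa/\eta$), and finish with Borell's inequality using $\sigma^2\le 4\kappa^2/\N^2$ and deviation $\epsilon/2$. The remaining differences are cosmetic: the paper bounds the pair-process covering numbers by $\mathcal{C}_G(\epsilon/4)^2$ and evaluates the integral with an explicit incomplete-gamma estimate (giving $A_0=12A_2$), whereas you cover $[0,1]^4$ directly and substitute $\eta=(\kappa/\N)s$.
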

\begin{proof}
Let $\mc{P}_\N = \{(z,z') \in [0, 1]^{2} \times [0, 1]^2: \|z-z'\|_{\infty} \le 1/\N\}$ and $\omega_\N = \sup_{(z,z') \in \mc{P}_\N} |G_1(z) - G_1(z')|$. We show that $\|g_1 - G_1\|_\infty \le \omega_\N$, bound $\bbE\,\omega_\N$, and then apply Borell's inequality.

The products $\phi_i(z_1)\phi_j(z_2)$ are nonnegative, sum to one on $[0,1]^2$, and vanish unless $\|(z_1,z_2) - (u_i,u_j)\|_\infty < \Delta_\N$. Hence
\begin{equation*}
    \big\|g_1 - G_1\big\|_\infty = \bigg\|\sum^\N_{i,j=0} \big\{G_1(u_i,u_j) - G_1(\cdot)\big\} \phi_i(\cdot)\,\phi_j(\cdot) \bigg\|_\infty \le \omega_\N.
\end{equation*}

To bound $\bbE\,\omega_\N$, let $d_G(z,z') = \{\bbE(G_1(z) - G_1(z'))^2\}^{1/2}$ be the canonical metric of $G_1$. Since $1 - e^{-x} \le x$ for $x \ge 0$,
\begin{equation}\label{eqn: d and euclidean norm}
    d_G^2(z,z') = 2\big\{1-\exp(-\kappa^2\|z-z'\|_2^2)\big\} \leq 2\kappa^2 \|z-z'\|_2^2.
\end{equation}
Consider the centered Gaussian process $Y_{(z,z')} = G_1(z) - G_1(z')$ on $\mc{P}_\N$. Since $\mc{P}_\N$ is symmetric in $(z,z')$ and contains $(z,z)$, $\omega_\N = \sup_{\mc{P}_\N} Y_{(z,z')}$. Its canonical metric $d_Y$ satisfies $d_Y\big((z,z'),(\tilde z,\tilde z')\big) \le d_G(z,\tilde z) + d_G(z',\tilde z')$. By \eqref{eqn: d and euclidean norm} and $\|z-z'\|_2 \le \sqrt{2} \|z-z'\|_\infty$, we have $\{\bbE Y^2_{(z,z')}\}^{1/2} = d_G(z,z') \le 2\kappa/\N$ on $\mc{P}_\N$. So the $d_Y$ diameter of $\mc{P}_\N$ is at most $4\kappa/\N$.

Let $\mc{C}_Y(\epsilon)$ be the minimal number of $d_Y$ balls of radius $\epsilon$ covering $\mc{P}_\N$, and $\mc{C}_G(r)$ the minimal number of $d_G$ balls of radius $r$ covering $[0,1]^2$. By Corollary 4.15 of \cite{adler1990introduction}, there exists a universal constant $A_2$ such that
\begin{equation}\label{eqn: oscilation upper bound using integral}
\bbE\,\omega_\N \le A_2 \int_0^{\infty} \big(\log \mc{C}_Y(\epsilon)\big)^{1/2} d\epsilon = A_2 \int_0^{4\kappa/\N} \big(\log \mc{C}_Y(\epsilon)\big)^{1/2} d\epsilon,
\end{equation}
where the equality holds because $\mc{C}_Y(\epsilon) = 1$ for $\epsilon \ge 4\kappa/\N$. If $\{z_k\}$ are the centers of $\mc{C}_G(\epsilon/4)$ balls covering $[0,1]^2$, then each $d_Y$ ball of radius $\epsilon/2$ centered at $(z_k, z_l)$ that intersects $\mc{P}_\N$ is contained in a $d_Y$ ball of radius $\epsilon$ centered at a point of $\mc{P}_\N$. So $\mc{C}_Y(\epsilon) \le \mc{C}_G(\epsilon/4)^2$. By \eqref{eqn: d and euclidean norm}, a Euclidean ball of radius $r/(\sqrt{2}\kappa)$ is contained in a $d_G$ ball of radius $r$. Each square of side $r/\kappa$ lies in such a Euclidean ball, and $[0,1]^2$ is covered by at most $(1 + \kappa/r)^2$ of them. Hence $\mc{C}_G(r) \le (1 + \kappa/r)^2 \le (2\sqrt{2}\kappa/r)^2$ for $r \le \sqrt{2}\kappa$. Therefore, for $\epsilon \le 4\kappa/\N$,
\begin{equation*}
    \log \mc{C}_Y(\epsilon) \le 2\log \mc{C}_G(\epsilon/4) \le 4\log\frac{8\sqrt{2}\kappa}{\epsilon}.
\end{equation*}
For $A > 0$ and $0 < a \le A/e$, the substitution $\epsilon = Ae^{-y}$ and integration by parts give
\begin{equation}\label{eqn: incomplete gamma function form}
    \int_0^a \sqrt{\log\frac{A}{\epsilon}}\, d\epsilon = A\int_{x}^\infty y^{1/2}e^{-y}\, dy = A\Big\{x^{1/2}e^{-x} + \frac{1}{2}\int_x^\infty y^{-1/2}e^{-y}\, dy\Big\} \le \frac{3}{2}\,a\,\sqrt{\log\frac{A}{a}},
\end{equation}
where $x = \log(A/a) \ge 1$ and we used $\int_x^\infty y^{-1/2}e^{-y}\,dy \le x^{-1/2}e^{-x} \le x^{1/2}e^{-x}$. We apply \eqref{eqn: incomplete gamma function form} with $A = 8\sqrt{2}\kappa$ and $a = 4\kappa/\N$, for which $A/a = 2\sqrt{2}\N \ge e$. Then \eqref{eqn: oscilation upper bound using integral} yields
\begin{equation*}
\bbE\,\omega_\N \le 2A_2 \cdot \frac{3}{2}\cdot\frac{4\kappa}{\N}\big\{\log (2\sqrt{2}\N)\big\}^{1/2} = \frac{A_0\,\kappa}{\N}\big\{\log (2\sqrt{2}\N)\big\}^{1/2}
\end{equation*}
with $A_0 = 12A_2$.

Under \eqref{eqn: interp condition}, $\bbE\,\omega_\N \le \epsilon/2$. The variance of $Y_{(z,z')}$ on $\mc{P}_\N$ is at most $\sigma^2_\N := 4\kappa^2/\N^2$. By (2.6) in Theorem 2.1 of \cite{adler1990introduction},
\begin{equation*}
    \Pi(\omega_\N > \epsilon) \le 2\exp\Big\{-\frac{(\epsilon - \bbE\,\omega_\N)^2}{2\sigma^2_\N}\Big\} \le 2\exp\Big(-\frac{\N^2\epsilon^2}{32\,\kappa^2}\Big).
\end{equation*}
Since $\|g_1 - G_1\|_\infty \le \omega_\N$, this proves \eqref{eqn: grid tail bound}.
\end{proof}

\begin{lemma}\label{corr:sub_gaussian_1}
Suppose $Z\sim \cN(0,\Sigma)$ in $\bbR^d$, and let $\lambda_{\Sigma}$ and $\tr(\Sigma)$ denote the largest eigenvalue and the trace of $\Sigma$. Then for any $t > \sqrt{\tr(\Sigma)}$,
$$
P(\norm{Z}>t)\leq 2 \exp{\bigg(-\frac{\big(t - \sqrt{\tr(\Sigma)}\big)^2}{2\lambda_{\Sigma}}\bigg)}.
$$
\end{lemma}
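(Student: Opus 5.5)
The plan is to treat $\norm{Z}$ as a Lipschitz function of a standard Gaussian vector and combine Gaussian concentration with a bound on its mean. Write $Z=\Sigma^{1/2}W$ with $W\sim\cN(0,\mathrm{I}_d)$ and set $F(w)=\norm{\Sigma^{1/2}w}$. For any $w,w'$,
\begin{equation*}
|F(w)-F(w')|\le\norm{\Sigma^{1/2}(w-w')}\le\lambda_\Sigma^{1/2}\norm{w-w'},
\end{equation*}
so $F$ is Lipschitz with constant $L=\lambda_\Sigma^{1/2}$.

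Next comes the mean bound. By Jensen's inequality,
\begin{equation*}
\bbE F(W)\le\{\bbE\norm{Z}^2\}^{1/2}=\sqrt{\tr(\Sigma)}.
\end{equation*}

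The third step is concentration. Invoke the Gaussian concentration inequality for Lipschitz functions (Tsirelson--Ibragimov--Sudakov; equivalently Borell's inequality, (2.6) in Theorem 2.1 of \cite{adler1990introduction}, already used in the proof of Lemma \ref{lemma:interp}). It gives, for $u>0$,
\begin{equation*}
P\big(|F(W)-\bbE F(W)|>u\big)\le2\exp\{-u^2/(2L^2)\}.
\end{equation*}
To put this in the form of Lemma \ref{lemma:interp}, one can instead view $\norm{Z}=\sup_{\|a\|_2\le1}a^\T Z$ as the supremum of a centered Gaussian process. Its maximal variance is $\sup_{\|a\|_2\le 1}a^\T\Sigma a=\lambda_\Sigma$, and the supremum can be taken over a countable dense subset of the unit ball, so Borell's inequality applies directly.

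Finally, for $t>\sqrt{\tr(\Sigma)}\ge\bbE\norm{Z}$, put $u=t-\bbE\norm{Z}\ge t-\sqrt{\tr(\Sigma)}>0$. Then
\begin{equation*}
P(\norm{Z}>t)\le P\big(\norm{Z}-\bbE\norm{Z}>u\big)\le2\exp\Big(-\frac{(t-\sqrt{\tr(\Sigma)})^2}{2\lambda_\Sigma}\Big),
\end{equation*}
where the last inequality uses that the exponential bound is monotone in $u$.

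None of these steps is a real obstacle. The only care needed is to use the version of the concentration inequality that is centered at the mean rather than the median, and to note that replacing the mean by the larger $\sqrt{\tr(\Sigma)}$ only weakens the bound.
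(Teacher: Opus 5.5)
Your proof is correct and follows essentially the same route as the paper's. Both write $\norm{Z}$ as the supremum of the centered Gaussian process $u^\T Z$ over the unit sphere (your ball gives the same supremum) with maximal variance $\lambda_\Sigma$, bound $\bbE\norm{Z}\le\sqrt{\tr(\Sigma)}$ by Jensen, and apply the mean-centered Borell inequality, (2.6) of Theorem 2.1 in \cite{adler1990introduction}; your Lipschitz (Tsirelson--Ibragimov--Sudakov) framing is just an equivalent form of that inequality.
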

\begin{proof}
 Consider the centered Gaussian process $X_u = u^\T Z$ indexed by the unit sphere $\{u \in \bbR^d: \norm{u} = 1\}$. Then $\sup_u X_u = \norm{Z}$, $\sup_u \bbE X^2_u = \sup_u u^\T\Sigma u = \lambda_\Sigma$, and $\bbE\norm{Z} \le (\bbE\norm{Z}^2)^{1/2} = \sqrt{\tr(\Sigma)}$. Hence, by (2.6) in Theorem 2.1 of \cite{adler1990introduction}, for $t > \sqrt{\tr(\Sigma)}$,
\begin{align*}
    P\big(\norm{Z}>t\big) \leq 2\exp\bigg(-\frac{\big(t - \bbE\norm{Z}\big)^2}{2\lambda_\Sigma}\bigg) \le 2 \exp{\bigg(-\frac{\big(t - \sqrt{\tr(\Sigma)}\big)^2}{2\lambda_{\Sigma}}\bigg)}.
\end{align*}
\end{proof}

\begin{lemma}\label{lemma:kappa_tail}
For all large $r$, $\Pi(\kappa_l > r) \le \exp\{-(D_2/2)\,r^2(\log r)^{4}\}$ $(l=1,2)$.
\end{lemma}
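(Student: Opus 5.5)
We bound the tail integral $\Pi(\kappa_l>r)=\int_r^\infty \pi_\kappa(\kappa)\,d\kappa$ directly from the upper bound on $\pi_\kappa$ assumed in Theorem~\ref{thm:post_cont}. For $r$ large enough that this bound holds on $[r,\infty)$,
\begin{equation*}
\Pi(\kappa_l>r)\le C_2\int_r^\infty \kappa^{p_\kappa}\exp(-D_2\kappa^2\log^4\kappa)\,d\kappa .
\end{equation*}
The plan is to split the exponent in half. Half of it will absorb the polynomial factor and make the remaining integral summable. The other half, evaluated at the lower endpoint $r$, will produce the claimed bound $\exp\{-(D_2/2)r^2(\log r)^4\}$.

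Concretely, write $h(\kappa)=\kappa^2\log^4\kappa$, which is increasing for $\kappa>1$. For $\kappa\ge r$ we then have $\exp\{-(D_2/2)h(\kappa)\}\le\exp\{-(D_2/2)h(r)\}$, so
\begin{equation*}
\Pi(\kappa_l>r)\le C_2\exp\{-(D_2/2)h(r)\}\int_r^\infty \kappa^{p_\kappa}\exp\{-(D_2/2)h(\kappa)\}\,d\kappa .
\end{equation*}
For $\kappa\ge e$ we have $h(\kappa)\ge\kappa^2$. The remaining integral is therefore at most $\int_r^\infty \kappa^{p_\kappa}e^{-D_2\kappa^2/2}\,d\kappa$. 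This tends to $0$ as $r\to\infty$, since the Gaussian-type tail of a polynomial times $e^{-c\kappa^2}$ is integrable.

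Multiplied by $C_2$, the remaining integral is at most $1$ for all large $r$, which gives the stated bound. The statement holds for both $l=1,2$ because $\kappa_1$ and $\kappa_2$ share the same density $\pi_\kappa$.

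There is no genuine obstacle here. The only care needed is to take $r$ large enough for three things at once: the density bound holds on $[r,\infty)$, $h$ is increasing there with $r\ge e$, and the leftover constant is at most $1$.
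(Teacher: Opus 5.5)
Your proof is correct, but it reaches the bound by a more elementary route than the paper. The paper does not integrate by hand. It notes that $\pi_\kappa$ satisfies condition (3.4) of \cite{van2009adaptive} with $d=2$ and $q=4$, and invokes their Lemma 4.9 to get the sharper intermediate estimate $\Pi(\kappa_l>r)\le C_2 r^{p_\kappa-1}\exp\{-D_2 r^2(\log r)^4\}/\{D_2(\log r)^4\}$. Only then does it give up half of the exponent, to absorb the prefactor $C_2 r^{p_\kappa-1}/\{D_2(\log r)^4\}$.

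You give up half of the exponent inside the integral instead. Monotonicity of $h(\kappa)=\kappa^2\log^4\kappa$ on $(1,\infty)$ lets you pull out $\exp\{-(D_2/2)h(r)\}$. After that you only need $\int_r^\infty \kappa^{p_\kappa}e^{-D_2\kappa^2/2}\,d\kappa\to 0$.

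Your version is self-contained and uses nothing beyond monotonicity and integrability. The same splitting into $\eta D_2$ and $(1-\eta)D_2$ gives the bound with $\eta D_2$ in the exponent for any $\eta<1$. The paper's version retains the full constant $D_2$ and an explicit polynomial--logarithmic prefactor before the final absorption. That is more information than the lemma needs, and it comes at the cost of relying on an external result.

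Both arguments use only the upper bound on $\pi_\kappa$. Both also apply verbatim to $l=1,2$, since $\kappa_1,\kappa_2$ share the density $\pi_\kappa$.
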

\begin{proof}
The density $\pi_\kappa$ satisfies (3.4) of \cite{van2009adaptive} with $d=2$ and $q=4$. By Lemma 4.9 of \cite{van2009adaptive}, $\Pi(\kappa_l > r) \le C_2 r^{p_\kappa-1}\exp\{-D_2r^2(\log r)^{4}\}/\{D_2(\log r)^{4}\}$ for all large $r$. The factor in front of the exponential is at most $\exp\{(D_2/2)r^2(\log r)^{4}\}$ for large $r$.
\end{proof}

\subsection{Choice of $\varepsilon_n$, $\kappa_n$ and $\N$}\label{ssec: choice of rates}
Let $\varepsilon_n = n^{-\frac{\alpha}{2\alpha + 2}} (\log n)^{\frac{2\alpha}{\alpha+1}}$ and $\kappa_n = c_\kappa\varepsilon_n^{-1/\alpha} = c_\kappa\, n^{\frac{1}{2\alpha + 2}} (\log n)^{-\frac{2}{\alpha+1}}$ with $c_\kappa = (10C_\alpha)^{1/\alpha}$. Here $C_\alpha$ and $A_1$ are the larger of the constants in Lemma \ref{lemma:gp} for $g^{(0)} = g^{(0)}_1$ and $g^{(0)} = g^{(0)}_2$. The constants $C_1, C_2, D_1, D_2$ and $p_\kappa$ are those of $\pi_\kappa$ in Theorem \ref{thm:post_cont}, and $\N\asymp n$.
The KL condition uses $\kappa_n \le \kappa \le 2\kappa_n$, and the sieve uses $1 \le \kappa \le c_r\kappa_n$ for a constant $c_r$ fixed in Section \ref{ssec: sieve condition}. The rest of the proof uses the following three properties, which hold for all large $n$.
\begin{itemize}
    \item[(P1)] $\varepsilon_n \ge 10\,C_\alpha\kappa^{-\alpha}$ for every $\kappa\ge\kappa_n$. This holds because $\kappa^{-\alpha} \le \kappa_n^{-\alpha} = c_\kappa^{-\alpha}\varepsilon_n$.
    \item[(P2)] $\kappa_n^2(\log n)^4 = c_\kappa^2\,n\varepsilon_n^2$, and hence $\kappa_n^2(\log n)^3 + \log n = o(n\varepsilon_n^2)$. The equality holds because $\kappa_n^2 = c_\kappa^2 n^{\frac{1}{\alpha+1}}(\log n)^{-\frac{4}{\alpha+1}}$ and $n\varepsilon_n^2 = n^{\frac{1}{\alpha+1}}(\log n)^{\frac{4\alpha}{\alpha+1}}$.
    \item[(P3)] $\N$ satisfies
    \begin{equation}\label{eqn: condition on N}
        \frac{A_0\,\kappa}{\N}\big\{\log (2\sqrt{2}\N)\big\}^{1/2} \le \frac{\varepsilon_n}{10}, \qquad \frac{\N^2\varepsilon_n^2}{800\,\kappa^2} \ge A_1\kappa^2\big\{\log(10\kappa/\varepsilon_n)\big\}^3 + \log 4\quad\text{for all } \kappa_n\le\kappa\le2\kappa_n.
    \end{equation}
    Both inequalities become tighter as $\kappa$ grows, so it suffices to check them at $\kappa=2\kappa_n$. The first then holds when $\N \gtrsim \kappa_n(\log n)^{1/2}/\varepsilon_n \asymp n^{1/2}(\log n)^{-3/2}$. The second holds when $\N \gtrsim \kappa_n^2(\log n)^{3/2}/\varepsilon_n \asymp n^{\frac{\alpha+2}{2\alpha + 2}}(\log n)^{-\frac{\alpha+5}{2\alpha+2}}$. Since $\frac{\alpha+2}{2\alpha+2}<1$ for $\alpha>0$, both hold under $\N\asymp n$.
\end{itemize}
Since $\kappa_n \to \infty$ and $\varepsilon_n \to 0$, Lemma \ref{lemma:gp} applies to every $\kappa\ge\kappa_n$ for large $n$. The quantities $\kappa_n/\varepsilon_n$, $\N$, and $Q_n$ and $\mc{T}_n$ below grow at most polynomially in $n$, so their logarithms are $\bigO(\log n)$.

\subsection{Sieve condition}\label{ssec: sieve condition}
Fix a constant $C > 0$. Let $r_n = c_r\kappa_n$, where the constant $c_r \ge 1$ satisfies $D_2c_r^2c_\kappa^2 \ge 2(2\alpha+3)^4C$. Define
\begin{equation*}
    B_n = Q_n r_n\mathbb{H}^{r_n}_1 + \frac{\varepsilon_n}{12}\mathbb{B}_1, \quad Q_n = -2\Phi^{-1}\big(e^{-Cn\varepsilon_n^2}\big), \quad \mc{T}_n = \sqrt{\tr(\Sigma)}\big(1 + \sqrt{2Cn\varepsilon_n^2}\big),
\end{equation*}
\begin{equation*}
    \Theta_n = \{(\xi^{(1)}, \xi^{(2)}, \beta): \xi^{(1)}, \xi^{(2)} \in T(B_n), \|\beta\|_2 \le \mc{T}_n \}, \qquad \mc{F}_n = \{f_\theta: \theta \in \Theta_n\}.
\end{equation*}
Since $\Phi(-t) \le e^{-t^2/2}$ for $t \ge 0$, we have $Q_n \le 2\sqrt{2Cn}\,\varepsilon_n$. By Lemma \ref{lemma:gp} (d), every $g \in B_n$ satisfies $\|g\|_\infty \le Q_nr_n + 1$. As $\|T(g)\|_\infty \le \|g\|_\infty$,
\begin{equation}\label{eqn: sieve xi bound}
    \|\xi^{(1)}\|_\infty \le Q_nr_n + 1 \quad \text{for all } (\xi^{(1)}, \xi^{(2)}, \beta) \in \Theta_n.
\end{equation}
Since $\theta \in \Theta_n$ implies $f_\theta \in \mc{F}_n$,
\begin{equation*}
    \Pi(\mc{F}^c_n) \le \Pi(\Theta_n^c) \leq \Pi(G_1 \notin B_n) + \Pi(G_2 \notin B_n) + \Pi(\|\beta\|_2>\mc{T}_n).
\end{equation*}
Since $\log r_n \ge \log\kappa_n \ge (\log n)/(2\alpha+3)$ for large $n$, Lemma \ref{lemma:kappa_tail}, (P2) and the choice of $c_r$ give $\Pi(\kappa_l > r_n) \le \exp\{-D_2c_r^2\kappa_n^2(\log n)^4/(2(2\alpha+3)^4)\} \le e^{-Cn\varepsilon_n^2}$. Let $1\le\kappa\le r_n$. By Lemma \ref{lemma:gp} (f), $Q_n\mathbb{H}^\kappa_1 \subset (Q_nr_n/\kappa)\mathbb{H}^{r_n}_1 \subset Q_nr_n\mathbb{H}^{r_n}_1$, so $\Pi(G^\kappa \notin B_n) \le \Pi(G^\kappa \notin Q_n\mathbb{H}^\kappa_1 + \frac{\varepsilon_n}{12}\mathbb{B}_1)$. By Lemma \ref{lemma:gp} (f) and (a) and by (P2), $-\log \Pi(\|G^\kappa\|_\infty < \varepsilon_n/12) \le -\log \Pi(\|G^{r_n}\|_\infty < \varepsilon_n/12) \le A_1c_r^2\kappa_n^2\{\log(12c_r\kappa_n/\varepsilon_n)\}^3 \le Cn\varepsilon_n^2$ for large $n$. Hence $\Phi^{-1}(\Pi(\|G^\kappa\|_\infty < \varepsilon_n/12)) \ge -Q_n/2$, and Lemma \ref{lemma:gp} (e) gives $\Pi(G^\kappa \notin B_n) \le 1 - \Phi(Q_n/2) = e^{-Cn\varepsilon_n^2}$. Therefore $\Pi(G_l \notin B_n) \le \Pi(\kappa_l > r_n) + \sup_{1\le\kappa\le r_n}\Pi(G^\kappa \notin B_n) \le 2e^{-Cn\varepsilon_n^2}$ for $l = 1, 2$. For $\beta \sim \cN(0, \Sigma)$ with $\Sigma$ in \eqref{eqn: covariance of beta}, $\lambda_\Sigma \le \tr(\Sigma)$ since $\Sigma$ is positive semidefinite, and Lemma \ref{corr:sub_gaussian_1} gives $\Pi(\|\beta\|_2 > \mc{T}_n) \le 2e^{-Cn\varepsilon_n^2}$. Therefore $\Pi(\mc{F}_n^c) \le 6e^{-Cn\varepsilon_n^2}$. Since $C$ is arbitrary, the same argument with $C+2$ in place of $C$ gives \eqref{eqn: sieve condition} for large $n$.

\subsection{Metric entropy condition}\label{ssec: entropy condition}
Let $\mc{G}_n$ be an $\varepsilon_n/6$ net of $B_n$ with respect to $\|\cdot\|_\infty$, and $\mc{M}_n$ an $\varepsilon_n/\{12\N(Q_nr_n+1)\}$ net of $\{\beta \in \bbR^{p\K}: \|\beta\|_2 \le \mc{T}_n\}$ with respect to $\|\cdot\|_2$. Since $\|T(g) - T(g')\|_\infty \le \|g - g'\|_\infty$, for any $\theta = (\xi^{(1)}, \xi^{(2)}, \beta) \in \Theta_n$ there exists $\tilde\theta = (T(\tilde g_1), T(\tilde g_2), \tilde\beta)$ with $\tilde g_1, \tilde g_2 \in \mc{G}_n$, $\tilde\beta \in \mc{M}_n$ and
\begin{equation*}
    \|\xi^{(1)} - T(\tilde g_1)\|_\infty \le \varepsilon_n/6, \quad \|\xi^{(2)} - T(\tilde g_2)\|_\infty \le \varepsilon_n/6, \quad \|\beta - \tilde\beta\|_2 \le \frac{\varepsilon_n}{12\N(Q_nr_n+1)}.
\end{equation*}
By Lemma \ref{lemma:hellinger} (iii) applied to $(\tilde\theta, \theta)$, and by \eqref{eqn: sieve xi bound}, $\dTheta(\tilde\theta, \theta) \le \varepsilon_n/2$ and thus $d_h(f_{\tilde\theta}, f_\theta) \le (\varepsilon_n/2)\, e^{1/4} < \varepsilon_n$. Using Corollary 4.2.11 of \cite{vershynin2026high} for the $\beta$ part,
\begin{equation*}
    \log \mc{C}(\varepsilon_n,\mc{F}_n, d_h) \le 2\log \mc{C}(\varepsilon_n/6, B_n, \|\cdot\|_\infty) + p\K \log\Big(1 + \frac{24\N(Q_nr_n+1)\mc{T}_n}{\varepsilon_n}\Big).
\end{equation*}
An $\varepsilon_n/12$ net of $Q_nr_n\mathbb{H}^{r_n}_1$ is an $\varepsilon_n/6$ net of $B_n$. So Lemma \ref{lemma:gp} (c) with $\kappa = r_n$ gives
\begin{equation*}
    \log \mc{C}(\varepsilon_n/6, B_n, \|\cdot\|_\infty) \le \log \mc{C}\big(\varepsilon_n/(12Q_nr_n), \mathbb{H}^{r_n}_1, \|\cdot\|_\infty\big) \le A_1\,r_n^2\big\{\log(12Q_nr_n/\varepsilon_n)\big\}^3.
\end{equation*}
Hence $\log \mc{C}(\varepsilon_n,\mc{F}_n, d_h) \lesssim c_r^2\kappa_n^2(\log n)^3 + \log n$, which is $o(n\varepsilon_n^2)$ by (P2). This gives \eqref{eqn: entropy condition} for large $n$.

\subsection{KL condition}\label{ssec: KL condition}
By Lemma \ref{lemma:hellinger} (ii), applied for each $(\bx, \s)$ and then integrated,
\begin{align}\label{eqn: KL sup bound}
    \Big\{\theta: \sup_{y, \bx, \s}\big|v_\theta - v_0\big| \le \varepsilon_n\Big\} \subset \Big\{\theta: \mrm{KL}(f_0,f_\theta)\leq C_0\varepsilon^2_n,\, \mrm{V}(f_0,f_\theta)\leq C_0\varepsilon^2_n\Big\}.
\end{align}
Hence \eqref{eqn: KL condition} with $A = C_0$ follows from a lower bound on the prior probability of the event on the left hand side.

Since $\|\bx\|_2\le1$ and $\|\beta^{(0)}(\s)\|_2\le1$, $w^{(0)}_{\bx,\s} \in [0,1]$, so $w^{(0)}_{\bx,\s} = c(w^{(0)}_{\bx,\s})$. Since $c$ is $1$-Lipschitz, $|w_{\bx,\s} - w^{(0)}_{\bx,\s}| \le \|\beta - \beta^{(0)}\|_2/2$. By Lemma \ref{lemma:hellinger} (i),
\begin{align*}
    \big|v_\theta - v_0\big| &\le \big|g_1(y, w_{\bx,\s}) - g_1(y, w^{(0)}_{\bx,\s})\big| + \big|g_1(y, w^{(0)}_{\bx,\s}) - g^{(0)}_1(y, w^{(0)}_{\bx,\s})\big| + \big|g_2(y, \s) - g^{(0)}_2(y, \s)\big| \\
    &\le 2\N\|\xi^{(1)}\|_\infty\|\beta - \beta^{(0)}\|_2 + \sum_{l=1}^2\Big(\big\|g_l - G_l\big\|_\infty + \big\|G_l - g^{(0)}_l\big\|_\infty\Big).
\end{align*}
Define the events
\begin{equation*}
    E_l = \Big\{\big\|G_l - g^{(0)}_l\big\|_\infty < \varepsilon_n/5,\ \big\|g_l - G_l\big\|_\infty \le \varepsilon_n/5\Big\} \ (l = 1, 2), \qquad
    E_\beta = \Big\{\|\beta - \beta^{(0)}\|_2 \le \frac{\varepsilon_n}{10\N(\|g^{(0)}_1\|_\infty + 1)}\Big\}.
\end{equation*}
On $E_1$, $\|\xi^{(1)}\|_\infty \le \|G_1\|_\infty \le \|g^{(0)}_1\|_\infty + 1$. So $\sup|v_\theta - v_0| \le \varepsilon_n$ on $E_1\cap E_2 \cap E_\beta$. Since $(G_1,\kappa_1)$, $(G_2,\kappa_2)$ and $\beta$ are independent a priori,
\begin{equation}\label{eqn: KL product}
    \Pi\Big(\mrm{KL}(f_0,f_\theta)\leq C_0\varepsilon^2_n,\, \mrm{V}(f_0,f_\theta)\leq C_0\varepsilon^2_n\Big) \ge \Pi(E_1)\,\Pi(E_2)\,\Pi(E_\beta).
\end{equation}

Since $\Sigma$ is positive definite, the density of $\cN(0,\Sigma)$ is bounded below by a positive constant on $\{\beta: \|\beta - \beta^{(0)}\|_2 \le 1\}$. The radius of $E_\beta$ decays polynomially in $n$. Hence $\Pi(E_\beta) \ge \exp(-C_\beta \log n)$ for some constant $C_\beta > 0$.

For $E_l$, $\Pi(E_l) \ge \Pi(\kappa_l\in[\kappa_n,2\kappa_n])\inf_{\kappa_n\le\kappa\le2\kappa_n}\Pi(E_l\mid\kappa_l=\kappa)$. Fix $\kappa\in[\kappa_n,2\kappa_n]$ and write $g^\kappa$ for the interpolation of $G^\kappa$. Then
\begin{align}
     \Pi(E_l\mid\kappa_l=\kappa) &\geq \Pi\big( \|G^\kappa - g^{(0)}_l\|_\infty < \varepsilon_n/5\big) - \Pi\big(\|g^\kappa - G^\kappa\|_\infty > \varepsilon_n/5\big) \label{eqn: g grid sup norm upper bound}.
\end{align}
By (P1) and Lemma \ref{lemma:gp} (b) with $\epsilon = \varepsilon_n/5$, the first term is at least $\exp[-A_1\kappa^2\{\log(5\kappa/\varepsilon_n)\}^3]$. By (P3) and Lemma \ref{lemma:interp} with $\epsilon = \varepsilon_n/5$, the second term is at most $2\exp\{-\N^2\varepsilon_n^2/(800\kappa^2)\} \le \frac{1}{2}\exp[-A_1\kappa^2\{\log(10\kappa/\varepsilon_n)\}^3]$. Hence 
\begin{equation*}
    \Pi(E_l\mid\kappa_l=\kappa) \ge \frac{1}{2}\exp[-A_1\kappa^2\{\log(10\kappa/\varepsilon_n)\}^3] \ge \frac{1}{2}\exp\{-4A_1\kappa_n^2(\log n)^3\},
\end{equation*}    
since $\kappa \le 2\kappa_n$ and $20\kappa_n/\varepsilon_n \le n$ for large $n$. For large $n$, $\pi_\kappa(\kappa) \ge C_1\kappa_n^{p_\kappa}\exp\{-4D_1\kappa_n^2(\log 2\kappa_n)^4\}$ on $[\kappa_n,2\kappa_n]$, so $\Pi(\kappa_l\in[\kappa_n,2\kappa_n]) \ge \exp\{-5D_1\kappa_n^2(\log n)^4\}$. Hence $\Pi(E_l) \ge \frac{1}{2}\exp\{-4A_1\kappa_n^2(\log n)^3 - 5D_1\kappa_n^2(\log n)^4\}$.

Plugging these bounds into \eqref{eqn: KL product} and using (P2),
\begin{align*}
    \Pi\Big(\mrm{KL}(f_0,f_\theta)\leq C_0\varepsilon^2_n,\, \mrm{V}(f_0,f_\theta)\leq C_0\varepsilon^2_n\Big) &\ge \frac{1}{4}\exp\Big[-8A_1\kappa_n^2(\log n)^3 - 10D_1\kappa_n^2(\log n)^4 - C_\beta\log n\Big] \\
    &\ge \exp\p{-C'n\varepsilon_n^2}
\end{align*}
with $C' = 10D_1c_\kappa^2 + 1$ and all large $n$. This verifies \eqref{eqn: KL condition} and completes the proof of Theorem \ref{thm:post_cont}.

\subsection{A growing number of covariates}\label{ssec: growing p}

Theorem \ref{thm:post_cont} fixes $p$ and $\K$. The number of covariates enters the proof only through $\beta\in\bbR^{p\K}$, and the corollary below lets $p=p_n$ grow with $n$. The covariates are bounded coordinatewise and the true index coefficients have unit $\ell_1$ norm, so the true index $\bx^\T\beta^{(0)}(\s)$ stays in $[-1,1]$ for every $p_n$, and the scale of the index is absorbed into $g^{(0)}_1$.

\begin{corollary}\label{cor:growing_p}
Let $p=p_n$ and let $\mc{H}_0$ be supported on $\{\bx\in\bbR^{p_n}:\|\bx\|_\infty\le1\}$. Suppose that $\log\Upsilon_0(y,\bx,\s)=g^{(0)}_1(y,\bx^\T\beta^{(0)}(\s))+g^{(0)}_2(y,\s)$, where $g^{(0)}_1$ is $\alpha$-H\"older on $[0,1]\times[-1,1]$, $g^{(0)}_2$ is $\alpha$-H\"older, neither depends on $n$, and $\max_{\s\in\mc{S}}\|\beta^{(0)}(\s)\|_1=1$. Under the prior of Theorem \ref{thm:post_cont} with $\K$ fixed, if $p_n\K\log n=o(n\varepsilon_n^2)$, then \eqref{eqn: posterior contraction} holds with the same $\varepsilon_n$.
\end{corollary}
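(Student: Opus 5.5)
The plan is to rerun the proof of Theorem \ref{thm:post_cont} and track every place where $p$ appears. Dimension enters in three places: the reduction of the true model to the unit-index form, the sieve and entropy bounds for $\beta$, and the prior mass of $E_\beta$. Everything about $G_1,G_2,\kappa_l$ and $\N$ (Lemmas \ref{lemma:gp}, \ref{lemma:interp}, \ref{lemma:kappa_tail}, (P1)--(P3)) is free of $p$ and carries over verbatim.

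\emph{Reduction.} Since $\|\bx\|_\infty\le1$ and $\|\beta^{(0)}(\s)\|_1\le1$, we have $|\bx^\T\beta^{(0)}(\s)|\le1$, so $w^{(0)}_{\bx,\s}=(\bx^\T\beta^{(0)}(\s)+1)/2\in[0,1]$. Then $\tilde g^{(0)}_1(y,w)=g^{(0)}_1(y,2w-1)$ is $\alpha$-H\"older on $[0,1]^2$ with constants independent of $n$. Hence the constants $C_\alpha,A_1$ of Lemma \ref{lemma:gp} do not depend on $n$.

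The model's index, however, is $c((\bx^\T\beta(\s)+1)/2)$ with $\|\bx\|_2$ possibly as large as $\sqrt{p_n}$. So I would redo Lemma \ref{lemma:hellinger} (iii) with the bound
\[
|\tilde w-w|\le \|\bx\|_2\|\tilde\beta-\beta\|_2/2\le \sqrt{p_n}\,\|\tilde\beta-\beta\|_2/2 .
\]
This replaces $2\N$ by $2\N\sqrt{p_n}$ in $\dTheta$. The same factor appears in the KL bound on $|w_{\bx,\s}-w^{(0)}_{\bx,\s}|$.

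\emph{Sieve and entropy.} Keep $\mc{T}_n=\sqrt{\tr(\Sigma)}(1+\sqrt{2Cn\varepsilon_n^2})$ with $\tr(\Sigma)=\sigma_\beta^2p_n\K$. Lemma \ref{corr:sub_gaussian_1} still gives $\Pi(\|\beta\|_2>\mc{T}_n)\le 2e^{-Cn\varepsilon_n^2}$. For the net on $\beta$, use radius $\varepsilon_n/\{12\N\sqrt{p_n}(Q_nr_n+1)\}$. The entropy contribution becomes
\[
p_n\K\log\bigl(1+24\N\sqrt{p_n}(Q_nr_n+1)\mc{T}_n/\varepsilon_n\bigr).
\]
This is $\bigO(p_n\K\log n)$ because $p_n\le n$, so every quantity inside the logarithm is polynomial in $n$. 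By hypothesis it is therefore $o(n\varepsilon_n^2)$, and the Gaussian-process part is unchanged.

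\emph{KL condition (main obstacle).} The expected difficulty is the lower bound on $\Pi(E_\beta)$, where
\[
E_\beta=\{\|\beta-\beta^{(0)}\|_2\le \rho_n\},\qquad \rho_n=\varepsilon_n/\{10\N\sqrt{p_n}(\|g^{(0)}_1\|_\infty+1)\},
\]
is a ball in growing dimension $d=p_n\K$. The earlier ``density bounded below'' argument no longer gives a constant, so I would write the bound explicitly:
\[
\Pi(E_\beta)\ge \exp\{-\beta^{(0)\T}\Sigma^{-1}\beta^{(0)}\}\,\Pi(\|\beta\|_2\le\rho_n/\sqrt2)
\]
by Anderson's inequality and the Cameron--Martin shift. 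With $\Sigma=\sigma_\beta^2\Gamma_\beta\otimes\mathrm{I}_{p}$ and $\Gamma_\beta$ a fixed $\K\times\K$ positive definite matrix, the first factor is at least $\exp\{-\|\beta^{(0)}\|_2^2/(\sigma_\beta^2\lambda_{\min}(\Gamma_\beta))\}$. Since $\|\beta^{(0)}\|_2\le\|\beta^{(0)}\|_1$ summed over $\s$ gives $\|\beta^{(0)}\|_2^2\le\K$, this factor is bounded below by a constant. For the small ball, bound the Gaussian density below on the ball by $(2\pi\lambda_{\max})^{-d/2}e^{-\rho_n^2/(2\lambda_{\min})}$ and multiply by the ball volume $\pi^{d/2}\rho_n^d/\Gamma(d/2+1)$. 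This gives
\[
-\log\Pi(\|\beta\|_2\le\rho_n/\sqrt2)\lesssim d\log(d/\rho_n)\lesssim p_n\K\log n,
\]
again using $p_n\le n$ and that $\rho_n^{-1}$ is polynomial in $n$. This is $o(n\varepsilon_n^2)$ by hypothesis, so the product bound \eqref{eqn: KL product} gives $\exp(-C'n\varepsilon_n^2)$ as before. The final step is to invoke Theorem 2.1 of \cite{ghosal2000convergence} exactly as in Section \ref{ssec: reduction}, since $\K$ is fixed.
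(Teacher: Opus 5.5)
Your proposal is correct and follows the paper's proof essentially step for step: the same reduction with $B=1$, the same extra factor $\sqrt{p_n}$ in $\dTheta$ and in the bound on $|w_{\bx,\s}-w^{(0)}_{\bx,\s}|$, and the same sieve and net radii. It also uses the same shift argument, which reduces $\Pi(E_\beta)$ to a centered small-ball probability at the cost of a factor that stays bounded because $\beta^{(0)\T}\Sigma^{-1}\beta^{(0)}\le\K/\lambda_{\min}$. The paper uses Lemma 5.2 of van der Vaart and van Zanten, with factor $e^{-\beta^{(0)\T}\Sigma^{-1}\beta^{(0)}/2}$ at the same radius. Your weaker inequality follows from that lemma. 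Note that this lower bound comes from the Cameron--Martin formula plus symmetry, not from Anderson's inequality, which gives the opposite direction.

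The only real variation is the centered small-ball estimate. You bound the Gaussian density below on the ball and multiply by its volume. The paper instead uses $\|\beta\|_2\le(\lambda_{\max}d)^{1/2}\|Z\|_\infty$ and independence of the coordinates. Both give $-\log\Pi(E_\beta)\lesssim p_n\K\log n$.
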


Since $n\varepsilon_n^2=n^{\frac{1}{\alpha+1}}(\log n)^{\frac{4\alpha}{\alpha+1}}$, the condition is
\begin{equation*}
    p_n=o\big(n^{\frac{1}{\alpha+1}}(\log n)^{\frac{3\alpha-1}{\alpha+1}}\big),
\end{equation*}
for example $p_n=o\{n^{1/3}(\log n)^{5/3}\}$ when $\alpha=2$.

\begin{proof}
We follow the proof of Theorem \ref{thm:post_cont} and give only the steps that change. Since $\lvert\bx^\T\beta^{(0)}(\s)\rvert\le\|\bx\|_\infty\|\beta^{(0)}(\s)\|_1\le1$, the rescaling at the start of Supplement \ref{sec: proof of theorem} applies with $B=1$, so $w^{(0)}_{\bx,\s}=(\bx^\T\beta^{(0)}(\s)+1)/2\in[0,1]$, the rescaled $\tilde g^{(0)}_1(y,w)=g^{(0)}_1(y,2w-1)$ does not depend on $n$, and $\|\beta^{(0)}(\s)\|_2\le\|\beta^{(0)}(\s)\|_1\le1$. Hence Lemmas \ref{lemma:gp} and \ref{lemma:interp} apply with the same constants, and the choice of $\varepsilon_n$, $\kappa_n$ and $\N$ in Section \ref{ssec: choice of rates} is unchanged.

For $\|\bx\|_\infty\le1$, $\|\bx\|_2\le\sqrt{p_n}$, so $\lvert\tilde w-w\rvert\le\sqrt{p_n}\,\|\tilde\beta-\beta\|_2/2$ in the proof of Lemma \ref{lemma:hellinger} (iii). That lemma then holds with $\dTheta$ replaced by $\|\tilde\xi^{(1)}-\xi^{(1)}\|_\infty+\|\tilde\xi^{(2)}-\xi^{(2)}\|_\infty+2\N\sqrt{p_n}\,\|\xi^{(1)}\|_\infty\|\tilde\beta-\beta\|_2$, and the supremum is over $\|\bx\|_\infty\le1$.

The sieve condition holds as before, since Lemma \ref{corr:sub_gaussian_1} and $\lambda_\Sigma\le\tr(\Sigma)$ hold in any dimension. Here $\tr(\Sigma)=\sigma^2_\beta\K p_n$, and $p_n\le n$ for large $n$, so $\mc{T}_n$ still grows polynomially in $n$.

For the metric entropy condition, $\mc{M}_n$ is taken as an $\varepsilon_n/\{12\N\sqrt{p_n}(Q_nr_n+1)\}$ net, so that the modified $\dTheta(\tilde\theta,\theta)\le\varepsilon_n/2$ as before. The $\beta$ part of the bound becomes $p_n\K\log\{1+24\N\sqrt{p_n}(Q_nr_n+1)\mc{T}_n/\varepsilon_n\}$, and the logarithm in it is $\bigO(\log n)$. Hence $\log\mc{C}(\varepsilon_n,\mc{F}_n,d_h)\lesssim c_r^2\kappa_n^2(\log n)^3+p_n\K\log n$, which is $o(n\varepsilon_n^2)$ by (P2) and the assumption.

For the KL condition, $\lvert w_{\bx,\s}-w^{(0)}_{\bx,\s}\rvert\le\sqrt{p_n}\,\|\beta-\beta^{(0)}\|_2/2$, so the first term of the bound on $\lvert v_\theta-v_0\rvert$ becomes $2\N\sqrt{p_n}\|\xi^{(1)}\|_\infty\|\beta-\beta^{(0)}\|_2$, and $E_\beta$ is defined with the radius $b_n=\varepsilon_n/\{10\N\sqrt{p_n}(\|g^{(0)}_1\|_\infty+1)\}$. The covariance $\Sigma$ in \eqref{eqn: covariance of beta} is positive definite, so the RKHS of $\cN(0,\Sigma)$ on $(\bbR^{p_n\K},\|\cdot\|_2)$ is $\bbR^{p_n\K}$ with the norm $(h^\T\Sigma^{-1}h)^{1/2}$. Lemma 5.2 of \cite{van2008reproducing} with the symmetric set $\{b:\|b\|_2\le b_n\}$ gives
\begin{equation*}
    \Pi(E_\beta)\ge\exp\big(-\beta^{(0)\T}\Sigma^{-1}\beta^{(0)}/2\big)\,\Pi(\|\beta\|_2\le b_n).
\end{equation*}
The eigenvalues of $\Sigma=\sigma^2_\beta\Gamma_\beta\otimes\mathrm{I}_{p_n}$ are those of $\sigma^2_\beta\Gamma_\beta$, which do not depend on $n$ since $\K$ is fixed. Let $\lambda_{\min}$ and $\lambda_{\max}$ be the smallest and the largest. Then $\beta^{(0)\T}\Sigma^{-1}\beta^{(0)}\le\|\beta^{(0)}\|_2^2/\lambda_{\min}\le\K/\lambda_{\min}$. Write $\beta=\Sigma^{1/2}Z$ with $Z\sim\cN(0,\mathrm{I}_d)$ and $d=p_n\K$. Since $\|\beta\|_2\le\lambda_{\max}^{1/2}\|Z\|_2\le(\lambda_{\max}d)^{1/2}\|Z\|_\infty$, with $t_n=b_n/(\lambda_{\max}d)^{1/2}$,
\begin{equation*}
    \Pi(\|\beta\|_2\le b_n)\ge P(\|Z\|_\infty\le t_n)=\{2\Phi(t_n)-1\}^{d}\ge\{2\phi(1)\,t_n\}^{d},
\end{equation*}
where $\phi$ is the standard normal density, and the last step holds because $\phi\ge\phi(1)$ on $[-t_n,t_n]$ when $t_n\le1$, which is true for large $n$. Since $1/t_n$ grows polynomially in $n$, $-\log\Pi(E_\beta)\le C_\beta\,p_n\K\log n$ for a constant $C_\beta$ and all large $n$. This replaces the bound $\Pi(E_\beta)\ge\exp(-C_\beta\log n)$ of Section \ref{ssec: KL condition}, and since $p_n\K\log n=o(n\varepsilon_n^2)$, the final bound there still gives \eqref{eqn: KL condition} with $C'=10D_1c_\kappa^2+1$.
\end{proof}

\section{Details of the numerical experiments}\label{sec: numerical details}

\subsection{Benchmark methods}\label{ssec: benchmarks}

The full grid LGP models $f_\theta(y\m\bx,\s)\propto\exp\{g(y,\bx,\s)\}$ with a single function $g$. Each covariate is rescaled to $u_j=(x_j-a_j)/(b_j-a_j)\in[0,1]$, where $[a_j,b_j]$ is its range over all individuals, and $g$ is represented by linear interpolation on a tensor grid,
\begin{equation}\label{eqn: full grid}
    g(y,\bx,\s) = \sum_{i=0}^{\N}\sum_{r=0}^{\N_s}\sum_{\mathbf{t}\in\{0,\ldots,m-1\}^p}\zeta_{ir\mathbf{t}}\,\phi^{\N}_i(y)\,\phi^{\N_s}_r(\s)\prod_{j=1}^p\phi^{m-1}_{t_j}(u_j),
\end{equation}
where $\phi^{M}_i$ is the hat basis \eqref{eqn: hat basis} with $\N_l$ replaced by $M$. We assign $\zeta\sim\cN(0,\sigma^2\,\Gamma_y\otimes\Gamma_s\otimes\Gamma_x^{\otimes p})$, where each $\Gamma$ is the correlation matrix of a one dimensional Mat\'{e}rn kernel with $\nu=5/2$ and $\ell=0.292$ at the knots of its axis, and $\sigma^2\sim\mathrm{IG}(a_\sigma,b_\sigma)$. For fixed $(\bx,\s)$, the exponent is piecewise linear in $y$, so the normalizing constant is computed exactly as in Supplement \ref{ssec: normalizing constant}. The sampler alternates an ESS update of $\zeta$ and a conjugate update of $\sigma^2$, as for \ourmodel{}.

We use $\N=50$ and $\N_s=10$, with $m=3$ in Section~\ref{sec: simulation} and $m=2$ in Section~\ref{sec: real data}. Hence $\zeta$ has $(\N+1)(\N_s+1)m^p$ entries, $51\cdot11\cdot3^5=136{,}323$ for $p=5$ and $51\cdot11\cdot2^7=71{,}808$ for $p=7$, against $2(\N+1)^2=5{,}202$ for \ourmodel{}.

The spLGP of Section~\ref{sec: simulation} sets $\mathbf{z}_{i\s}=(\bx_i^\T,\s-1/2)^\T/c$, where $c$ is the largest norm of $(\bx_i^\T,\s-1/2)^\T$ over all individuals and locations, and models $f_\theta(y\m\bx_i,\s)\propto\exp\{g(y,w_{i\s})\}$ with $w_{i\s}=(\mathbf{z}_{i\s}^\T\beta+1)/2$. The function $g$ has the representation \eqref{eqn: basis expansion g} with $\N=50$, the Mat\'{e}rn kernel and the variance prior of \ourmodel{}, and $\beta\sim\cN(0,\mathrm{I}_{p+1})$ enters the likelihood through $\beta/\|\beta\|_2$. It is the sampler of Section~\ref{ssec: sampler} applied to the $n\K$ pairs $(\mathbf{z}_{i\s},y_{i\s})$ without $g_2$, under the same budget as \ourmodel{}.

LSBP \citep{rigon2021tractable} models $z=\mathrm{logit}(y)$, standardized by its mean and standard deviation over the training data, by the mixture $\sum_{h=1}^{H}\pi_h(\bx,\s)\,\cN\{z;\,\mathbf{a}^\T\gamma_h,\,\tau_h^{-1}\}$ with $\mathbf{a}=(1,\bx^\T,\s-1/2)^\T$ and $H=20$. The weights are $\pi_h=\nu_h\prod_{l<h}(1-\nu_l)$ with $\mathrm{logit}\,\nu_h=\mathbf{b}^\T\alpha_h$, where $\mathbf{b}$ holds $1$, $\bx$ and a natural cubic spline basis of $\s$ with three degrees of freedom, and $\nu_H=1$. Following the tutorial of the package, $\gamma_h\sim\cN(0,\mathrm{I})$, $\alpha_h\sim\cN(0,\mathrm{I})$ and $\tau_h\sim\mathrm{Gamma}(1,1)$. We run its Gibbs sampler for $2000$ burn-in and $2000$ posterior iterations, and transform each density back to the scale of $y$. Since a mixture has no piecewise linear exponent, its densities are evaluated on a grid of $2001$ points from every fourth posterior draw and interpolated linearly to the grid of Section~\ref{sec: simulation}.

\subsection{Coverage of the credible band}\label{ssec: coverage}

For each test individual $\bx$ and location $\s$, the coverage of the band is the probability, under $Y\sim f_0(\cdot\m\bx,\s)$, that the band contains $f_0(Y\m\bx,\s)$, and we average it over the test individuals and the locations. Averaged over all settings of Section~\ref{sec: simulation}, the coverage is $0.92$ and $0.87$ for \ourmodel{} in Case 1 and Case 2, against $0.84$ and $0.82$ for LSBP, $0.58$ and $0.55$ for the spLGP and $0.22$ and $0.30$ for the full grid LGP.

\subsection{Dependence across locations}\label{ssec: dependence}

Model \eqref{eqn: conditional density model} treats the responses of one individual as independent given $\bx_i$. To see how dependence along the tract affects the fit, we generate Case 1 \eqref{eqn: unimodal} with $n=250$ and $\K=20$ so that the responses of one individual are correlated across locations. For each individual $i$, we draw $(z_{i1},\ldots,z_{i\K})$ from a normal distribution with mean zero, unit variances and $\mathrm{corr}(z_{ik},z_{ik'})=\rho^{\lvert k-k'\rvert}$, and set $y_{i\s_k}=F_0^{-1}\{\Phi(z_{ik})\m\bx_i,\s_k\}$, where $\Phi$ is the standard normal CDF and $F_0(\cdot\m\bx,\s)$ is the CDF of \eqref{eqn: unimodal}. Each $y_{i\s_k}$ then still has the density \eqref{eqn: unimodal}, and $\rho=0$ gives the independent responses of Section~\ref{sec: simulation}. For each $\rho\in\{0,0.5,0.9\}$ we fit \ourmodel{} to $50$ datasets as in Section~\ref{sec: simulation}. Table \ref{tab:dependence} reports the error, the share of test responses in the central $80\%$ and $95\%$ intervals of $\widehat f$, and the coverage of the $95\%$ credible band of Supplement~\ref{ssec: coverage}.

\begin{table}[htbp!]
\centering
\caption{Case 1 with $n=250$ and $\K=20$ under dependence across locations, averaged over $50$ replications. Prediction coverage is the share of test responses in the central $80\%$ and $95\%$ intervals of $\widehat f$, and band coverage is that of the $95\%$ credible band.}
\label{tab:dependence}
\begin{tabular}{ccccc}
\hline
$\rho$ & Sq.\ Hellinger dist. & $80\%$ prediction & $95\%$ prediction & $95\%$ band \\
\hline
0 & 0.0080 & 0.803 & 0.954 & 0.89 \\
0.5 & 0.0104 & 0.792 & 0.953 & 0.82 \\
0.9 & 0.0165 & 0.787 & 0.950 & 0.69 \\
\hline
\end{tabular}
\end{table}

The predictive intervals keep their nominal coverage for every $\rho$, since the dependence does not change the conditional density at each location, which is the target of model \eqref{eqn: conditional density model}. The error grows with $\rho$ because correlated responses carry less information than independent ones, and the credible band becomes narrower than the actual uncertainty, since the posterior treats the $\K$ responses of an individual as independent.

\subsection{Posterior contrast in the ADNI data}\label{ssec: contrast}

Figure \ref{fig:contrast} shows the difference in the median of $f_\theta(\cdot\m\bx,\s)$ between an AD and a CN individual (AD minus CN) of Section~\ref{sec: real data}, whose other covariates are at their overall averages. The posterior median from all individuals is negative at all $83$ locations, from $-0.086$ to $-0.007$, in line with the known decline of FA with Alzheimer's disease \citep{sexton2011meta}. The fits to the five training sets follow the fit to all individuals, and only one of them changes sign, at three locations near location $70$.

\begin{figure}[htbp!]
\centering
\includegraphics[width=\textwidth]{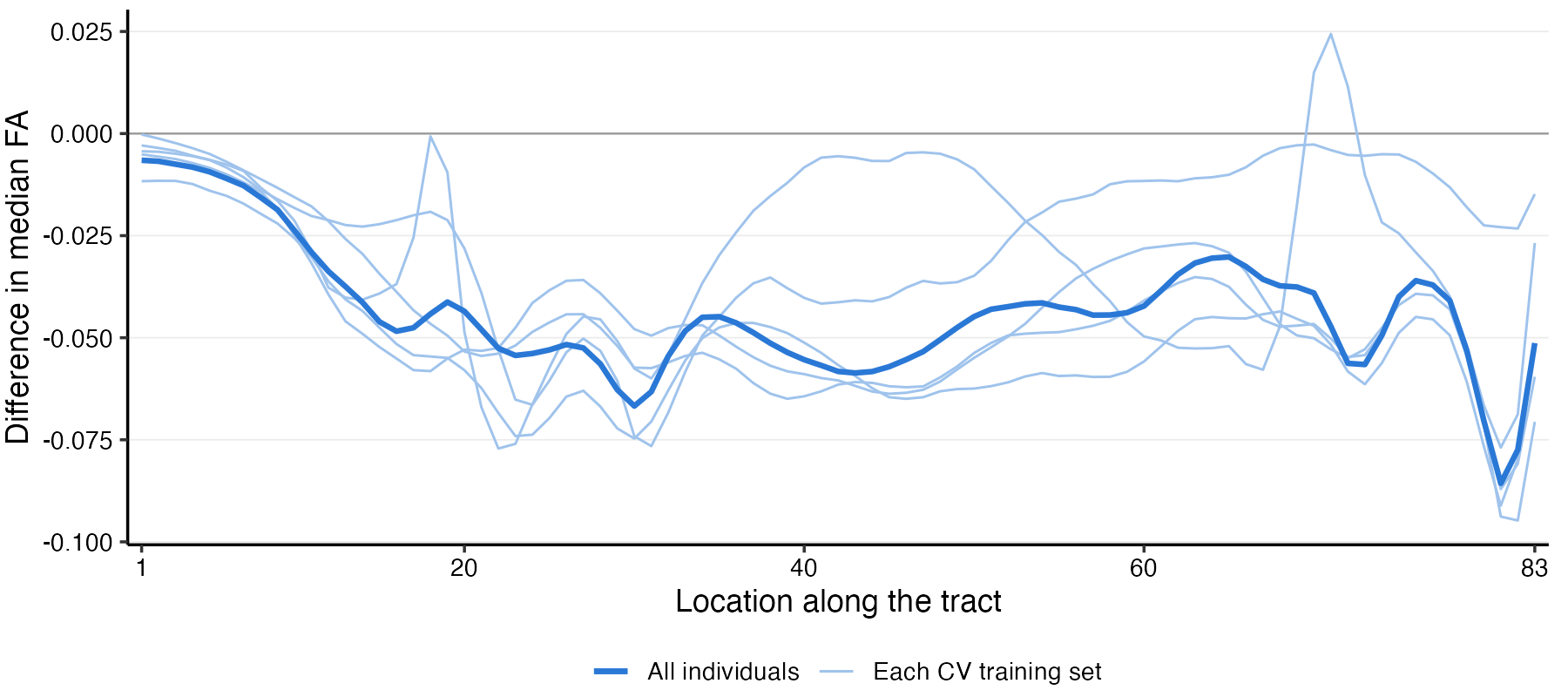}
\caption{Difference in the median of $f_\theta(\cdot\m\bx,\s)$ between an AD and a CN individual (AD minus CN) along the tract, with MMSE and the other covariates at their overall averages. The thick line is the posterior median from all $213$ individuals, and the thin lines are those from five training sets that each leave out one fifth of the individuals.}
\label{fig:contrast}
\end{figure}

\end{document}